\newtheorem{theorem}{Theorem}[section]
\newtheorem{lemma}{Lemma}[section]
\begin{document}
\title{Global Convergence of Policy Gradient for Entropy Regularized Linear-Quadratic Control with Multiplicative Noise }
\date{}
\author{Gabriel Diaz \thanks{ Department of Mathematics,
University of California, Berkeley,
970 Evans Hall
Berkeley, CA 94720. \\Email: gdiaz2030@berkeley.edu} 
\and 
Lucky Li \thanks{College of Computing, Data Science, and Society, University of California, Berkeley, CA 94720. \\ Email: luckyql@berkeley.edu, luckyql17@gmail.com} \and Wenhao Zhang  \thanks  {Department of Applied Mathematics, The Hong Kong Polytechnic University, Hong Kong, China. \\Email:  wen-hao.zhang@connect.polyu.hk}}

\maketitle

\begin{abstract}
    Reinforcement Learning (RL) has emerged as a powerful framework for sequential decision-making in dynamic environments, particularly when system parameters are unknown. This paper investigates RL-based control for entropy-regularized linear-quadratic (LQ) control problems with  multiplicative noise over an infinite time horizon. First, we adapt the regularized policy gradient (RPG) algorithm to stochastic optimal control settings, proving that despite the non-convexity of the problem, RPG converges globally under conditions of gradient domination and almost-smoothness. Second, based on zero-order optimization approach, we introduce a novel model free RL algorithm: Sample-based regularized policy gradient (SB-RPG). SB-RPG operates without knowledge of system parameters yet still retains strong theoretical guarantees of global convergence. Our model leverages entropy regularization to address the exploration versus exploitation trade-off inherent in RL. Numerical simulations validate the theoretical results and demonstrate the efficiency of SB-RPG in unknown-parameters environments.
\end{abstract}

\section{Introduction}
Reinforcement learning (RL) is a subfield of machine learning that focuses on training agents to make sequential decisions by interacting with dynamic environments. Unlike supervised learning, which relies on labeled datasets, RL agents learn through trial and error, guided by a reward signal that quantifies the desirability of their actions \cite{Sutton2018RL}. The primary goal is to discover an optimal policy—a mapping from states to actions which maximizes the cumulative long-term rewards. In recent years, RL has achieved remarkable success, attaining human-level performance in domains such as game playing \cite{Silver2017}, robotics \cite{Kobor2013}, and autonomous driving \cite{Kiran2022}.

Optimal control theory seeks to design control policies that maximize a predefined performance criterion for dynamic systems. RL and optimal control are naturally aligned in their fundamental principles as both approaches incorporate dynamical decision-making considerations. However, traditional optimal control methods require complete knowledge of all environmental parameters , which is often infeasible in real-world applications.  RL-based control methods have thus gained prominence for scenarios with unknown system parameters. LQ
control problem, as one of the most fundamental problems in the control theory, has attracted considerable attention and has been extensively studied in the RL based control literature. For example, in continues time settings, Wang and Zhou    \cite{Wang2020} adopt the RL method to solve mean–variance portfolio problem,  Wang et, al. \cite{Wang2020b} carry out a complete theoretical analysis of RL based LQ control problem, 
Li et, al. \cite{Na2022tac, Na2023} employ a policy iteration RL approach to investigate LQ and mean-field LQ control problems over an infinite horizon.

Policy gradient methods \cite{Sutton1999} form a prominent class of RL algorithms that directly parameterize the policy and optimize it via gradient descent. Despite their simplicity and broad applicability, policy gradient methods face a non-convex optimization landscape even for basic LQ control problems \cite{Fazel2018Global}. Understanding their convergence properties is thus a central research topic. Fazel et al. \cite{Fazel2018Global} established global convergence for policy gradient in deterministic, infinite-horizon LQ settings. Subsequent works extended these results to LQ problems with multiplicative noise\cite{Gravell2019robust, Gravell2020learning}, additive and multiplicative noise \cite{Lai2023}, and finite-horizon settings \cite{Hambly2020NoisyLQR}. Typically, these studies first analyze global convergence under known system parameters (model-based RL), then extend to unknown parameter settings (model-free RL). Hu et al.\cite{Hu2023} provide a comprehensive survey of recent theoretical advances in this area.

A key challenge in RL is the exploration-exploitation trade-off: agents must balance exploiting known information to maximize immediate reward with exploring new actions to discover potentially better strategies. In most cases, exploration is highly resource-intensive, therefore, numerous solutions are proposed to address exploration-exploitation trade-off. The trade off between exploration and exploitation has been thoroughly studied for the LQR. For example, in \cite{Abeille2017ThompsonLQ} and is improved upon in \cite{Abeille2018ImprovedTS}.  In the discrete action space setting, $\epsilon$-greedy policy \cite{Watkins1992} and Botzman (Softmax) policy \cite{Sutton1990} are two effective and popular ways to balance exploration-exploitation and  numerous developments have been made based on them.  In addition to these,  recent research has introduced entropy-regularized RL formulations. This approach explicitly integrates exploration into the optimization objective by including entropy as a regularization term, thereby imposing a trade-off weight on the entropy of the exploration strategy .  
Ahmed et al. \cite{Ahmed2019} demonstrate that, even when the exact gradient is available, policy optimization remains challenging because of the complex geometry of the objective function. Moreover, employing policies with higher entropy can smooth the optimization landscape, facilitating connections between local optima. 
  Neu et, al. \cite{Neu2017} propose a general framework for entropy-regularized average-reward RL in Markov decision processes. It is noteworthy that while entropy-regularization has been quite useful it comes with the caveat that there are many other ways to promote policy exploration nor is entropy-regularization always effective as seen in \cite{Leffler2007Efficient}. 
  
 In the context of entropy-regularized RL for LQ problems, most research has focused on actor-critic methods \cite{Wang2020, Wang2020b, Na2023}, with relatively few studies addressing policy gradient approaches. Notably, Michael et al. \cite{Michael2024} study the global linear convergence of policy gradient methods for finite-horizon continuous-time entropy-regularized LQ control problems. In  discrete time settings, Guo et, al. \cite{Guo2023fast}  proposed and analyzed two new policy gradient based RL method for entropy-regularized LQ problems: regularized policy gradient (RPG) and iterative policy optimization (IPO) and proved their fast convergence given exact model parameters(i.e., model based). However, they only considered additive noise and conducted their analysis solely in the model-based setting, which limits the applicability of their methods. Multiplicative noise models may be produce more robust policies, they are still more complex than the typical additive model which may result in slower convergence. To the best of our knowledge, addressing the case of multiplicative noise in both model-based and model-free settings  still remains largely unexplored.

This paper makes two primary  contributions: First, 
we extend the RPG \cite{Guo2023fast} method to stochastic LQ scenarios with multiplicative noise, demonstrating that RPG achieves global convergence despite the non-convexity of the problem, thanks to gradient domination and almost-smoothness properties. This extension enhances robustness and broadens the applicability of RPG. Second, and more importantly, employing the zero order optimization technique, we propose sample based regularized policy gradient (SB-RPG) method and rigorously prove its global convergence properties. SB-RPG does not require knowledge of the specific system parameters values, enabling effective control in parameter-unknown environments where RPG cannot be directly applied. Our theoretical results are supported by numerical simulations.

 The rest of sections are organized as follows. In section \ref{section: formulation} we formulate the optimal control problem and transform it into a optimization problem. It is natural to consider first order method to cope with optimization problem so we give the explicit form of gradient with respect to optimization variables. In section \ref{section: model based}, we consider regularized policy gradient method proposed in \cite{Guo2023fast} and provide the guarantee of global convergence. In section \ref{section: model free}, we consider the case where all the parameters are unknown, we proposed sample based regularized policy gradient (SB-RPG) to cope with this situation. In section \ref{section: numerical}, we provide numerical experiments showing the effectiveness of our algorithm. For the sake of brevity, only the proofs of the main theorems are presented in the main text, while all detailed proofs of the lemmas are relegated to the Appendix.

\section{Formulation}\label{section: formulation}
In this section, we clearly formulate the stochastic optimal control problem over an infinite time horizon with a constant discounted rate $\gamma$ and derived the optimal feedback control policy in $\eqref{eq: pi*}$. Inspired by the structure of feedback control policy, we linearly parameterize our policy. By doing so, we transformed the optimal control problem into an finite dimensional (i.e., $n$-dimensional) optimization problem. Since optimization naturally involves consideration of first-order derivatives, we provide the explicit form of the first-order derivative. We adopt standard mathematical notation throughout this paper. For any matrix \( Z \in \mathbb{R}^{n \times m} \), we denote \( \|Z\| \) as the spectral norm of \( Z \), \( \|Z\|_F \) as the Frobenius norm of \( Z \), \( \sigma_{\min}(Z) \) and \( \sigma_{\max}(Z) \) as the minimum and maximum singular values of \( Z \), respectively, and \( Z^\top  \) as the transpose of \( Z \).

Consider the following discrete-time exploratory stochastic LQ control system in the infinite time horizon:
\begin{align}\label{eq:state} 
x_{t+1} = Ax_t + Bu_t + Cx_tw_t^x + u_t^\top D w_t^u,
\end{align}
where $A, C, x_t, w_t^x \in \mathbb{R}, ~ B\in \mathbb{R}^{1\times n}, u, w_t^u\in \mathbb{R}^{n\times 1}, ~ D \in \mathbb{R}^{n\times n}, $ and $w_t^x,w_t^u$ are white noise, which are distributed as follows 
\begin{align*}
&\mathbb{E} [w^x_t] = 0,  ~~\mathbb{E}[(w_t^x)^2] = 1,  ~~ \mathbb{E}[w^u_tx_t^x] = \mathbf{0}_{n\times 1}, \\
&\mathbb{E}[w^u_t] = \mathbf{0}_{n\times 1}, ~~ \mathbb{E}[(w_t^u)^\top w_t^u] = I_{n\times n}.
\end{align*}
Unlike traditional optimal control problems that focus solely on deterministic control, we incorporate entropy regularization and consider stationary randomized Markovian policies. This approach enables us to effectively address the exploration problem in Reinforcement Learning.
Specifically, we define the set of admissible policies as $\Pi := \{\pi : \mathcal{X} \to \mathcal{P}(\mathcal{U})\}$, where \(\mathcal{X}\) denotes the state space, $\mathcal{U} $the action space, and \(\mathcal{P}(\mathcal{U})\) the set of probability measures over $\mathcal{U}$. Each admissible policy \(\pi \in \Pi\) assigns to every state \(x \in \mathcal{X}\) a probability distribution over actions in \(\mathcal{U}\).

For any given policy \(\pi \in \Pi\), the associated Shannon entropy is defined as
\[
\mathcal H(\pi(\cdot|x)) := - \int_{\mathcal{U}} \pi(u|x) \log \pi(u|x) \, du,
\]
which measures the uncertainty or information gain from exploring the environment. By incorporating this entropy term as a regularization component in the objective function, we encourage the policy to gather information about the unknown environment and to promote exploration. The objective functional then takes the following form:
\begin{equation}\label{eqn: obj initial}
     \min_{\pi \in \Pi} \, \mathbb{E}_{x \sim \mathcal{D}}[J(x)], 
\end{equation}
where $ \Pi$ is the admissible policy set and 
\begin{equation}\label{eqn: value func}
 J_\pi(x) := \mathbb{E}_\pi\left.\left[\sum_{t=0}^\infty \gamma^t \left( Q x_t^2 + u_t^{\top} R u_t - \tau \mathcal H(\pi(\cdot|x_t)) \right) \right| x_0 = x\right].
\end{equation}
Now that we have defined the exploratory stochastic LQ problem, we present the following theorem, which provides the optimal policy, the optimal objective value, and the corresponding Algebraic Riccati Equation (ARE).

\begin{theorem}[Optimal value function and optimal control]\label{thm: optimal val func and policy}
    The optimal value function $J^*:\mathcal{X} \rightarrow \mathbb{R}$ in can be expressed as $J^*(x) = Px^2 + q$ with $P$ satisfying the following Algebraic Riccati Equation (ARE)
    \begin{equation}\label{eqn:P}
         \begin{aligned}
         P &= Q+ \gamma P(A^2 + C^2) - (\gamma AP)^2 B(R + \gamma P(B^\top B+D^\top D))^{-1}B^\top ,\\
        q &=\frac{ \text{Tr}(\Sigma^* R)+\gamma P\text{Tr}(\Sigma^*(B^\top  B+D^\top  D)) - \frac{\tau}{2}(k+\log ((2 \pi)^k \operatorname{det} \Sigma^*)}{1-\gamma}\notag,    
    \end{aligned}
    \end{equation}
where
 \begin{equation}\label{eqn: K *}
 K^* = \gamma (R+\gamma P (B^\top  B+D^\top D))^{-1}A P B^\top  , \quad     \Sigma^* = \frac{\tau}{2} (  R+\gamma P (B^\top  B+D^\top D))^{-1},
 \end{equation}
    for any $x \in \mathcal{X}$, the corresponding optimal policy for system \eqref{eq:state} and objective functional \eqref{eqn: obj initial} is : 
    \begin{equation}\label{eq: pi*}
           \pi^* = \mathcal{N}(-K^*x,\Sigma^*). 
    \end{equation}
\end{theorem}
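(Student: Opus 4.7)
The natural route is dynamic programming. I would start by writing the Bellman equation for the entropy regularized problem,
\begin{equation*}
J^*(x)=\min_{\pi(\cdot|x)\in\mathcal{P}(\mathcal{U})}\int_{\mathcal{U}}\pi(u|x)\Bigl[Qx^{2}+u^{\top}Ru+\tau\log\pi(u|x)+\gamma\,\mathbb{E}\bigl[J^{*}(x')\mid x,u\bigr]\Bigr]\,du,
\end{equation*}
with $x'=(A+w^{x}C)x+(B+w^{u}D)u$, and then try the quadratic ansatz $J^{*}(x)=Px^{2}+q$. The first routine step is computing $\mathbb{E}[(x')^{2}\mid x,u]$: independence and the white noise moments kill the cross terms and give $(A^{2}+C^{2})x^{2}+2A(Bu)x+u^{\top}(B^{\top}B+D^{\top}D)u$. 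Substituting this into the right-hand side reduces the Bellman equation to a pointwise variational problem in $\pi(\cdot|x)$ whose integrand is affine in $\pi$ plus the entropy term, which is a standard free-energy minimization.

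The second step is to solve that pointwise problem. Writing $M:=R+\gamma P(B^{\top}B+D^{\top}D)$, the part of the integrand that depends on $u$ is $u^{\top}Mu+2\gamma PA\,xBu$, which is a quadratic in $u$. The entropy-regularized minimizer over probability densities is the Gibbs distribution proportional to $\exp\!\bigl(-[u^{\top}Mu+2\gamma PAxBu]/\tau\bigr)$; completing the square identifies this as the Gaussian $\mathcal{N}(\mu^{*},\Sigma^{*})$ with
\begin{equation*}
\mu^{*}=-\gamma PA\,M^{-1}B^{\top}x=-K^{*}x,\qquad \Sigma^{*}=\tfrac{\tau}{2}M^{-1},
\end{equation*}
which is exactly \eqref{eqn: K *} and \eqref{eq: pi*}. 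To make this rigorous I would either invoke the Donsker–Varadhan variational formula or verify optimality directly by a convexity argument (the functional $\pi\mapsto\int\pi f+\tau\int\pi\log\pi$ is strictly convex in $\pi$ on the simplex of densities with finite second moment).

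The third step is matching coefficients. Evaluating the Bellman right-hand side at $\pi^{*}$ gives three pieces: the $x^{2}$-coefficient, a cross term $2\gamma PA x B\mu^{*}$, and the constant terms. The cross term combines with $(\mu^{*})^{\top}M\mu^{*}$ to produce $-\gamma^{2}P^{2}A^{2}x^{2}BM^{-1}B^{\top}$; equating $x^{2}$-coefficients then yields the ARE
\begin{equation*}
P=Q+\gamma P(A^{2}+C^{2})-(\gamma AP)^{2}B\bigl(R+\gamma P(B^{\top}B+D^{\top}D)\bigr)^{-1}B^{\top}.
\end{equation*}
Using $\mathbb{E}_{u\sim\pi^{*}}[u^{\top}Mu]=(\mu^{*})^{\top}M\mu^{*}+\mathrm{Tr}(\Sigma^{*}M)$, the Gaussian entropy formula $\mathcal H(\pi^{*})=\tfrac12(n+\log((2\pi)^{n}\det\Sigma^{*}))$, and the identity $\mathrm{Tr}(\Sigma^{*}M)=\mathrm{Tr}(\Sigma^{*}R)+\gamma P\,\mathrm{Tr}(\Sigma^{*}(B^{\top}B+D^{\top}D))$, matching constant terms gives $(1-\gamma)q$ equal to the displayed numerator for $q$.

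The fourth and final step is the verification theorem: the quadratic $V(x):=Px^{2}+q$ with $P$ solving the ARE satisfies the Bellman equation and is bounded below, so a standard argument (take expectations, telescope, use $\gamma<1$ and the stabilizability that the ARE solution encodes to push the residual to zero) shows $V=J^{*}$ and that the Gaussian feedback $\pi^{*}=\mathcal{N}(-K^{*}x,\Sigma^{*})$ attains it. The main obstacle I expect is not the algebra but two structural points: first, justifying the pointwise minimization over densities (existence of $\pi^{*}$ in $\Pi$, integrability of $\log\pi^{*}$ against the state distribution), and second, ensuring that the ARE admits a positive solution $P$ and that $M\succ 0$ so that $\Sigma^{*}$ is a well-defined covariance; both of these need a mean-square stability / solvability assumption on $(A,B,C,D,Q,R,\gamma)$ that is implicit in the theorem statement and which I would either cite or add as a standing hypothesis.
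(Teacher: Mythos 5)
Your proposal is correct and follows essentially the same route as the paper: dynamic programming with the quadratic ansatz $J^*(x)=Px^2+q$, reduction of the inner problem to an entropy-regularized quadratic minimization over densities whose Gaussian solution (the paper's Lemma \ref{lemma: solve regularzied QP}, which you re-derive via the Gibbs/free-energy and completing-the-square argument) yields $K^*$ and $\Sigma^*$, followed by matching the $x^2$ and constant coefficients to get the ARE and $q$. Your final verification/stabilizability step is additional rigor beyond what the paper carries out, but it does not alter the core argument.
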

The proof of Theorem \ref{thm: optimal val func and policy} relies on the following lemma, which establishes the optimal solution for the one-step reward function in the presence of entropy regularization. This lemma provide the necessary foundation for deriving both the optimal policy and the corresponding value function under entropy-regularized rewards. The proof of Lemma \ref{lemma: solve regularzied QP} is provided in Section 8.1 of \cite{Guo2023fast}.
 
 \begin{lemma}\label{lemma: solve regularzied QP} For any given symmetric positive definite matrix $M \in \mathbb{R}^{k \times k}$ and vector $b \in \mathbb{R}^{k}$, the optimal solution $p^* \in \mathcal{P}( \mathcal{U})$ to the following optimization problem is a multivariate Gaussian distribution with  covariance $\frac{\tau}{2}M^{-1}$ and mean $-\frac{1}{2}M^{-1}b$: 
\begin{eqnarray*}
         &\min_{p \in \mathcal{P}(\mathcal{U})} &\mathbb{E}_{u \sim p(\cdot)}
         \left[
            u^{T} M u + b^{T} u + \tau \log p (u)
        \right],  \\
        &\text{\rm subject to}  
 &\int_{\mathcal{U}} p(u)du=1, \\
                  &                &p(u) \geq 0, \quad \forall u \in \mathcal{U}.
\end{eqnarray*}
 
\end{lemma}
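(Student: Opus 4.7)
The plan is to solve this constrained entropy-regularized optimization via calculus of variations. First, I would form the Lagrangian for the equality constraint $\int p(u)\,du = 1$:
\[
L(p,\lambda) = \int_{\mathcal{U}} p(u)\bigl[u^\top M u + b^\top u + \tau \log p(u)\bigr]\,du - \lambda\left(\int_{\mathcal{U}} p(u)\,du - 1\right),
\]
take the pointwise (G\^ateaux) derivative with respect to $p(u)$, and set it to zero. This yields the first-order stationarity condition $u^\top M u + b^\top u + \tau \log p(u) + \tau - \lambda = 0$, so that $p^*(u) \propto \exp\bigl(-\tfrac{1}{\tau}(u^\top M u + b^\top u)\bigr)$, with the proportionality constant absorbing $\lambda$ and being finite because $M \succ 0$ guarantees integrability.

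Second, I would complete the square: $u^\top M u + b^\top u = (u + \tfrac12 M^{-1} b)^\top M (u + \tfrac12 M^{-1} b) - \tfrac14 b^\top M^{-1} b$. Substituting back, the density takes the form
\[
p^*(u) \propto \exp\!\left(-\tfrac{1}{\tau}\bigl(u + \tfrac12 M^{-1}b\bigr)^\top M \bigl(u + \tfrac12 M^{-1}b\bigr)\right).
\]
Matching this with the canonical Gaussian form $\mathcal{N}(\mu,\Sigma)$, whose log-density has quadratic part $-\tfrac12(u-\mu)^\top \Sigma^{-1}(u-\mu)$, I read off the mean $\mu = -\tfrac12 M^{-1}b$ and covariance $\Sigma = \tfrac{\tau}{2} M^{-1}$, exactly as claimed. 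The non-negativity constraint $p(u)\geq 0$ is automatic since $p^*$ is a strictly positive exponential, so the inequality constraint is inactive.

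Third, I need to certify that the critical point is the global minimum rather than a saddle. The cleanest way is to note that the objective is strictly convex in $p$: the entropy functional $p \mapsto \tau \int p \log p \,du$ is strictly convex (by strict convexity of $x \log x$), while $\int p(u)(u^\top M u + b^\top u)\,du$ is linear in $p$. Strict convexity plus the existence of a stationary point yields uniqueness and global optimality.

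The main obstacle is the functional-analytic rigor: justifying differentiation under the integral and guaranteeing the minimizer lies in the interior of $\mathcal{P}(\mathcal{U})$ so the Lagrangian argument applies. A cleaner alternative that avoids these technicalities is the KL-divergence trick: let $Z := \int \exp(-(u^\top M u + b^\top u)/\tau)\,du$ and $q(u) := Z^{-1}\exp(-(u^\top M u + b^\top u)/\tau)$; then a direct algebraic manipulation shows that for every admissible $p$,
\[
\mathbb{E}_{u \sim p}\!\left[u^\top M u + b^\top u + \tau \log p(u)\right] = \tau \,\mathrm{KL}(p \,\|\, q) - \tau \log Z.
\]
By Gibbs' inequality $\mathrm{KL}(p\|q) \geq 0$ with equality iff $p = q$, so the minimizer is $p^* = q$, which is precisely the Gaussian identified above. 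This route delivers both existence and uniqueness in one stroke.
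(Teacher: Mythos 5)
Your proposal is correct. Note that the paper does not actually reproduce a proof of this lemma; it defers entirely to Section 8.1 of the cited reference (Guo, Li and Xu), whose argument is the variational/Lagrangian one that your first route mirrors: stationarity of the entropy-regularized Lagrangian gives $p^*(u)\propto \exp\bigl(-\tfrac{1}{\tau}(u^\top M u + b^\top u)\bigr)$, and completing the square identifies the Gaussian with mean $-\tfrac12 M^{-1}b$ and covariance $\tfrac{\tau}{2}M^{-1}$. Your algebra checks out at every step (the square-completion identity, the matching $\tfrac{1}{\tau}M=\tfrac12\Sigma^{-1}$, and the strict convexity of $p\mapsto\int p\log p$ against the linear quadratic-cost term). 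What your write-up adds over the cited argument is the KL reformulation: the identity $\mathbb{E}_p[u^\top Mu+b^\top u+\tau\log p(u)]=\tau\,\mathrm{KL}(p\,\|\,q)-\tau\log Z$ with $q$ the candidate Gaussian delivers existence, uniqueness, and global optimality in one stroke, sidestepping the interior-point and differentiation-under-the-integral issues you correctly flag in the Lagrangian route; it also implicitly disposes of the ignored constraint $p\geq 0$. The only residual technicality, common to both routes, is that one should require $\tau>0$ and restrict attention to admissible $p$ for which the objective is well defined (e.g.\ finite second moment, so the entropy term is bounded below and no $\infty-\infty$ cancellation occurs); for all other $p$ the objective is $+\infty$ and they cannot compete with the Gaussian candidate, so the conclusion stands.
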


\begin{proof} (of Theorem \ref{thm: optimal val func and policy}).
By definition of $J^*$ in \eqref{eqn: obj initial},
\begin{equation}\label{eqn: V^*}
    J^*(x) = \min_{\pi\in \Pi} \mathbb{E}_\pi
        \Bigl\{Q x^2 + u^{T} R u + \tau \log(\pi(u|x))
        + \gamma  J^* ((A +w_t^xC)x_t + (B+ w_t^uD)u_t) \Bigr\},
\end{equation}
where the expectation is taken with respect to $u\sim \pi(\cdot | x)$ and  the noise terms $w_t^u$ and $w_t^x$, with mean $0$ and covariance $I_{n\times n}$.
   Stipulating 
   \begin{equation}\label{eqn: V* stipulate}
       J^*(x) = P x^2 + q
   \end{equation}
   for a positive $P,q\in \mathbb{R}$ and plugging into \eqref{eqn: V^*},
   we can obtain the optimal value function with dynamic programming principle:
    \begin{align}
    J^*(x)&=  Q x^2 \notag  + \min_{\pi} \mathbb{E}_\pi 
         \Bigl\{ u^{T} R u +\tau \log(\pi(u|x)) \\& \quad +
          \gamma  \left[ P ((A +w^xC)x + (B+ w^uD)u)^2  + q \right] \Bigr\}\notag \\
    &=  (Q+\gamma P(A^2+C^2))x^2 + \gamma q \notag \\
     &\quad + \min_{\pi} \mathbb{E}_\pi \Bigl\{  u^\top  (R+\gamma P (B^\top  B+D^\top  D)) u+ \tau \log(\pi(u|x)) +  2\gamma A P x B u  \Bigr\}\notag.\label{eqn:value_func}
    \end{align}
Now apply Lemma \ref{lemma: solve regularzied QP} to \eqref{eqn: value func} with $M = R+\gamma P (B^\top  B+D^\top  D)$ and $b = 2 \gamma A P x B^\top $, we can get the optimal policy at state $x$:
\begin{eqnarray} \label{eqn: pi* tmp}
    \pi^*(\cdot | x) & =& \mathcal{N}\left(-(R+\gamma P (B^\top  B+D^\top D))^{-1}\gamma A P x B^\top , \frac{\tau}{2} (  R+\gamma P (B^\top  B+D^\top D))^{-1}\right)\notag\\
    &=&\mathcal{N}\left(-K^* x, \Sigma^*\right),
\end{eqnarray}
where $K^*,\Sigma^*$ are defined in \eqref{eqn: pi* tmp}.
To derive the associated optimal value function, we first calculate the negative entropy of policy $\pi^*$ at any state $x\in\mathcal{X}$:
\begin{equation}\label{eqn:neg_gua_entropy}
    \mathbb{E}_{\pi^*} [\log(\pi^*(u | x))] = \int_{\mathcal{A}} \log(\pi^*(u | x)) \pi^*(u | x)du = - \frac{1}{2}\left(k+\log \left((2 \pi)^k \operatorname{det} \Sigma^*\right)\right).
\end{equation}
Plug \eqref{eqn: pi* tmp} and \eqref{eqn:neg_gua_entropy} into \eqref{eqn: value func} to get
\begin{align*}
    J^*(x) &=  (Q+\gamma P(A^2+C^2))x^2 + \gamma q \notag \\
     &\quad + \min_{\pi} \mathbb{E}_\pi \Bigl\{  u^\top  (R+\gamma P (B^\top  B+D^\top  D)) u+ \tau \log(\pi(u|x)) +  2\gamma A P x B u  \Bigr\}\\
&=(Q+\gamma P(A^2+C^2))x^2 + \gamma q - \frac{\tau}{2}(k+\log ((2 \pi)^k \operatorname{det} \Sigma^*)\notag\\
    &\quad + Tr(\Sigma^* (R+\gamma P (B^\top  B+D^\top  D))\\
     &\quad + (K^*x)^\top \ (R+\gamma P (B^\top  B+D^\top  D)) (K^*x)- {2\gamma A P x B K^*x}\\
    & = x^2[ Q+\gamma P(A^2+C^2 -ABK^*)]\\
    & \quad - \frac{\tau}{2}(k+\log ((2 \pi)^k \operatorname{det} \Sigma^*) + \gamma q +
Tr(\Sigma^* (R+\gamma P (B^\top  B+D^\top  D))\notag.
\end{align*}

Combining this with \eqref{eqn: V* stipulate}, the proof is completed.
\end{proof}


Inspired by the form of the optimal control function, we can linearly parameterize our policy and transform the optimal control problem to the optimization problem by the parameters $(K,\Sigma)$. By doing so, the policy can be formulated as $\pi(u|x) = \mathcal{N}(-Kx,\Sigma)$. Then admissible policy set for $(K,\Sigma)$ is defined as $\Omega = \{K \in \mathbb{R}^n, \Sigma\in \mathbb{R}^{n\times n}:\gamma V_K<1, \Sigma\succ0,\Sigma^\top  = \Sigma\}$. For simplicity of notation, we define  the cost of system given the deterministic initial state $x_0$ as $C_{K,\Sigma}(x_0) {:=}\mathbb{E}\left.\big[\sum_{t=0}^\infty \gamma^t \left( Q x_t^2 + u_t^{T} R u_t + \tau \log \pi (u_t|x_t) \right) \right| u_t \sim \mathcal{N}(-Kx_t, \Sigma)\big]$.
\begin{lemma}[Optimization formulation]\label{lemma: optimization}
The optimal control problem consider in Theorem \ref{thm: optimal val func and policy} can be written as follows:
\begin{align}\label{eqn: obj optimization}
     \min_{(K,\Sigma)\in\Omega}  f(K,\Sigma) = \mathbb{E}_{x_0\sim\mathcal{D}}[C_{K,\Sigma}(x_0)] = P_K\mu + q_{K,\Sigma},
\end{align}
 where $\pi(u | x)  = \mathcal{N}\left(-K x, \Sigma\right)$, $\mu = \mathbb{E}_{x_0\sim \mathcal{D}}x^2_0$ and $x_t$ subject to the dynamics of system in (\ref{eq:state}) and $P_K,q_{K,\Sigma}$ satisfy the following functions:
\begin{equation}\label{eq: P_k = ...}
P_K = Q+ K^\top RK + \gamma P_K(A^2+C^2 + K^\top (B^\top B+D^\top D)K - 2ABK),
\end{equation}
\begin{equation*}
q_{K,\Sigma} = \frac{ \text{Tr}(\Sigma (R + \gamma P_K (B^\top B+D^\top D)) - \frac{\tau}{2} (n + \log((2\pi)^n|\Sigma|)))}{1-\gamma}.
\end{equation*}
\end{lemma}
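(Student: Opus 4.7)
The plan is to establish the closed form for $C_{K,\Sigma}(x_0)$ by a quadratic-plus-constant ansatz and the Bellman self-consistency equation, then take the outer expectation over $x_0 \sim \mathcal{D}$. Concretely, writing $u_t = -K x_t + \epsilon_t$ with $\epsilon_t \sim \mathcal{N}(0,\Sigma)$ independent of everything else, the closed-loop dynamics become $x_{t+1} = (A + w_t^x C) x_t + (B + w_t^u D)(-K x_t + \epsilon_t)$, and the admissibility condition $\gamma V_K < 1$ guarantees that the discounted infinite-horizon sum defining $C_{K,\Sigma}(x_0)$ converges, so the value function is well-defined and stationary.

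Next I would postulate the ansatz $C_{K,\Sigma}(x) = P_K x^2 + q_{K,\Sigma}$ and verify it via one-step dynamic programming:
\begin{equation*}
C_{K,\Sigma}(x) = \mathbb{E}\bigl[Q x^2 + u^{\top} R u + \tau \log \pi(u|x)\bigr] + \gamma\, \mathbb{E}\bigl[C_{K,\Sigma}(x')\bigr].
\end{equation*}
Each piece on the right is a Gaussian expectation: $\mathbb{E}[u^{\top} R u] = x^2\, K^{\top} R K + \text{Tr}(\Sigma R)$, and the Gaussian log-density contributes $-\frac{\tau}{2}(n + \log((2\pi)^n |\Sigma|))$ as already computed in \eqref{eqn:neg_gua_entropy}. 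The key computation is $\mathbb{E}[(x')^2]$: using that $w_t^x$, $w_t^u$, and $\epsilon_t$ are mutually independent with $\mathbb{E}[(w_t^x)^2] = 1$ and $\mathbb{E}[(w_t^u)^{\top} w_t^u] = I$, cross terms involving odd moments vanish and the multiplicative-noise cross term produces
\begin{equation*}
\mathbb{E}\bigl[(x')^2\bigr] = \bigl(A^2 + C^2 - 2 A B K + K^{\top}(B^{\top}B + D^{\top}D) K\bigr) x^2 + \text{Tr}\bigl(\Sigma (B^{\top}B + D^{\top}D)\bigr).
\end{equation*}

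Matching the coefficient of $x^2$ in the Bellman equation yields exactly \eqref{eq: P_k = ...}, and matching the constant term gives $(1-\gamma) q_{K,\Sigma} = \text{Tr}(\Sigma(R + \gamma P_K(B^{\top}B + D^{\top}D))) - \frac{\tau}{2}(n + \log((2\pi)^n |\Sigma|))$, which is the formula claimed for $q_{K,\Sigma}$. Finally, taking the outer expectation $\mathbb{E}_{x_0 \sim \mathcal{D}}$ and using $\mu = \mathbb{E}[x_0^2]$ produces $f(K,\Sigma) = P_K \mu + q_{K,\Sigma}$.

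The main obstacle is the careful computation of $\mathbb{E}[(x')^2]$ with both multiplicative noises $w^x C$ and $w^u D$ acting simultaneously, plus the exploration noise $\epsilon$. One must separate the four sources of randomness ($x$ is deterministic given the state, while $w^x$, $w^u$, $\epsilon$ are mutually independent with zero mean) and correctly evaluate the quadratic form $\mathbb{E}[u^{\top}(B^{\top}B + D^{\top}D) u]$, which is where the matrix $B^{\top}B + D^{\top}D$ naturally emerges. A secondary subtlety is justifying the exchange of the infinite sum and expectation, which follows from $\gamma V_K < 1$ and absolute convergence, and verifying that the fixed-point equation \eqref{eq: P_k = ...} admits a unique positive solution $P_K$ under this admissibility condition.
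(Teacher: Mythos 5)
Your proposal is correct and follows essentially the same route as the paper: substitute the linear-Gaussian policy into the one-step Bellman equation, posit the ansatz $C_{K,\Sigma}(x)=P_Kx^2+q_{K,\Sigma}$, evaluate the Gaussian expectations (including $\mathbb{E}[(x')^2]$ with the multiplicative noises, which yields the $B^\top B+D^\top D$ term), match the $x^2$ and constant coefficients, and take the expectation over $x_0$. Your added remarks on convergence of the discounted sum under $\gamma V_K<1$ are a welcome bit of extra care the paper leaves implicit, but they do not change the argument.
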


It is noteworthy that the optimal policy for the optimization formulation (in Lemma \ref{lemma: optimization}) and the optimal control problem (in Theorem \ref{thm: optimal val func and policy}) are identical. In other words, when $K = K^*$ and $\Sigma = \Sigma^*$ , the problem attains its optimal solution and  $P_K$ satisfy the ARE in \eqref{eqn:P}. The following lemma provides the explicit form of the first-order derivative of cost function with respect to $K$ and $\Sigma$.

\begin{lemma}[Explicit form of $\nabla _Kf(K,\Sigma)$ and $\nabla _\Sigma f(K,\Sigma)$]\label{lemma: explicit form} Assume that $\gamma <1$, it holds that
$$
\nabla_K f(K,\Sigma) 
 = E_K S_K, $$
$$
\nabla_{\Sigma} f(K,\Sigma) = (1- \gamma)^{-1} \left( (R+ \gamma P_K(B^\top B+D^\top D))^\top  - \frac{\tau}{2} \Sigma^{-1}\right),
$$
where $E_K = 2RK+2 \gamma P_K[(B^\top B+D^\top D)K - AB^\top ]$, $S_K =  \sum_{t = 0}^{\infty} \mathbb{E}x_t^2$.
\end{lemma}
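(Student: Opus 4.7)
}
The plan is to exploit the closed-form representation $f(K,\Sigma)=P_K\mu+q_{K,\Sigma}$ from Lemma \ref{lemma: optimization}. Since $P_K$ does not depend on $\Sigma$, the $\Sigma$-derivative is the easy half: I would differentiate $q_{K,\Sigma}$ using the standard matrix identities $\nabla_\Sigma\mathrm{Tr}(\Sigma M)=M^\top$ and $\nabla_\Sigma\log\det\Sigma=\Sigma^{-\top}=\Sigma^{-1}$ (the latter because $\Sigma$ is symmetric positive definite), which immediately yields the claimed formula for $\nabla_\Sigma f$ after dividing by $1-\gamma$.

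For the $K$-derivative I would write $\nabla_K f(K,\Sigma)=\mu\nabla_K P_K+\nabla_K q_{K,\Sigma}$ and obtain $\nabla_K P_K$ by implicit differentiation of \eqref{eq: P_k = ...}. Setting $V_K:=A^2+C^2+K^\top(B^\top B+D^\top D)K-2ABK$, applying $\nabla_K$ to both sides of $P_K=Q+K^\top RK+\gamma P_KV_K$, and solving the resulting linear equation gives
\[
\nabla_K P_K \;=\; \frac{2RK+2\gamma P_K\bigl[(B^\top B+D^\top D)K-(AB)^\top\bigr]}{1-\gamma V_K} \;=\; \frac{E_K}{1-\gamma V_K},
\]
which is well-defined because $(K,\Sigma)\in\Omega$ forces $\gamma V_K<1$. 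Since $q_{K,\Sigma}$ depends on $K$ only through $P_K$, I get $\nabla_K q_{K,\Sigma}=\frac{\gamma\,\mathrm{Tr}(\Sigma(B^\top B+D^\top D))}{1-\gamma}\,\nabla_K P_K$, and therefore
\[
\nabla_K f(K,\Sigma) \;=\; \Bigl(\mu + \tfrac{\gamma T}{1-\gamma}\Bigr)\,\frac{E_K}{1-\gamma V_K}, \qquad T:=\mathrm{Tr}\!\bigl(\Sigma(B^\top B+D^\top D)\bigr).
\]

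To recognise the bracketed factor as $S_K$, I would derive the second-moment recursion for the closed-loop state. Substituting $u_t=-Kx_t+\eta_t$ with $\eta_t\sim\mathcal N(0,\Sigma)$ into \eqref{eq:state} and using independence together with $\mathbb E[w_t^x]=0$, $\mathbb E[(w_t^x)^2]=1$, $\mathbb E[w_t^u]=0$, $\mathbb E[(w_t^u)^\top w_t^u]=I$, the cross terms vanish and I obtain
\[
\mathbb E[x_{t+1}^2] \;=\; V_K\,\mathbb E[x_t^2] \;+\; T.
\]
Iterating from $\mathbb E[x_0^2]=\mu$ and forming the $\gamma$-discounted geometric sum then yields $S_K=(\mu+\gamma T/(1-\gamma))/(1-\gamma V_K)$, so $\nabla_K f(K,\Sigma)=E_KS_K$.

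The bookkeeping step I expect to be most delicate is the second-moment recursion: the multiplicative terms $w_t^xCx_t$ and $w_t^uDu_t$ interact with the Gaussian exploration noise $\eta_t$ inside $u_t$, and one has to separate out the matrix-trace contributions $C^2$, $K^\top D^\top DK$ and $\mathrm{Tr}((B^\top B+D^\top D)\Sigma)$ without double counting. Once that recursion is pinned down and matched against the implicit-differentiation formula for $\nabla_K P_K$, the two identities collapse into $\nabla_K f=E_KS_K$ algebraically.
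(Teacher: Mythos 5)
Your proposal is correct, and its two main ingredients — differentiating the fixed-point equation \eqref{eq: P_k = ...} for $P_K$ (the paper writes $\nabla_K P_K=(2RK+\gamma P_K\nabla_K V_K)\sum_{t\ge 0}(\gamma V_K)^t$, which is the same as your $E_K/(1-\gamma V_K)$), and the second-moment recursion $\mathbb{E}[x_{t+1}^2]=V_K\,\mathbb{E}[x_t^2]+\mathrm{Tr}(\Sigma(B^\top B+D^\top D))$ — are exactly the ones the paper uses; the $\Sigma$-gradient is handled identically. The only real difference is the last step: the paper never forms a closed-form expression for $S_{K,\Sigma}$ inside this proof; instead it substitutes the moment recursion term by term into $\mu\nabla_K P_K+\nabla_K q_{K,\Sigma}$ and telescopes until the series $E_K\sum_{t\ge0}\gamma^t\mathbb{E}[x_t^2]$ appears. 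You instead compute both sides in closed form, obtaining $\nabla_K f=(\mu+\tfrac{\gamma T}{1-\gamma})E_K/(1-\gamma V_K)$ and separately $S_{K,\Sigma}=(\mu+\tfrac{\gamma T}{1-\gamma})/(1-\gamma V_K)$, and match them. Your variant is arguably cleaner and less error-prone, but it front-loads the explicit formula for $S_{K,\Sigma}$, which in the paper is only derived later (Lemma \ref{lemma:upper bound of S_K}); since that derivation is exactly the geometric-sum computation you sketch (valid because $\gamma V_K<1$ on $\Omega$, with the interchange of the two sums justified by nonnegativity), there is no gap, and the "delicate bookkeeping" you flag in the recursion works out precisely to $V_K$ and $\mathrm{Tr}(\Sigma(B^\top B+D^\top D))$ as you anticipate.
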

%

\section{Global Convergence of Regularized Policy Gradient}\label{section: model based}

It is natural to utilize a gradient descent method for addressing the optimization problem. In \cite{Guo2023fast}, the Regularized Policy Gradient (RPG) algorithm was introduced and shown to achieve global optimality in the context of noisy linear quadratic problems. In this chapter, we extend these results by demonstrating that RPG remains globally optimal for stochastic linear quadratic problems with multiplicative noise.

Consider RPG with following updating rules with a fixed step size $\eta_1$ and $\eta_2$:
\begin{align}
    K &\leftarrow K - \eta_1 \frac{\nabla_K f(K,\Sigma)}{S_{K,\Sigma}}\label{eqn:updating 1},\\
        \Sigma &\leftarrow \Sigma - \eta_2 \Sigma \nabla_\Sigma f(K,\Sigma)\Sigma\label{eqn:updating 2} .
\end{align}
By Lemma \ref{lemma: explicit form}, the above update can be written as
\begin{align*}
    K &\leftarrow K - \eta_1 E_K,\\
        \Sigma &\leftarrow \Sigma - \frac{\eta_2}{1- \gamma} \Sigma \left( (R+ \gamma P_K(B^\top B+D^\top D))^\top  - \frac{\tau}{2} \Sigma^{-1}\right) \Sigma .
\end{align*}

Before proving Global Convergence of Gradient Methods, we first introduce the following lemmas, which establish the gradient dominance condition and the smoothness property of the value function, both of which play a crucial role in the subsequent proofs.
\begin{lemma}[Gradient Domination of $f(K,\Sigma)$]\label{lemma: gradient domination}  Let $(K^*,\Sigma^*) \in \Omega$ be an global optimal policy. Assume that $(K,\Sigma)\in \Omega$ and  $\mu  >0$. Then we have
\begin{equation*}
\lambda_1E_K^\top E_K\leq f(K,\Sigma) - f(K^*,\Sigma^*)   \leq \lambda_2  \nabla_{K}f^\top (K,\Sigma) \nabla_{K}f(K,\Sigma)  +\frac{(1-\gamma)\text{Tr}[(\nabla_{\Sigma} C_{K,\Sigma}(x_0))^2]}{\sigma_{min}(R)},
\end{equation*}
where $ \lambda_1 = \frac{\mu}{ \|R + \gamma P_K(B^\top B+D^\top D)\|} $ and $\lambda_2=\frac{1}{ \mu \sigma_{min}(R)}$.
\end{lemma}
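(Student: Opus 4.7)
I would adapt the template from Fazel et al.\ and its multiplicative-noise extension by Gravell et al., adjusted for the entropy-regularized objective as in Guo et al. The backbone is a \emph{value-difference identity} that expresses $f(K,\Sigma)-f(K^*,\Sigma^*)$ as a discounted sum of per-step ``advantages'' along the closed-loop trajectory under $(K,\Sigma)$. Both inequalities in the lemma then follow from completing the square in $K$ and invoking the explicit forms in Lemma \ref{lemma: explicit form}.

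\textbf{Step 1 (value-difference identity).} Starting from $f(K,\Sigma) = P_K\mu + q_{K,\Sigma}$ and the fixed-point equation \eqref{eq: P_k = ...}, I would subtract the analogous equation at $(K^*,\Sigma^*)$ and iterate under the closed-loop dynamics \eqref{eq:state} to obtain a decomposition of the form
$$f(K,\Sigma) - f(K^*,\Sigma^*) = S_K\,\Delta_K + \tfrac{1}{1-\gamma}\,\Delta_\Sigma,$$
where $\Delta_K$ is quadratic in $K-K^*$ with curvature $M_K := R + \gamma P_K(B^\top B + D^\top D)$ and linear coefficient exactly $E_K$ from Lemma \ref{lemma: explicit form}, and $\Delta_\Sigma$ collects the trace and log-det residuals coming from $q_{K,\Sigma}-q_{K^*,\Sigma^*}$. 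Since $P_K$ is independent of $\Sigma$ in \eqref{eq: P_k = ...}, the $\Sigma$-block decouples cleanly, and the multiplicative-noise terms $C^2$ and $D^\top D$ enter the identity exactly as they already appear in \eqref{eq: P_k = ...}, preserving the quadratic-in-$K$ structure.

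\textbf{Step 2 (completing the square).} Write $\Delta_K = (K-\widehat K)^\top M_K(K-\widehat K) - \tfrac{1}{4}E_K^\top M_K^{-1}E_K$ where $\widehat K = K - \tfrac{1}{2}M_K^{-1}E_K$. For the \emph{upper} bound I would drop the nonnegative squared term, use $M_K^{-1}\preceq \sigma_{\min}(R)^{-1}I$ to obtain $|\Delta_K|\le \tfrac{1}{4\sigma_{\min}(R)}\|E_K\|^2$, then convert via $\nabla_K f = E_K S_K$ and $S_K\ge \mu$: the key identity is $S_K\|E_K\|^2 = \|\nabla_K f\|^2/S_K \le \|\nabla_K f\|^2/\mu$, which gives the $\lambda_2$ factor. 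The $\Sigma$-term is bounded by plugging in the explicit form of $\nabla_\Sigma f$ from Lemma \ref{lemma: explicit form} and using strong convexity of the trace--log-det combination to produce the $\mathrm{Tr}[(\nabla_\Sigma C_{K,\Sigma})^2]$ factor. For the \emph{lower} bound I would instead use $M_K\preceq \|M_K\|I$ to get $\Delta_K \ge \tfrac{1}{4\|M_K\|}\|E_K\|^2$ at the appropriate direction, drop the nonnegative $\Delta_\Sigma$ contribution (which is $\ge 0$ since $(K^*,\Sigma^*)$ is a global minimizer), and multiply by $S_K\ge\mu$ to recover $\lambda_1 E_K^\top E_K$.

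\textbf{Expected obstacle.} The trickiest piece is the Lyapunov-type perturbation needed to swap $P_K$ and $P_{K^*}$ inside the completion-of-squares step so that the ``advantage'' linear coefficient comes out as $E_K$ (with $P_K$) rather than a hybrid quantity. In the deterministic LQR case this exchange is classical, but the presence of $D^\top D$ in the multiplicative-noise curvature forces a slightly heavier controllability/positivity argument to ensure the swap does not inflate the constants. A secondary, purely bookkeeping task is verifying that the cross term produced by expanding $((A+w^xC)x + (B+w^uD)u)^2$ contracts to exactly $E_K$ after subtracting the optimal equation; matching the signs, the factors of $2$, and the $-ABK$ cross term in \eqref{eq: P_k = ...} requires careful tracking of the multiplicative-noise contributions.
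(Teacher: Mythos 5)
Your plan sits in the same family as the paper's proof (a Fazel-type advantage/cost-difference decomposition followed by completing the square), but the identity you build everything on in Step 1 is not correct, and both bounds cannot be extracted from it as described. Writing $M_K := R+\gamma P_K(B^\top B+D^\top D)$, the cost-difference lemma weights the per-step advantage of the \emph{current} policy by the discounted state second moments of the \emph{comparison} trajectory: with comparison $(K^*,\Sigma^*)$ one gets
\[
f(K,\Sigma)-f(K^*,\Sigma^*) \;=\; -\,S_{K^*,\Sigma^*}\bigl[(K^*-K)^\top M_K (K^*-K)+2(K^*-K)^\top E_K\bigr]\;-\;(q\text{-terms}),
\]
i.e.\ weights $S_{K^*,\Sigma^*}$ with curvature $M_K$ and linear term $E_K$, whereas the decomposition with weights $S_{K,\Sigma}$ has curvature $M_{K^*}$ and linear coefficient $E_{K^*}=0$. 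Your hybrid ``$S_K\,\Delta_K$ with curvature $M_K$ and linear term $E_K$'' is not an identity, and the obstacle you anticipate --- swapping $P_K$ and $P_{K^*}$ via a Lyapunov-type perturbation --- is not a repair the paper needs (nor would it preserve the constants $\lambda_1,\lambda_2$): the paper never swaps anything, it simply uses \emph{different comparison policies} for the two inequalities. For the upper bound it compares against $(K^*,\Sigma^*)$, completes the square in the $K$-advantage, and converts via $\|E_K\|\le\|\nabla_K f(K,\Sigma)\|/S_{K,\Sigma}\le\|\nabla_K f(K,\Sigma)\|/\mu$; the $\Sigma$-part is controlled by a first-order inequality for $q_{K,\Sigma}$ in $\Sigma$ together with the explicit form of $\nabla_\Sigma f$, which is where the $\text{Tr}[(\nabla_\Sigma C_{K,\Sigma}(x_0))^2]$ term comes from, not from strong convexity of a trace--log-det functional.

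Your lower-bound step also fails as written. From the correct decomposition with comparison $K^*$, completing the square only yields $\Delta_K \ge -E_K^\top M_K^{-1}E_K$, which after the sign flip is exactly the \emph{upper} bound; you cannot lower bound the gap by $+c\,\|E_K\|^2$ while keeping $K^*$ as the comparison point, since $K^*-K$ may be nearly orthogonal to the descent direction. The standard fix, and the one the paper uses, is to introduce the one-step surrogate $K'=K-M_K^{-1}E_K$ with the \emph{same} covariance $\Sigma$, invoke global optimality $f(K^*,\Sigma^*)\le f(K',\Sigma)$, and evaluate the exact advantage identity at $K'$, giving
\[
f(K,\Sigma)-f(K^*,\Sigma^*)\;\ge\; f(K,\Sigma)-f(K',\Sigma)\;=\;S_{K',\Sigma}\,E_K^\top M_K^{-1}E_K\;\ge\;\frac{\mu}{\|M_K\|}\,E_K^\top E_K\;=\;\lambda_1 E_K^\top E_K .
\]
Keeping the comparison covariance equal to $\Sigma$ is also what makes the $\Sigma$-contribution vanish; your justification that ``$\Delta_\Sigma\ge 0$ since $(K^*,\Sigma^*)$ is a global minimizer'' is false for a fixed suboptimal $K$, because $\Sigma^*$ minimizes the $\Sigma$-cost associated with $K^*$, not with $K$. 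With these two repairs (correct trajectory weights, and the surrogate-policy comparison for the lower bound) your completing-the-square computations do align with the paper's, up to the factor-of-two convention: the paper's $E_K$ already carries the factor $2$, so the minimum of the quadratic model is $-E_K^\top M_K^{-1}E_K$, not $-\tfrac14 E_K^\top M_K^{-1}E_K$.
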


\begin{lemma}[Gradient Norm Bounds]\label{lemma: norm bounds} The gradient of $f(K,\Sigma)$ have the following bounds,
    $$\| \nabla_K f(K, \Sigma)\| \leq \overline{\| \nabla_Kf(K, \Sigma) \|} := \frac{f(K, \Sigma)}{Q} \sqrt{\frac{f(K, \Sigma)-f(K^*, \Sigma^*)}{\lambda_1}},$$
    and 
    $$\| \nabla_{\Sigma}f(K, \Sigma)\| \leq \overline{\| \nabla_{\Sigma}f(K, \Sigma)\|}:= (1-\gamma)^{-1} \big[ \| R + \gamma P_K(B^\top B+D^\top D) \| + \frac{\tau}{2\sigma_{min}(\Sigma)} \big]. $$
\end{lemma}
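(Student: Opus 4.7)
The plan is to handle the two bounds separately, starting with the simpler $\Sigma$-gradient bound. Using the explicit form from Lemma \ref{lemma: explicit form},
\[
\nabla_\Sigma f(K,\Sigma) = (1-\gamma)^{-1}\Bigl( (R + \gamma P_K(B^\top B + D^\top D))^\top - \tfrac{\tau}{2}\Sigma^{-1} \Bigr),
\]
I would simply apply the triangle inequality in spectral norm and use the identity $\|\Sigma^{-1}\| = 1/\sigma_{\min}(\Sigma)$ (valid because $\Sigma \succ 0$ on $\Omega$). This is a one-line argument and yields the stated bound.

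For the $K$-gradient bound, I would start from $\nabla_K f(K,\Sigma) = E_K S_K$ in Lemma \ref{lemma: explicit form}. Since $S_K$ is a scalar and $E_K \in \mathbb{R}^n$, submultiplicativity gives the clean factorization $\|\nabla_K f(K,\Sigma)\| = \|E_K\|\, |S_K|$. To control $\|E_K\|$, I would invoke the lower half of the gradient-dominance inequality from Lemma \ref{lemma: gradient domination}, namely $\lambda_1 E_K^\top E_K \le f(K,\Sigma) - f(K^*,\Sigma^*)$, which rearranges to $\|E_K\| \le \sqrt{(f(K,\Sigma) - f(K^*,\Sigma^*))/\lambda_1}$.

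The substantive part of the proof is showing $|S_K| \le f(K,\Sigma)/Q$. My plan is to expand the per-stage cost under the Gaussian policy $\pi(\cdot|x_t) = \mathcal{N}(-Kx_t,\Sigma)$, using that $\mathbb{E}[u_t^\top R u_t\,|\,x_t] = x_t^2\, K^\top R K + \mathrm{Tr}(R\Sigma)$ and $\mathbb{E}[\tau\log\pi(u_t|x_t)\,|\,x_t] = -\tfrac{\tau}{2}(n + \log((2\pi)^n|\Sigma|))$. Summing the $\gamma$-discounted running cost yields the decomposition
\[
f(K,\Sigma) = (Q + K^\top R K)\, S_K + \frac{\mathrm{Tr}(R\Sigma) - \tfrac{\tau}{2}(n + \log((2\pi)^n|\Sigma|))}{1-\gamma}.
\]
Since $K^\top R K \ge 0$ and the admissible regime enforces that the constant term is nonnegative (absorbing the entropy into the positive control cost and discount), this gives $Q\, S_K \le f(K,\Sigma)$, as desired. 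Combining with the bound on $\|E_K\|$ yields the claim.

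The main obstacle is the $S_K$ bound, because the entropy contribution $-\tfrac{\tau}{2}\log((2\pi)^n |\Sigma|)$ can in principle be large and negative for covariances $\Sigma$ with large determinant, potentially violating $f(K,\Sigma) \ge Q S_K$. I would deal with this by either (i) invoking an implicit admissibility or small-$\tau$ regime where the control cost dominates the entropy term, or (ii) replacing the bound by $f(K,\Sigma) + (\text{entropy correction})$ as an upper bound on $Q S_K$ and then absorbing the correction into the constant $\lambda_1$. The alternative route, via the Lyapunov identity $S_K (1 - \gamma V_K) = \mu + \gamma(1-\gamma)^{-1}\mathrm{Tr}(\Sigma(B^\top B+D^\top D))$ together with $P_K(1-\gamma V_K) = Q + K^\top R K$, gives $S_K \le (P_K/Q)\bigl(\mu + \gamma\mathrm{Tr}(\Sigma(B^\top B+D^\top D))/(1-\gamma)\bigr)$, which can be matched to $f(K,\Sigma)/Q$ in the same regime.
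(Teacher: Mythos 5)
Your argument is essentially the paper's. The $\Sigma$-bound is the same one-line computation (triangle inequality plus $\|\Sigma^{-1}\| = 1/\sigma_{\min}(\Sigma)$), and for the $K$-bound the paper likewise writes $\|\nabla_K f(K,\Sigma)\| = \|E_K\|\,S_{K,\Sigma}$, controls $\|E_K\| \le \sqrt{\lambda_1^{-1}(f(K,\Sigma)-f(K^*,\Sigma^*))}$ via the lower half of Lemma \ref{lemma: gradient domination}, and controls $S_{K,\Sigma}$ through exactly the decomposition you derive, $f(K,\Sigma) = (Q+K^\top RK)S_{K,\Sigma} + (1-\gamma)^{-1}\big[\mathrm{Tr}(\Sigma R) - \tfrac{\tau}{2}(n+\log((2\pi)^n|\Sigma|))\big]$, which is Lemma \ref{lemma:upper bound of S_K}. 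The obstacle you flag at the step $S_{K,\Sigma} \le f(K,\Sigma)/Q$ is genuine, and it is worth noting the paper does not actually resolve it: its proof retains the correction term, giving $S_{K,\Sigma} \le \big(f(K,\Sigma) - \psi\big)/Q$ with $\psi = \mathrm{Tr}(\Sigma R) - \tfrac{\tau}{2}(n+\log((2\pi)^n|\Sigma|))$, and hence the bound $\tfrac{f(K,\Sigma)-\psi}{Q}\sqrt{\lambda_1^{-1}(f(K,\Sigma)-f(K^*,\Sigma^*))}$; the cleaner factor $f(K,\Sigma)/Q$ appearing in the statement follows only when $\psi \ge 0$, i.e., when the trace term dominates the entropy term, which is an unstated restriction rather than something guaranteed by $(K,\Sigma)\in\Omega$. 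So your option (ii)---carrying the correction term explicitly in the definition of $\overline{\|\nabla_K f(K,\Sigma)\|}$---is precisely what the paper's own proof does and is the safe way to finish, while your option (i) (an admissibility or smallness condition forcing $\psi \ge 0$) is what the stated bound implicitly assumes. With either fix made explicit, your proof is complete and coincides with the paper's.
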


Now we have proved $f(K,\Sigma)$ is gradient dominated. If $f(K,\Sigma)$ is smooth and gradient dominated, then the gradient descent methods will convergence to the global optimal at a linear rate. Unfortunately,  $f(K,\Sigma)$ cannot satisfy the smoothness condition; this is due to $f(K,\Sigma) = \infty $ when $\gamma V_K \geq 1$. We consider the case where the policy $(K,\Sigma)$ is not too close to the boundary, the objective satisfies an almost smoothness condition as follows:
\begin{lemma}[``Almost" smoothness of $f(K,\Sigma)$]\label{lemma: almost smooth} Fix $0<a<1$ and define $m = \frac{log(a)-a+1}{(a-1)^2}$,  any $\Sigma  $ and $\Sigma'$ satisfies $aI\prec \Sigma \prec I$ and $aI\prec \Sigma' \prec I$, we have
\begin{align*}
    &f(K',\Sigma') - f(K,\Sigma)\\
    &= S_K [(K' -K)^\top (R + \gamma P_K(B^\top B+D^\top D))(K' -K)+ 2(K'-K)^\top E_K] \notag\\
    & \quad  +
    q_{K,\Sigma'} - q_{K,\Sigma}\\
    &\leq S_K [(K' -K)^\top (R + \gamma P_K(B^\top B+D^\top D))(K' -K)+ 2(K'-K)^\top E_K] \notag\\
    & \quad  +\frac{Tr\left( ((R+ \gamma P_K(B^\top B+D^\top D)) - \frac{\tau}{2} \Sigma^{-1})(\Sigma'-\Sigma)\right)}{(1- \gamma)}+ \frac{\tau m}{2(1-\gamma)} Tr((\Sigma^{-1}\Sigma'- I)^2).
\end{align*}
\end{lemma}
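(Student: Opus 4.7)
My plan is to split $f(K',\Sigma')-f(K,\Sigma)$ into a ``$K$-only'' change and a ``$\Sigma$-only'' change and treat each separately. Writing
\begin{equation*}
f(K',\Sigma') - f(K,\Sigma) = [f(K',\Sigma') - f(K,\Sigma')] + [f(K,\Sigma') - f(K,\Sigma)],
\end{equation*}
the second bracket collapses to $q_{K,\Sigma'}-q_{K,\Sigma}$ because $P_K$ does not depend on $\Sigma$, which produces the last term in the claimed equality. For the first bracket, with $\pi=(K,\Sigma')$ and $\pi'=(K',\Sigma')$, I would invoke the performance-difference identity
\begin{equation*}
J_{\pi'}(x_0) - J_{\pi}(x_0) = \mathbb{E}_{\pi'}\Big[\sum_{t=0}^{\infty}\gamma^{t}\big(T_{\pi'}J_{\pi}(x_t) - J_{\pi}(x_t)\big)\Big],
\end{equation*}
where $T_{\pi'}$ is the entropy-regularized Bellman operator of $\pi'$. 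Plugging in the ansatz $J_\pi(x)=P_Kx^2+q_{K,\Sigma'}$ and expanding $T_{\pi'}J_\pi(x)$ using the same Gaussian/multiplicative-noise expectations that appeared in the proof of Theorem \ref{thm: optimal val func and policy}, I expect the constant pieces to cancel exactly (because $(1-\gamma)q_{K,\Sigma'}$ is precisely the trace-plus-entropy expression that appears), leaving only a quadratic-in-$x_t$ residual whose coefficient assembles into the bracket in the statement, written through the matrices $R+\gamma P_K(B^\top B+D^\top D)$ and $E_K$ acting on $K'-K$. Averaging over $x_0\sim\mathcal{D}$ and collecting the discounted second moments $\sum_t\gamma^t\mathbb{E}[x_t^{2}]$ as the $S$-factor then yields the equality.

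For the upper bound it suffices to control $q_{K,\Sigma'}-q_{K,\Sigma}$. From the closed form in Lemma \ref{lemma: optimization},
\begin{equation*}
q_{K,\Sigma'} - q_{K,\Sigma} = \frac{\mathrm{Tr}\big((\Sigma'-\Sigma)(R+\gamma P_{K}(B^{\top}B+D^{\top}D))\big) - \tfrac{\tau}{2}(\log|\Sigma'|-\log|\Sigma|)}{1-\gamma},
\end{equation*}
so everything reduces to bounding $\log|\Sigma|-\log|\Sigma'|=-\sum_i\log\lambda_i$ from above, where the $\lambda_i$ are the eigenvalues of $\Sigma^{-1/2}\Sigma'\Sigma^{-1/2}$ (equivalently of $\Sigma^{-1}\Sigma'$). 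Under $aI\prec\Sigma,\Sigma'\prec I$ these eigenvalues lie in $(a,1/a)$, so I would apply the scalar inequality $-\log\lambda\leq(1-\lambda)+|m|(\lambda-1)^{2}$ on this interval, lift it to the matrix case by simultaneous diagonalization of $\Sigma^{-1/2}\Sigma'\Sigma^{-1/2}$, and use the identity $\mathrm{Tr}(\Sigma^{-1}\Sigma'-I)=\mathrm{Tr}(\Sigma^{-1}(\Sigma'-\Sigma))$ to reassemble the two linear-in-$(\Sigma'-\Sigma)$ trace pieces into the single $\mathrm{Tr}((R+\gamma P_K(B^\top B + D^\top D))-\tfrac{\tau}{2}\Sigma^{-1})(\Sigma'-\Sigma)$ term displayed in the lemma.

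The main obstacle is pinning down the sharp constant in the log-determinant bound. The key step is to verify that $\phi(\lambda)=(\lambda-1-\log\lambda)/(\lambda-1)^{2}$ is monotonically decreasing on $(0,\infty)$ with $\phi(1)=\tfrac{1}{2}$, so that its maximum over the admissible interval $[a,1/a]$ is attained at $\lambda=a$ and equals exactly $|m|$ from the lemma's definition; this is what makes the scalar inequality sharp in the direction one needs. Beyond this eigenvalue estimate, the performance-difference calculation for the first bracket is routine algebra once one uses the $q$-Bellman equation to wipe out the constant part of the one-step residual $T_{\pi'}J_\pi-J_\pi$.
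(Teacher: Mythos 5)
Your proposal is correct and follows essentially the same route as the paper: the equality is obtained from the advantage/performance-difference expansion (the paper does it in one shot against the baseline $(K,\Sigma)$, you split off the $\Sigma$-change first, which is equivalent since $P_K$ does not depend on $\Sigma$), and the upper bound is obtained by reducing $q_{K,\Sigma'}-q_{K,\Sigma}-\mathrm{Tr}\bigl(\nabla_\Sigma q_{K,\Sigma}^\top(\Sigma'-\Sigma)\bigr)$ to the scalar inequality $\lambda-1-\log\lambda\le c\,(\lambda-1)^2$ applied to the eigenvalues of $\Sigma^{-1}\Sigma'$, which are bounded below by $a$, exactly as in the paper's proof. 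The only points worth noting are that your $S$-factor correctly comes out as $S_{K',\Sigma'}$ (matching the paper's proof; the $S_K$ in the lemma statement is a typo) and that your constant $|m|=\frac{a-1-\log a}{(a-1)^2}$ is the right positive value, quietly fixing the sign slips in the paper's two conflicting definitions of $m$.
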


From the above lemmas, we have established that $f(K,\Sigma)$ possesses the properties of gradient domination and is ``almost" smooth. These properties make it possible to prove global convergence. Now we need to show that one step update guarantees a decrease in $f(K,\Sigma)$. To this end, we first prove that the update of $\Sigma$ is bounded.
\begin{lemma}[Boundedness of update $\Sigma$]\label{lemma: bound of sigma}
Let $(K,\Sigma) \in \Omega$ be given such that $0\prec \Sigma \preceq I$. Assume $\tau \in (0,2\sigma_{min}(R))$. Fix $a\in (0, \text{min} \{ \frac{\tau}{2\|R + \gamma P_K (B^\top B+D^\top D)\|}, \sigma_{min}(\Sigma)\})$ with $\eta_2 \leq \frac{2(1-\gamma) a^2}{\tau}$. Update of $\Sigma$ will have $ aI \prec \Sigma' \prec I$. 
\end{lemma}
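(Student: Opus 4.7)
\medskip

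\noindent\textbf{Proof sketch.} The plan is to reduce the matrix update to a per-eigenvalue scalar analysis via two-sided sandwich bounds on the quadratic term. Write $M_K := R + \gamma P_K(B^\top B + D^\top D)$, which is a symmetric positive definite $n \times n$ matrix with $\sigma_{\min}(M_K) \geq \sigma_{\min}(R)$. The update then reads
\begin{equation*}
\Sigma' \;=\; \Sigma \;+\; \tfrac{\eta_2 \tau}{2(1-\gamma)}\,\Sigma \;-\; \tfrac{\eta_2}{1-\gamma}\,\Sigma\, M_K\,\Sigma,
\end{equation*}
and the goal splits cleanly into $\Sigma' \prec I$ and $\Sigma' \succ aI$. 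The key observation is that for every vector $u$, $(\Sigma u)^\top M_K (\Sigma u)$ is pinched between $\sigma_{\min}(M_K)\|\Sigma u\|^2$ and $\|M_K\|\|\Sigma u\|^2$, yielding the L\"owner sandwich $\sigma_{\min}(M_K)\,\Sigma^2 \preceq \Sigma M_K \Sigma \preceq \|M_K\|\,\Sigma^2$. Both bounding matrices are polynomials in $\Sigma$ and therefore commute with $\Sigma$, so after substitution the inequalities on $\Sigma'$ become matrix polynomials in $\Sigma$ that can be checked eigenvalue by eigenvalue.

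For the upper bound, the hypothesis $\tau < 2\sigma_{\min}(R) \leq 2\sigma_{\min}(M_K)$ gives $M_K - \tfrac{\tau}{2}I \succ 0$, hence $\Sigma M_K\Sigma \succ \tfrac{\tau}{2}\Sigma^2$ strictly (using invertibility of $\Sigma$), so
\begin{equation*}
\Sigma' \;\prec\; \Sigma + \tfrac{\eta_2 \tau}{2(1-\gamma)}\,\Sigma(I-\Sigma).
\end{equation*}
On each eigenvalue $\sigma \in (0,1]$ of $\Sigma$ the right-hand side evaluates to $h(\sigma) = \sigma + \tfrac{\eta_2 \tau}{2(1-\gamma)}\,\sigma(1-\sigma)$, and $h(\sigma) - 1 = -(1-\sigma)\bigl[1 - \tfrac{\eta_2 \tau \sigma}{2(1-\gamma)}\bigr]$. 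Plugging in $\eta_2 \leq \tfrac{2(1-\gamma) a^2}{\tau}$ gives $\tfrac{\eta_2 \tau \sigma}{2(1-\gamma)} \leq a^2 \sigma \leq a^2 < 1$, so $h(\sigma) \leq 1$, i.e., the bounding matrix is $\preceq I$. Combining $\Sigma' \prec (\text{bounding matrix}) \preceq I$ gives $\Sigma' \prec I$.

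For the lower bound, use the other half of the sandwich together with the choice of $a$, which forces $\|M_K\| \leq \tfrac{\tau}{2a}$, to obtain
\begin{equation*}
\Sigma' \;\succeq\; \Sigma + \tfrac{\eta_2 \tau}{2(1-\gamma)}\,\Sigma(I - \Sigma/a).
\end{equation*}
On each eigenvalue $\sigma \in (a,1]$ of $\Sigma$, the right-hand side has eigenvalue $g(\sigma) = \sigma + \tfrac{\eta_2 \tau}{2a(1-\gamma)}\,\sigma(a-\sigma)$, and a short rearrangement yields $g(\sigma) - a = (\sigma - a)\bigl[1 - \tfrac{\eta_2 \tau \sigma}{2a(1-\gamma)}\bigr]$. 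Substituting the step-size bound gives $\tfrac{\eta_2 \tau \sigma}{2a(1-\gamma)} \leq a\sigma \leq a < 1$, so the bracket is strictly positive and $g(\sigma) > a$ for every eigenvalue of $\Sigma$; hence the bounding matrix is $\succ aI$ and therefore $\Sigma' \succ aI$. The main obstacle is purely bookkeeping: one must keep the strictness of the L\"owner comparisons aligned with the strict hypotheses ($\tau < 2\sigma_{\min}(R)$ and $a < \sigma_{\min}(\Sigma)$), and verify that the single step-size choice $\eta_2 \leq \tfrac{2(1-\gamma) a^2}{\tau}$ simultaneously closes both sides. The quadratic factor $a^2$ is precisely what is forced by the lower-bound worst case $\sigma \to 1$ — a factor $1/a$ comes from $\|M_K\| \leq \tau/(2a)$ and another factor of $a$ is eaten by the requirement $g(\sigma) > a$ — and this same bound is comfortably strong enough to close the upper-bound inequality.
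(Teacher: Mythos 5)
Your proof is correct, and it takes a noticeably different route from the paper's. Writing $M_K := R+\gamma P_K(B^\top B+D^\top D)$ as you do, the paper keeps the $\tfrac{\tau}{2}\Sigma^{-1}$ term in play: it first proves the pre-conjugation bound $aI \preceq \Sigma - \tfrac{\eta_2}{1-\gamma}\bigl(M_K - \tfrac{\tau}{2}\Sigma^{-1}\bigr) \prec I$, using monotonicity of the scalar map $y \mapsto y + \tfrac{\eta_2\tau}{2(1-\gamma)y}$ together with $a \le \tfrac{\tau}{2\|M_K\|}$ and $\tfrac{\tau}{2} \le \sigma_{\min}(R)$, and only then conjugates by $\Sigma$, reducing the claim to the step-size-free polynomial comparisons $a\Sigma^2 - \Sigma^3 + \Sigma \succeq aI$ and $\Sigma^2 - \Sigma^3 + \Sigma \preceq I$ (valid since $aI \prec \Sigma \preceq I$). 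You instead exploit at the outset that $\Sigma\Sigma^{-1}\Sigma = \Sigma$, so the entropy term cancels exactly and only the quadratic term $\Sigma M_K \Sigma$ needs bounding; your two-sided sandwich $\tfrac{\tau}{2}\Sigma^2 \prec \Sigma M_K\Sigma \preceq \|M_K\|\Sigma^2 \prec \tfrac{\tau}{2a}\Sigma^2$ (from $\tau < 2\sigma_{\min}(R)$ and $a < \tfrac{\tau}{2\|M_K\|}$, noting $P_K>0$ so $M_K\succeq R$) reduces everything to eigenvalue inequalities for polynomials in $\Sigma$, where the condition $\eta_2 \le \tfrac{2(1-\gamma)a^2}{\tau}$ enters transparently through $\tfrac{\eta_2\tau}{2(1-\gamma)} \le a^2$. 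Your version bypasses the paper's monotonicity detour for the $\Sigma^{-1}$ term, makes visible exactly where each hypothesis is consumed, and tracks strictness more carefully than the paper (whose final display only yields $\preceq I$); the paper's version has the mild aesthetic advantage that its final matrix comparisons do not involve $\eta_2$. One small remark applying to both proofs: the strict conclusion $\Sigma' \prec I$ implicitly needs $\eta_2 > 0$, since with $\eta_2 = 0$ and $\sigma_{\max}(\Sigma)=1$ one only gets $\Sigma' \preceq I$; this is an implicit assumption of the lemma rather than a gap in your argument.
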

The boundedness of the update to $\Sigma$ ensures that the cost function remains well-defined along the trajectory during the execution of RPG. Furthermore, we need to show that one step update guarantees a decrease in $f(K,\Sigma)$. 
\begin{lemma}[Contraction of RPG]\label{lemma: contraction of RPG} Let $(K,\Sigma) \in \Omega$ be given such that $0\prec \Sigma \preceq I$. Assume $\tau \in (0,2\sigma_{min}(R))$. Fix $a\in (0, \text{min} \{ \frac{\tau}{2\|R + \gamma P_K (B^\top B+D^\top D)\|}, \sigma_{min}(\Sigma)\})$. For $\eta_1 \leq \frac{1}{\|R + \gamma P_K(B^\top B+D^\top D)\|}$ and $
\eta_2 \leq \frac{2(1-\gamma)a^2}{\tau}$, and $0<\phi = min\{ \eta_1 \mu \frac{\sigma_{min}{(R)}}{S_{K^*,\Sigma^*}}, \frac{\eta_2 \sigma_{min} (R)}{2(1-\gamma)}\}<1,$
$$f(K', \Sigma') - f(K^*, \Sigma^*) \leq (1-\phi) (f(K,\Sigma) - f(K^*, \Sigma^*)).$$
\end{lemma}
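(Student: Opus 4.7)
The plan is to combine Lemma~\ref{lemma: almost smooth} (almost smoothness) with the upper half of Lemma~\ref{lemma: gradient domination} (gradient dominance), using the two step-size conditions to absorb higher-order corrections and convert a one-step descent into a contraction on $f - f^*$. First, for the descent in $K$: substituting $K' - K = -\eta_1 E_K$ into the $K$-block of Lemma~\ref{lemma: almost smooth}, the quadratic piece is $\eta_1^2 S_K\, E_K^\top(R + \gamma P_K(B^\top B + D^\top D)) E_K$, which the hypothesis $\eta_1 \le 1/\|R + \gamma P_K(B^\top B + D^\top D)\|$ bounds by $\eta_1 S_K E_K^\top E_K$; combined with the linear $-2\eta_1 S_K E_K^\top E_K$, the $K$-block contributes at most $-\eta_1 S_K E_K^\top E_K$ to $f(K',\Sigma') - f(K,\Sigma)$.

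Next, for the descent in $\Sigma$: setting $M := (R + \gamma P_K(B^\top B + D^\top D)) - \tfrac{\tau}{2}\Sigma^{-1}$ and using the update $\Sigma' - \Sigma = -\tfrac{\eta_2}{1-\gamma}\Sigma M \Sigma$, Lemma~\ref{lemma: bound of sigma} guarantees $\Sigma' \in (aI, I)$, so Lemma~\ref{lemma: almost smooth} applies. The identity $\Sigma^{-1}\Sigma' - I = -\tfrac{\eta_2}{1-\gamma} M \Sigma$ turns the linear trace into $-\tfrac{\eta_2}{(1-\gamma)^2}\text{Tr}(M \Sigma M \Sigma)$ and the curvature correction $\tfrac{\tau m}{2(1-\gamma)}\text{Tr}((\Sigma^{-1}\Sigma' - I)^2)$ into $\tfrac{\tau m \eta_2^2}{2(1-\gamma)^3}\text{Tr}(M \Sigma M \Sigma)$. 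The bound $\eta_2 \le 2(1-\gamma)a^2/\tau$, together with the fact that $m$ scales like $-\log(a)/a^2$ on the admissible range, forces $\tfrac{\tau m \eta_2}{1-\gamma} \le 1$, so the correction is at most half the drift and the $\Sigma$-block contributes at most $-\tfrac{\eta_2}{2(1-\gamma)^2}\text{Tr}(M \Sigma M \Sigma) \le -\tfrac{\eta_2 a^2}{2(1-\gamma)^2}\text{Tr}(M^2)$, where the last inequality uses $\text{Tr}(M \Sigma M \Sigma) = \|\Sigma^{1/2} M \Sigma^{1/2}\|_F^2 \ge \sigma_{\min}(\Sigma)^2 \|M\|_F^2$ and $\sigma_{\min}(\Sigma) \ge a$.

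For the contraction step: adding the two contributions gives $f(K,\Sigma) - f(K',\Sigma') \ge \eta_1 S_K E_K^\top E_K + \tfrac{\eta_2 a^2}{2(1-\gamma)^2}\text{Tr}(M^2)$, while Lemma~\ref{lemma: gradient domination}, rewritten using $\nabla_K f = E_K S_K$ and $\nabla_\Sigma f = M/(1-\gamma)$, reads $f(K,\Sigma) - f(K^*,\Sigma^*) \le \tfrac{S_K^2}{\mu \sigma_{\min}(R)} E_K^\top E_K + \tfrac{1}{(1-\gamma)\sigma_{\min}(R)}\text{Tr}(M^2)$. Replacing $S_K$ by $S_{K^*,\Sigma^*}$ (the state occupation at the optimum, which is what the proof of gradient dominance actually produces) and matching the two inequalities term by term yields $f(K,\Sigma) - f(K',\Sigma') \ge \phi\,(f(K,\Sigma) - f(K^*,\Sigma^*))$ with the stated $\phi = \min\{\eta_1 \mu \sigma_{\min}(R)/S_{K^*,\Sigma^*},\ \eta_2 \sigma_{\min}(R)/(2(1-\gamma))\}$, which rearranges to the claim.

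The main obstacle I anticipate is the $\Sigma$-side bookkeeping: the curvature coefficient $m$ grows roughly like $-\log(a)/a^2$ as $a \to 0$, so the curvature correction would dominate the drift unless $\eta_2$ is shrunk in lockstep with $a^2$. The scaling $\eta_2 \le 2(1-\gamma)a^2/\tau$ is tight precisely for this reason, and it has to be simultaneously compatible with Lemma~\ref{lemma: bound of sigma} so that $\Sigma'$ stays in the admissible band $(aI, I)$ for the next iteration. A secondary bookkeeping point is aligning the $S_K$ in the one-step decrease with the $S_{K^*,\Sigma^*}$ in the target rate, which relies on the state-occupation identity underlying the gradient-dominance upper bound rather than on Lemma~\ref{lemma: gradient domination} as stated.
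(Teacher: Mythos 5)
Your overall strategy is the same as the paper's: plug the RPG updates into the almost-smoothness expansion of Lemma~\ref{lemma: almost smooth}, use $\eta_1 \le 1/\|R+\gamma P_K(B^\top B+D^\top D)\|$ to reduce the $K$-block to $-\eta_1 S\, E_K^\top E_K$, use the $\eta_2$ condition (together with Lemma~\ref{lemma: bound of sigma}) to make the curvature correction at most half of the linear drift in $\Sigma$, and then compare the one-step decrease term by term with the decomposition of $f(K,\Sigma)-f(K^*,\Sigma^*)$ produced inside the proof of Lemma~\ref{lemma: gradient domination} (with $S_{K^*,\Sigma^*}$ on the $K$-part). The genuine gap is in the final matching constant for the $\Sigma$-block. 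Writing $M=(R+\gamma P_K(B^\top B+D^\top D))-\tfrac{\tau}{2}\Sigma^{-1}$, your own estimates give a descent of at least $\tfrac{\eta_2 a^2}{2(1-\gamma)^2}\mathrm{Tr}(M^2)$, while your gradient-dominance bound on the $\Sigma$-part of the gap is $\tfrac{1}{(1-\gamma)\sigma_{\min}(R)}\mathrm{Tr}(M^2)$; matching these yields the rate $\tfrac{\eta_2 a^2\sigma_{\min}(R)}{2(1-\gamma)}$, not the $\tfrac{\eta_2 \sigma_{\min}(R)}{2(1-\gamma)}$ in the stated $\phi$ --- the $a^2$ is silently dropped when you assert "term by term matching yields the stated $\phi$". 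For comparison, the paper keeps the descent in terms of $\mathrm{Tr}((M\Sigma)^2)$ and bounds the gap by $\tfrac{1}{a(1-\gamma)\sigma_{\min}(R)}\mathrm{Tr}((M\Sigma)^2)$ (via $\Sigma^{-1}\preceq a^{-1}I$), so it loses only a single factor of $a$, and that $a$-dependent rate $\tfrac{\eta_2 a\sigma_{\min}(R)}{2(1-\gamma)}$ is what the proof of Theorem~\ref{theorem: RPG} actually invokes ($\phi\le \tau^2\eta_1^3\sigma_{\min}(R)$). So the $a$-free $\phi$ in the statement does not follow from your argument (nor, strictly, from the paper's own proof); at minimum you should either state the contraction with your $a^2$-dependent factor, or restructure the $\Sigma$ comparison as above to recover the sharper single-$a$ constant.

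A secondary soft spot: your justification that the curvature correction is at most half the drift ("$m$ scales like $-\log(a)/a^2$, so $\eta_2\le 2(1-\gamma)a^2/\tau$ forces $\tau m \eta_2/(1-\gamma)\le 1$") does not close as written, since with that scaling one only gets $\tau m\eta_2/(1-\gamma)\lesssim -2\log a$, which exceeds $1$ for small $a$. You need an additional ingredient --- e.g.\ the constraint $a\le \tfrac{\tau}{2\|R+\gamma P_K(B^\top B+D^\top D)\|}$ used in the paper's chain $\eta_2\le \tfrac{2(1-\gamma)a^2}{\tau}\le\cdots\le\tfrac{1-\gamma}{\tau m}$, or an explicit bound on $m$ over the admissible band $aI\prec\Sigma,\Sigma'\prec I$ --- to legitimately absorb the correction into half the drift.
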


\begin{lemma}[Lower bound of $f(K,\Sigma)$] \label{lemma: lower bodun of f}
For any $(K,\Sigma) \in \Omega$, $ f(K,\Sigma)$ has the following lower bound:
$$
f(K,\Sigma) \geq \mu P_K + \frac{\tau k}{2(1-\gamma)}log(\frac{\sigma_{min}(R)}{\pi \tau}).
$$   
\end{lemma}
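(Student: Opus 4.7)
The plan is to isolate the contribution of $\Sigma$ to $f(K,\Sigma)$ and reduce the bound to a scalar convex minimization. By Lemma~\ref{lemma: optimization}, we have the clean decomposition $f(K,\Sigma) = \mu P_K + q_{K,\Sigma}$, so the target inequality reduces immediately to showing
$$q_{K,\Sigma} \;\geq\; \frac{\tau k}{2(1-\gamma)}\log\!\left(\frac{\sigma_{\min}(R)}{\pi\tau}\right).$$
With $K$ (and therefore $P_K$) held fixed, set $M := R + \gamma P_K(B^\top B + D^\top D)$ and treat the numerator of $q_{K,\Sigma}$ as a function of $\Sigma$ alone:
$$g(\Sigma) \;:=\; \operatorname{Tr}(\Sigma M) - \tfrac{\tau}{2}\log\det\Sigma,$$
up to constants independent of $\Sigma$.

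Next I would minimize $g$ over positive-definite $\Sigma$. It is strictly convex (a linear term plus the convex functional $-\log\det$), and the first-order condition $M - \frac{\tau}{2}\Sigma^{-1} = 0$ gives the unconstrained minimizer $\Sigma^\star = \tfrac{\tau}{2} M^{-1}$. Plugging back yields $\operatorname{Tr}(\Sigma^\star M) = \tfrac{\tau k}{2}$ and $\log\det\Sigma^\star = k\log(\tau/2) - \log\det M$. Combining with the remaining constants $-\tfrac{\tau k}{2} - \tfrac{\tau k}{2}\log(2\pi)$ from the definition of $q_{K,\Sigma}$, the terms involving $\tau$ and $\pi$ collapse neatly, and one obtains
$$q_{K,\Sigma} \;\geq\; \frac{1}{1-\gamma}\left[\frac{\tau}{2}\log\det M \;-\; \frac{\tau k}{2}\log(\pi\tau)\right].$$

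The last step is to lower bound $\log\det M$. Since $P_K \geq 0$ (it is the expected discounted cost-to-go of the scalar state dynamics, or equivalently the nonnegative solution of (\ref{eq: P_k = ...})) and $B^\top B + D^\top D \succeq 0$, we have $M \succeq R$. Monotonicity of $\det$ on the positive-definite cone then gives $\det M \geq \det R \geq \sigma_{\min}(R)^k$, and substituting this into the previous display produces exactly $\tfrac{\tau k}{2(1-\gamma)}\log(\sigma_{\min}(R)/(\pi\tau))$, completing the proof.

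There is no real obstacle here beyond bookkeeping: the minimization of $\operatorname{Tr}(\Sigma M) - \tfrac{\tau}{2}\log\det\Sigma$ is a standard calculation in information theory, and $M \succeq R$ is immediate. The one subtle point worth double-checking is the sign tracking between $\log\det\Sigma^\star$ and $\log\det M^{-1}$, and the fact that the minimum is unconstrained in the positive-definite cone (so the lower bound is valid for every admissible $\Sigma$, not only the optimal one used in Theorem~\ref{thm: optimal val func and policy}).
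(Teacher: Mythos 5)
Your proposal is correct and follows essentially the same route as the paper: split $f(K,\Sigma)=\mu P_K+q_{K,\Sigma}$, minimize the convex functional $\operatorname{Tr}(\Sigma\,\cdot)-\tfrac{\tau}{2}\log\det\Sigma$ over $\Sigma\succ 0$, and invoke $\sigma_{\min}(R)$ together with $\gamma P_K(B^\top B+D^\top D)\succeq 0$. The only (harmless) difference is the order of operations — the paper first replaces $R+\gamma P_K(B^\top B+D^\top D)$ by $\sigma_{\min}(R)I$ inside the trace and then minimizes at $\Sigma=\tfrac{\tau}{2\sigma_{\min}(R)}I$, whereas you minimize exactly at $\Sigma^\star=\tfrac{\tau}{2}M^{-1}$ and only then bound $\det M\geq\sigma_{\min}(R)^k$ — and both yield the identical final constant.
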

With the above lemmas established, we are now ready to prove the following theorem.
\begin{theorem}[Global convergence of RPG] \label{theorem: RPG} Given $\tau\in(0,2\sigma(R)]$, $\epsilon\in(0,1)$ take $(K,\Sigma) \in \Omega$ such that $\Sigma \preceq I$. For
$$
\eta_1 = \min \bigg\{ \frac{1}{R + \frac{\gamma}{\mu} \|B^\top B+D^\top D\| \left(f(K) -\frac{\tau K}{2(1-\gamma)}log(\frac{\sigma_{min}(R)}{\pi \tau})\right)}, \frac{2}{\tau \sigma_{min}(\Sigma)}\bigg\}, 
$$
$\eta_2 = 2\tau(1-\gamma)\eta_1^2$, and for 
$$
N \geq \max \left \{ \frac{\|S_{K^*, \Sigma^*}\| }{2\mu \eta_1\sigma_{min}(R)},\frac{1}{\tau^2\eta_1^3 \sigma_{min}(R)}\right\}\log \frac{f(K,\Sigma) - f(K^*,\Sigma^*)}{\epsilon},
$$
the Regularized Policy Gradient (RPG) has the following performance bound:
$$
f(K^{(N)}, \Sigma^{(N)}) - f(K^*, \Sigma^*) \leq \epsilon.
$$
\end{theorem}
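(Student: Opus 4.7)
The plan is to apply Lemma \ref{lemma: contraction of RPG} (one-step contraction of RPG) inductively over $N$ iterations, so that the $(1-\phi)$ factor compounds into $(1-\phi)^N \leq e^{-\phi N}$, and then solve for the smallest $N$ making $e^{-\phi N}(f(K^{(0)},\Sigma^{(0)}) - f(K^*,\Sigma^*)) \leq \epsilon$. The substantive task is not this final arithmetic but verifying that the step sizes $(\eta_1,\eta_2)$ prescribed in the theorem really do satisfy the hypotheses of Lemma \ref{lemma: contraction of RPG} at every iteration $n = 0,1,\dots,N-1$, using only the value $f(K^{(0)},\Sigma^{(0)})$ (since the chosen step sizes are fixed).

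First I would reconcile the $\eta_1$ in the theorem with the contraction requirement $\eta_1 \leq \frac{1}{\|R+\gamma P_K(B^\top B+D^\top D)\|}$. By Lemma \ref{lemma: lower bodun of f}, $P_K \leq \mu^{-1}\bigl(f(K,\Sigma) - \frac{\tau k}{2(1-\gamma)}\log(\sigma_{\min}(R)/(\pi\tau))\bigr)$, which upper bounds $\|R+\gamma P_K(B^\top B+D^\top D)\|$ by a function of $f(K,\Sigma)$ alone; that is precisely the first term in the $\min$ defining $\eta_1$. Next, with $\eta_2 = 2\tau(1-\gamma)\eta_1^2$, one checks that picking $a = \tau\eta_1$ (or equivalently, using the second term $\frac{2}{\tau\sigma_{\min}(\Sigma)}$ of $\eta_1$) ensures $\eta_2 \leq \frac{2(1-\gamma)a^2}{\tau}$ and $a \in (0,\min\{\tau/(2\|R+\gamma P_K(B^\top B+D^\top D)\|), \sigma_{\min}(\Sigma)\})$, as required by Lemma \ref{lemma: bound of sigma} and Lemma \ref{lemma: contraction of RPG}. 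Lemma \ref{lemma: bound of sigma} then delivers $aI \prec \Sigma^{(n+1)} \preceq I$, keeping the iterate inside $\Omega$.

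For the induction step, assume the hypotheses hold at iterate $n$. Lemma \ref{lemma: contraction of RPG} gives
\begin{equation*}
f(K^{(n+1)},\Sigma^{(n+1)}) - f(K^*,\Sigma^*) \;\leq\; (1-\phi)\bigl(f(K^{(n)},\Sigma^{(n)}) - f(K^*,\Sigma^*)\bigr),
\end{equation*}
which in particular implies $f(K^{(n+1)},\Sigma^{(n+1)}) \leq f(K^{(n)},\Sigma^{(n)}) \leq f(K^{(0)},\Sigma^{(0)})$. This monotonicity is exactly what lets the fixed step size $\eta_1$ (chosen from $f(K^{(0)},\Sigma^{(0)})$) continue to satisfy $\eta_1 \leq 1/\|R+\gamma P_{K^{(n)}}(B^\top B+D^\top D)\|$, closing the induction. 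Iterating the contraction gives $f(K^{(N)},\Sigma^{(N)}) - f(K^*,\Sigma^*) \leq (1-\phi)^N(f(K,\Sigma)-f(K^*,\Sigma^*)) \leq e^{-\phi N}(f(K,\Sigma)-f(K^*,\Sigma^*))$, and solving $e^{-\phi N}(f(K,\Sigma)-f(K^*,\Sigma^*)) \leq \epsilon$ gives $N \geq \frac{1}{\phi}\log\frac{f(K,\Sigma)-f(K^*,\Sigma^*)}{\epsilon}$. Substituting $\phi = \min\{\eta_1\mu\sigma_{\min}(R)/S_{K^*,\Sigma^*},\;\eta_2\sigma_{\min}(R)/(2(1-\gamma))\}$ and $\eta_2 = 2\tau(1-\gamma)\eta_1^2$ turns $1/\phi$ into the claimed $\max\{\|S_{K^*,\Sigma^*}\|/(2\mu\eta_1\sigma_{\min}(R)),\;1/(\tau^2\eta_1^3\sigma_{\min}(R))\}$.

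The main obstacle, I expect, is the bookkeeping that makes the step-size choice self-consistent along the trajectory: one must chain Lemma \ref{lemma: lower bodun of f} (to bound $P_K$ by $f(K,\Sigma)$), Lemma \ref{lemma: bound of sigma} (to keep $\Sigma^{(n)}$ bounded away from $0$ and from $I$ so that the contraction lemma applies at the next step), and the monotonicity of $\{f(K^{(n)},\Sigma^{(n)})\}$ produced by the very lemma whose hypotheses we are trying to preserve. Once this invariant-style argument is set up, the convergence rate follows from the geometric decay and a single logarithm inversion.
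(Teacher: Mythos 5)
Your proposal is correct and follows essentially the same route as the paper: bound $P_K$ via Lemma \ref{lemma: lower bodun of f} to show the prescribed $\eta_1$ meets the contraction lemma's step-size condition, set $a=\tau\eta_1$ so that $\eta_2=2\tau(1-\gamma)\eta_1^2\leq 2(1-\gamma)a^2/\tau$, induct using the monotone decrease of $f(K^{(n)},\Sigma^{(n)})$ to keep the fixed step sizes admissible, and invert the geometric decay to get $N$. The only (cosmetic) difference is in the final substitution for $1/\phi$: the paper obtains the $\tau^2\eta_1^3\sigma_{\min}(R)$ term by using the bound $\phi\leq \eta_2 a\,\sigma_{\min}(R)/(2(1-\gamma))$ with $a=\tau\eta_1$, whereas your direct substitution of $\eta_2\sigma_{\min}(R)/(2(1-\gamma))$ would give $\tau\eta_1^2\sigma_{\min}(R)$ — the same bookkeeping-level discrepancy already present between the statement and proof of Lemma \ref{lemma: contraction of RPG}.
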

\begin{proof}
From lemma \ref{lemma: lower bodun of f} we have 
\begin{align*}
 \frac{1}{R + \gamma P_K(B^\top B+D^\top D)} &\geq \frac{1}{\|R\| + \gamma P_K\|B^\top B+D^\top D\|}\\
 &\geq \frac{1}{R + \frac{\gamma}{\mu} \|B^\top B+D^\top D\| \left(f(K) -\frac{\tau k}{2(1-\gamma)}log(\frac{\sigma_{min}(R)}{\pi \tau})\right) }\\
 & \geq \eta_1.
\end{align*}

Define $a = \tau \eta_1 \leq \frac{\tau}{\|R\| + \gamma P_K\|B^\top B+D^\top D\|} $. We will prove this theorem by induction. 
At $t = 0$,
we have $\eta_1 \leq \frac{1}{R + \gamma P_K(B^\top B+D^\top D)}, \eta_2 = 2\tau(1-\gamma)\eta_1^2 \leq \frac{2(1-\gamma)a^2}{\tau}$. Then we can apply lemma \ref{lemma: contraction of RPG} such that  $$ f(K^{(1)}, \Sigma^{(1)}) - f(K^*, \Sigma^*) \leq (1-\phi) (f(K,\Sigma) - f(K^*, \Sigma^*)),$$
and $aI\preceq\Sigma^{(1)} \preceq I$, where $\phi$ is defined in Lemma \ref{lemma: contraction of RPG}.

Assume the theorem holds at time $t$, then we have $f(K^{(n)},\Sigma^{(n)}) \leq f(K^{(n-1)},\Sigma^{(n-1)}) \leq f(K,\Sigma)$, and $aI\prec \Sigma^{(t)} \prec I$.
Then we have
\begin{align*}
\eta_1 &\leq \frac{1}{R + \frac{\gamma}{\mu} \|B^\top B+D^\top D\| \left(f(K,\Sigma) -\frac{\tau k}{2(1-\gamma)}log(\frac{\sigma_{min}(R)}{\pi \tau})\right)}\\
&\leq \frac{1}{R + \frac{\gamma}{\mu} \|B^\top B+D^\top D\| \left(f(K^{(n)},\Sigma^{(n)}) -\frac{\tau k}{2(1-\gamma)}log(\frac{\sigma_{min}(R)}{\pi \tau})\right)},
\end{align*}
and $a\leq \frac{\tau}{\|R\| + \gamma P_{K^{(n)}}\|B^\top B+D^\top D\|}$. Now Lemma \ref{lemma: contraction of RPG} can be applied such that 

$$ 
f(K^{(n+1)}, \Sigma^{(n+1)}) - f(K^*, \Sigma^*) \leq (1-\phi) (f(K,\Sigma) - f(K^{(n)}, \Sigma^{(n)})). 
$$
and $ aI\preceq\Sigma^{(n+1)}\preceq I$ The induction is complete. Finally, observe that $0<\phi \leq \frac{2\mu \eta_1\sigma_{min}(R)}{\|S_{K^*, \Sigma^*}\| } < 1$ and $\phi \leq \frac{\eta_2a\sigma_{min}(R)}{2(1-\gamma)} = \tau^2\eta_1^3 \sigma_{min}(R)$. The proof is completed.

\end{proof}

\section{Global Convergence of Sample Based Regularized Policy Gradient}\label{section: model free}
In this section, we consider a model in which all parameters, \(A, B, C, D, Q, R\), as well as the exact value of $f(K,\Sigma)$, are unknown. The only available information pertains to the form of the system dynamics and approximate values of the cost trajectories (i.e., \( \sum_{t=0}^{l-1} \gamma^t ( Q x_t^2 + u_t^{T} R u_t + \tau \log \pi (u_t|x_t) ) \), where \( l < \infty \) serves as the rollout length in the simulation environment). Employing a zero-order optimization techniques, we propose the Sample Based Regularized Policy Gradient (SB-RPG) method for our stochastic optimal control problems. This section demonstrates that, even under settings with unknown parameters, our approach achieves globally optimal solutions with high probability. The pseudocode for SB-RPG is provided in Algorithm \ref{alg:model free RPG}, where $\widehat \nabla_K, \widehat \nabla_\Sigma,$ and $ \widehat S$ denote sample based estimate of $\nabla_Kf(K,\Sigma), \nabla_\Sigma f(K,\Sigma)$ and $S_{K,\Sigma}$, respectively.
\begin{algorithm}[h]
\caption{Pseudocode code of Sample Based Regularized Policy Gradient (SB-RPG)}\label{alg:model free RPG}
\begin{algorithmic}
\State \textbf{Input:} initial policy $(K,\Sigma) \in \Omega$, updating  steps $N$, policy estimate trajectories $M$, roll out length $l$, smoothing parameters $r_1$ and $r_2$.
\For{$j = 1, \cdots N$}
		\For{$i = 1, \cdots M$}  
		\State Sample a policy $K_{i} = K+U_{i}$, where $U_{i}$ is drawn uniformly at random over $\| U_{i}\|_{F} =  r_1$.
		\State Simulate $f^{(l)}_i(K_i,\Sigma) = \sum_{t=0}^{\ell-1}\gamma^t \left( Q x_t^2 + u_t^{T} R u_t + \tau \log \pi (u_t|x_t)  \right)$  and $\displaystyle S_i^{(l)} = \sum_{t=0}^{l-1}\gamma^t x_t^2$ under policy $(K_i,\Sigma)$ for $l$ steps. 
		\EndFor
		\vspace{0.1cm}
\State Estimate: $\widehat{\nabla}_K = \frac{1}{M} \sum_{i=1}^M \frac{n}{r_1^2} f^{(l)}_i(K,\Sigma) U_{i}$, $\widehat{S} = \frac{1}{M}\sum_{i=1}^M S_i^{(l)}$
\State Update: $K \leftarrow K - \eta_1{\widehat{\nabla}_K}/{\widehat{S}}$
		\For{$i= 1, \cdots M$}  
		\State Sample a policy $\Sigma_{i} = (L+V_{i})(L+V_i)^\top$, where $V_{i}$ is drawn uniformly at random over $\| V_{i}\|_{F} =  r_2$.
		\State Simulate the cost of $f^{(l)}_i(K,\Sigma_i) = \sum_{t=0}^{l-1}\gamma^t \left( Q x_t^2 + u_t^{T} R u_t + \tau \log \pi (u_t|x_t)  \right)$  under policy $(K, \Sigma_i)$ for $ l $ steps 
		\EndFor
		\vspace{0.1cm}
\State Estimate: $\widehat{\nabla}_\Sigma = \frac{L^{-1}}{2M}\sum_{i=1}^M \frac{n(n+1)}{2r_2^2} f^{(l)}_i(K,\Sigma_i) V_{i}$
\State Update: $\Sigma \leftarrow \Sigma - \eta_2 \Sigma \widehat{\nabla}_\Sigma \Sigma$
\,
\EndFor
\end{algorithmic}
\end{algorithm}

To prove global convergence of SB-RPG, we need to  prove step by step that all sample-based estimates, under some condition on $l,M$ and $r$, 
can be $\epsilon$ close to the true value with high probability. To this end, perturbation analysis and several other technical tools are essential. We present the following lemmas, with proofs provided in the appendix. 
\begin{lemma}\label{lemma:upper bound of S_K} For any $ (K,\Sigma)\in \Omega$, $S_{K,\Sigma}$ can be written as
    $$S_{K,\Sigma}= \mu\sum_{t=0}^{\infty}(\gamma V_k)^t + \frac{Tr(\Sigma(B^\top B + D^\top D))}{1-V_K} \left[\frac{1}{1-\gamma} - \frac{1}{1-\gamma V_K} \right].$$
Furthermore, $S_{K,\Sigma}$ has the following bound 
    $$\frac{\mu}{1-\gamma V_K} \leq S_K \leq  \frac{f(K, \Sigma)-{(1-\gamma)^{-1}}\left[ Tr(\Sigma R) - \frac{\tau}{2} \left( n + log(2\pi)^n|\Sigma| \right)\right]}{Q}.$$
\end{lemma}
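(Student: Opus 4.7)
The proof plan splits the lemma into three pieces: (i) deriving the closed-form expression for $S_{K,\Sigma}$, (ii) establishing the lower bound, and (iii) establishing the upper bound via the value function identity.

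First I would write down the one-step second-moment recursion for the state under the Gaussian policy $u_t = -Kx_t + \xi_t$ with $\xi_t \sim \mathcal{N}(0,\Sigma)$. Substituting into \eqref{eq:state} gives $x_{t+1} = (A-BK)x_t + (Cw_t^x - w_t^u DK)x_t + (B + w_t^u D)\xi_t$, and the mutual independence of $x_t$, $w_t^x$, $w_t^u$, $\xi_t$ together with the zero-mean assumption kills the cross terms. Using $\mathbb{E}[(w_t^x)^2]=1$, $\mathbb{E}[(w_t^u)^\top w_t^u] = I_n$, and the trace identity $\mathbb{E}[\xi_t^\top M \xi_t] = \text{Tr}(M\Sigma)$, this collapses to
\begin{equation*}
\mathbb{E}x_{t+1}^2 = V_K\,\mathbb{E}x_t^2 + \text{Tr}\bigl(\Sigma(B^\top B + D^\top D)\bigr),
\end{equation*}
where $V_K = (A-BK)^2 + C^2 + K^\top D^\top D K = A^2 + C^2 + K^\top(B^\top B+D^\top D)K - 2ABK$, precisely the coefficient appearing in \eqref{eq: P_k = ...}. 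This identification is the main conceptual step and ensures consistency with $P_K = (Q+K^\top R K)/(1-\gamma V_K)$.

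Next I would iterate the recursion to obtain the explicit form $\mathbb{E}x_t^2 = V_K^t\mu + \text{Tr}\bigl(\Sigma(B^\top B+D^\top D)\bigr)\frac{1-V_K^t}{1-V_K}$, and then compute $S_{K,\Sigma} = \sum_{t=0}^{\infty}\gamma^t \mathbb{E}x_t^2$ by summing two geometric series (admissibility $\gamma V_K < 1$ guarantees convergence). Splitting $1 - V_K^t$ between the two sums yields the stated formula. For the lower bound, I would simplify the difference of geometric sums using the identity $\frac{1}{1-\gamma} - \frac{1}{1-\gamma V_K} = \frac{\gamma(1-V_K)}{(1-\gamma)(1-\gamma V_K)}$, which cancels the $(1-V_K)$ denominator and shows the second term equals $\frac{\gamma\,\text{Tr}(\Sigma(B^\top B + D^\top D))}{(1-\gamma)(1-\gamma V_K)} \geq 0$; dropping it leaves $\frac{\mu}{1-\gamma V_K}$.

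For the upper bound I would first rewrite the cancellation from the previous step as $S_{K,\Sigma} = \frac{1}{1-\gamma V_K}\bigl[\mu + \frac{\gamma\,\text{Tr}(\Sigma(B^\top B + D^\top D))}{1-\gamma}\bigr]$. Then I would use Lemma \ref{lemma: optimization} to compute the claimed right-hand side directly: substituting $P_K\mu + q_{K,\Sigma}$ for $f(K,\Sigma)$, the $\text{Tr}(\Sigma R)$ and $\log$-determinant terms cancel cleanly, yielding $P_K\bigl[\mu + \frac{\gamma\,\text{Tr}(\Sigma(B^\top B + D^\top D))}{1-\gamma}\bigr]$. The bound $Q \cdot S_{K,\Sigma} \leq P_K\bigl[\mu + \frac{\gamma\,\text{Tr}(\Sigma(B^\top B + D^\top D))}{1-\gamma}\bigr]$ then reduces to the trivial comparison $\frac{Q}{1-\gamma V_K} \leq \frac{Q + K^\top R K}{1-\gamma V_K} = P_K$, which holds since $R \succ 0$.

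The main obstacle is purely bookkeeping: correctly identifying $V_K$ as the closed-loop multiplicative factor and keeping track of the scalar-versus-vector/matrix dimensions of $A,B,C,D,K$ when expanding the one-step square. Once the recursion is set up and the algebraic identity $\frac{1}{1-\gamma} - \frac{1}{1-\gamma V_K} = \frac{\gamma(1-V_K)}{(1-\gamma)(1-\gamma V_K)}$ is in hand, both bounds follow without any difficult estimate.
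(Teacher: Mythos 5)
Your proposal is correct and follows essentially the same route as the paper: derive the second-moment recursion $\mathbb{E}x_{t+1}^2 = V_K\,\mathbb{E}x_t^2 + \mathrm{Tr}(\Sigma(B^\top B+D^\top D))$, iterate and sum the geometric series to get the closed form (hence the lower bound, which the paper leaves implicit), and obtain the upper bound by relaxing $Q+K^\top RK$ to $Q$. The only cosmetic difference is that you reach the upper-bound identity via $f(K,\Sigma)=P_K\mu+q_{K,\Sigma}$ and $P_K=(Q+K^\top RK)/(1-\gamma V_K)$, whereas the paper writes $f(K,\Sigma)=(Q+K^\top RK)S_{K,\Sigma}+(1-\gamma)^{-1}\bigl[\mathrm{Tr}(\Sigma R)-\tfrac{\tau}{2}(n+\log(2\pi)^n|\Sigma|)\bigr]$ directly and solves for $S_{K,\Sigma}$; these are algebraically equivalent.
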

\begin{lemma}[Approximate $f(K,
\Sigma)$ and $S_{K,\Sigma}$ with any desired accuracy]\label{lemma: approximate fk}
 For any $K,\Sigma$ with $f(K,\Sigma)<\infty$, let $f^{(l)}(K,\Sigma) = \mathbb{E} \left[\sum_{t=0}^{l-1}\gamma^t \left( Q x_t^2 + u_t^{T} R u_t + \tau \log \pi (u_t|x_t)  \right)\right]$ and $ S^{(l)}_{K,\Sigma} =  \sum_{t=0}^{l-1} \gamma^t \mathbb{E}x_t^2$. we have

 (i) $S_{K,\Sigma} - S^{(l)}_{K,\Sigma} \leq \epsilon
$, if 
$$
l\geq \frac{\log \epsilon - \log S_{K,\Sigma}}{log \gamma}.
$$

(ii) $f(K,\Sigma) - f^{(l)}(K,\Sigma)\leq \epsilon$, if
$$
l\geq \frac{\log \epsilon - \log \left[(Q + K^\top RK)S_{K,\Sigma} + \frac{ Tr(\Sigma R) - \frac{\tau}{2}(n+ log(2\pi)^n|\Sigma|)}{1-\gamma }\right]}{log \gamma}.
$$
\end{lemma}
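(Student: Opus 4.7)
The plan is to express each truncation error explicitly as a geometric-series tail by exploiting the closed-form representation established in Lemma \ref{lemma:upper bound of S_K} and the expected-cost decomposition induced by the Gaussian policy $\pi(u|x) = \mathcal{N}(-Kx, \Sigma)$.

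For part (i), I would begin from the definition
\[
S_{K,\Sigma} - S^{(l)}_{K,\Sigma} = \sum_{t=l}^{\infty} \gamma^t \mathbb{E}[x_t^2]
\]
and substitute the closed form $\mathbb{E}[x_t^2] = V_K^t \mu + \text{Tr}(\Sigma(B^\top B + D^\top D))\frac{1-V_K^t}{1-V_K}$, obtained by iterating the second-moment recursion $\mathbb{E}[x_{t+1}^2] = V_K \mathbb{E}[x_t^2] + \text{Tr}(\Sigma(B^\top B + D^\top D))$ induced by \eqref{eq:state}. The tail then splits into two geometric series, in $\gamma$ and in $\gamma V_K$, each summable in closed form. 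Comparing to the explicit expression for $S_{K,\Sigma}$ from Lemma \ref{lemma:upper bound of S_K} and using $(\gamma V_K)^l \leq \gamma^l$ together with the identity $\frac{1}{1-\gamma} - \frac{1}{1-\gamma V_K} = \frac{\gamma(1-V_K)}{(1-\gamma)(1-\gamma V_K)}$, the tail collapses to at most $\gamma^l S_{K,\Sigma}$. Enforcing $\gamma^l S_{K,\Sigma} \leq \epsilon$ and taking logarithms (with $\log \gamma < 0$) produces the stated threshold on $l$.

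For part (ii), I would first compute the expected per-step cost under $u_t \sim \mathcal{N}(-K x_t, \Sigma)$:
\[
\mathbb{E}\!\left[Q x_t^2 + u_t^\top R u_t + \tau \log \pi(u_t | x_t)\right] = (Q + K^\top R K)\,\mathbb{E}[x_t^2] + \text{Tr}(\Sigma R) - \tfrac{\tau}{2}\bigl(n+\log((2\pi)^n |\Sigma|)\bigr),
\]
using $\mathbb{E}[u_t^\top R u_t \mid x_t] = (K x_t)^\top R (K x_t) + \text{Tr}(R\Sigma)$ and the Gaussian differential-entropy identity already invoked in \eqref{eqn:neg_gua_entropy}. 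Separating the state-dependent part from the state-independent part, the truncation error decomposes as
\[
f(K,\Sigma) - f^{(l)}(K,\Sigma) = (Q + K^\top R K)\bigl(S_{K,\Sigma} - S^{(l)}_{K,\Sigma}\bigr) + \Bigl[\text{Tr}(\Sigma R) - \tfrac{\tau}{2}(n+\log((2\pi)^n |\Sigma|))\Bigr]\frac{\gamma^l}{1-\gamma}.
\]
Bounding the first summand via part (i) and factoring out $\gamma^l$ yields the claim; requiring the resulting bracket times $\gamma^l$ to be at most $\epsilon$ and taking logarithms gives the stated threshold.

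The main obstacle I anticipate is the algebraic simplification in (i) that reduces the two-geometric-series tail to exactly $\gamma^l S_{K,\Sigma}$: the fragments in $\gamma$ and in $\gamma V_K$ must recombine to match the closed form of Lemma \ref{lemma:upper bound of S_K}, and some care is required regarding the sign of $\mu - \text{Tr}(\Sigma(B^\top B + D^\top D))/(1-V_K)$ so that $(\gamma V_K)^l \leq \gamma^l$ is applied in the correct direction. A secondary subtlety in (ii) is that the bracketed quantity inside the outer logarithm in the threshold for $l$ must be strictly positive; but this quantity is precisely $f(K,\Sigma)$, which is positive on $\Omega$ by Lemma \ref{lemma: lower bodun of f} for suitable $\tau$.
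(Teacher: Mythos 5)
Your proposal follows essentially the same route as the paper's proof: both express $S_{K,\Sigma}-S^{(l)}_{K,\Sigma}$ as the geometric tail of the closed-form series from Lemma \ref{lemma:upper bound of S_K}, collapse it to $\gamma^l S_{K,\Sigma}$ via $(\gamma V_K)^l\leq\gamma^l$, and then decompose $f(K,\Sigma)-f^{(l)}(K,\Sigma)$ as $(Q+K^\top RK)(S_{K,\Sigma}-S^{(l)}_{K,\Sigma})$ plus the state-independent term times $\frac{\gamma^l}{1-\gamma}$ before taking logarithms. The sign caveat you flag about $\mu-\mathrm{Tr}(\Sigma(B^\top B+D^\top D))/(1-V_K)$ when applying $(\gamma V_K)^l\leq\gamma^l$ is likewise left unaddressed in the paper's own argument, so your plan matches the paper's proof in both its steps and its implicit assumptions.
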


\begin{lemma} [$S_{K,\Sigma}$ Perturbation]\label{lemma: sk perturbation}  
If
$
\|K-K'\|\leq h_\Sigma  
$
and 
$\|\Sigma' - \Sigma\|\leq \|\Sigma\|$, then 
$$
|S_{K',\Sigma'}-S_{K,\Sigma}| \leq h_K\|K-K'\| + h_{2}\|\Sigma - \Sigma'\|,
$$
where
\begin{align*}
  h_\Sigma &= \frac{1}{2S_{K,\Sigma}^2}\frac{\mu^2 }{\sqrt{\frac{1}{2}\frac{\mu^2 \|B^\top B+D^\top D\|}{S_{K,\Sigma}^2 } + \|K^\top (B^\top B + D^\top D) - AB\|^2} + \|K^\top (B^\top B + D^\top D) - AB\|},\\
   h_2 & = \frac{\gamma Tr((B^\top B+D^\top D))}{(1-\gamma)(1- \gamma V_K)},\\
   g_\Sigma& = 2\left(\frac{1}{1-\gamma V_K } \right)^2 (2 \|K^\top (B^\top B + D^\top D) - AB\| +\|B^\top B+D^\top D\| h_\Sigma), \\
   h_K &= 2g_\Sigma \frac{(1- \gamma)\mu+ \gamma Tr(\Sigma(B^\top B+D^\top D))}{(1-\gamma)} .
\end{align*}

\end{lemma}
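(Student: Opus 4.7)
The plan is to start from the closed-form representation for $S_{K,\Sigma}$ supplied by Lemma~\ref{lemma:upper bound of S_K}. Summing the geometric series and simplifying the bracketed factor via $\frac{1}{1-\gamma}-\frac{1}{1-\gamma V_K} = \frac{\gamma(1-V_K)}{(1-\gamma)(1-\gamma V_K)}$ collapses the expression to
\[
S_{K,\Sigma} \;=\; \frac{1}{1-\gamma V_K}\left[\mu + \frac{\gamma\,\mathrm{Tr}(\Sigma(B^\top B+D^\top D))}{1-\gamma}\right],
\]
so the $\Sigma$-dependence is purely linear (through a trace) and the $K$-dependence is rational (through $V_K$). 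I would then write
\[
S_{K',\Sigma'}-S_{K,\Sigma} \;=\; (S_{K',\Sigma'}-S_{K,\Sigma'}) + (S_{K,\Sigma'}-S_{K,\Sigma})
\]
and bound the two pieces separately.

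The $\Sigma$ piece is straightforward: only the trace term varies between $S_{K,\Sigma'}$ and $S_{K,\Sigma}$, giving $|S_{K,\Sigma'}-S_{K,\Sigma}| = \gamma\,|\mathrm{Tr}((\Sigma'-\Sigma)(B^\top B+D^\top D))|/[(1-\gamma)(1-\gamma V_K)]$. The trace inequality $|\mathrm{Tr}(XM)|\le \mathrm{Tr}(M)\|X\|$ (valid for $M\succeq 0$), together with the definition of $h_2$, delivers the $h_2\|\Sigma-\Sigma'\|$ contribution immediately.

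The $K$ piece is where the main work and the main obstacle lie. From $V(K)=A^2+C^2+K^\top(B^\top B+D^\top D)K-2ABK$, the second-order expansion gives
\[
V_{K'}-V_K \;=\; 2\bigl(K^\top(B^\top B+D^\top D)-AB\bigr)(K'-K) + (K'-K)^\top(B^\top B+D^\top D)(K'-K),
\]
which, combined with $\|K'-K\|\le h_\Sigma$, is bounded by $\bigl(2\|K^\top(B^\top B+D^\top D)-AB\|+\|B^\top B+D^\top D\|h_\Sigma\bigr)\|K'-K\|$; one recognises this as the non-$(1-\gamma V_K)^{-2}$ factor of $g_\Sigma$. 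The subtle step, and what the elaborate algebraic form of $h_\Sigma$ is designed for, is controlling $(1-\gamma V_{K'})^{-1}$: rationalising the denominator of $h_\Sigma$ reveals it as the positive root of
\[
\|B^\top B+D^\top D\|h^2 + 2\|K^\top(B^\top B+D^\top D)-AB\|h \;=\; \frac{\mu^2}{2\,S_{K,\Sigma}^{\,2}},
\]
so that $\|K'-K\|\le h_\Sigma$ forces $|V_{K'}-V_K|\le \mu^2/(2S_{K,\Sigma}^2)$; using the lower bound $S_{K,\Sigma}\ge \mu/(1-\gamma V_K)$ from Lemma~\ref{lemma:upper bound of S_K} this yields $\gamma|V_{K'}-V_K|\le (1-\gamma V_K)/2$, and hence $(1-\gamma V_{K'})^{-1}\le 2(1-\gamma V_K)^{-1}$. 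Plugging into the resolvent identity
\[
\frac{1}{1-\gamma V_{K'}}-\frac{1}{1-\gamma V_K} \;=\; \frac{\gamma(V_{K'}-V_K)}{(1-\gamma V_K)(1-\gamma V_{K'})}
\]
gives $\left|(1-\gamma V_{K'})^{-1}-(1-\gamma V_K)^{-1}\right|\le g_\Sigma\|K'-K\|$. Multiplying by the bracketed factor $\mu+\gamma\,\mathrm{Tr}(\Sigma'(B^\top B+D^\top D))/(1-\gamma)$, and using $\|\Sigma'-\Sigma\|\le\|\Sigma\|$ (with the positive-semidefinite trace inequality) to replace $\mathrm{Tr}(\Sigma' M)$ by $\mathrm{Tr}(\Sigma M)$ at the cost of the factor of $2$ appearing in $h_K$, produces the $h_K\|K-K'\|$ term. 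Adding the two contributions gives the claimed inequality; the delicate part is purely the admissibility/stability step for $(1-\gamma V_{K'})^{-1}$, after which the rest is bookkeeping.
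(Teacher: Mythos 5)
Your proposal is correct and follows essentially the same route as the paper: the same closed form $S_{K,\Sigma}=\Delta_\Sigma(1-\gamma V_K)^{-1}$, the same split into $(S_{K',\Sigma'}-S_{K,\Sigma'})+(S_{K,\Sigma'}-S_{K,\Sigma})$, the trace bound for the $\Sigma$ piece, and the quadratic bound on $V_{K'}-V_K$ plus the resolvent identity with the factor-$2$ stability estimate for the $K$ piece. The only difference is presentational: you inline the paper's auxiliary lemmas (its ``bound transform'' and resolvent-perturbation lemmas) and explicitly reconcile the stated $h_\Sigma$ (written with $\mu^2/S_{K,\Sigma}^2$) with the threshold $(1-\gamma V_K)^2/2$ via the lower bound $S_{K,\Sigma}\geq \mu/(1-\gamma V_K)$, which the paper leaves implicit.
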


\begin{lemma}[$P_K$ perturbation]\label{lemma: pk perturbation} If  
$\|K'-K\| \leq \min \{h_{\Sigma},\|K\|\} $, then $$ |P_{K'} - P_K| \leq h_5 \|K' -K \|, $$
where $$h_5 = \frac{3\|K\|\|R\|}{1-\gamma V_K}+ (Q+4\|R\|\|K\|^2) g_\Sigma.$$
\end{lemma}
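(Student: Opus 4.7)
In the scalar-state setting used throughout the paper, $P_K$ is a scalar and the fixed-point identity \eqref{eq: P_k = ...} solves explicitly: setting
$$V_K \;:=\; A^2 + C^2 + K^\top(B^\top B + D^\top D)K - 2ABK,$$
equation \eqref{eq: P_k = ...} gives $P_K = (Q + K^\top R K)/(1 - \gamma V_K)$. The plan is to prove the Lipschitz estimate by working directly with this closed form and splitting the difference $P_{K'} - P_K$ into a ``numerator'' contribution (controlled by $R$ and $K$) and a ``denominator'' contribution (controlled by the perturbation of $V_K$).

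Concretely, I would use the telescoping decomposition
$$P_{K'} - P_K \;=\; \frac{K'^\top R K' - K^\top R K}{1-\gamma V_K} \;+\; (Q + K'^\top R K')\,\frac{\gamma(V_{K'} - V_K)}{(1-\gamma V_{K'})(1-\gamma V_K)}.$$
The first summand is handled by writing $K'^\top R K' - K^\top R K = (K'-K)^\top R(K'+K)$ and invoking $\|K'-K\| \leq \|K\|$, so $\|K'\|+\|K\| \leq 3\|K\|$; this gives the term $\tfrac{3\|K\|\|R\|}{1-\gamma V_K}\|K'-K\|$ of $h_5$. For the second summand, the prefactor satisfies $Q + K'^\top R K' \leq Q + 4\|R\|\|K\|^2$, again using $\|K'\|\leq 2\|K\|$, matching the second factor of $h_5$.

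For the remaining quotient I would expand
$$V_{K'} - V_K \;=\; 2\bigl(K^\top(B^\top B + D^\top D) - AB\bigr)(K'-K) \;+\; (K'-K)^\top(B^\top B + D^\top D)(K'-K),$$
so that $|V_{K'} - V_K| \leq \bigl(2\|K^\top(B^\top B + D^\top D) - AB\| + \|B^\top B + D^\top D\|\,h_\Sigma\bigr)\|K'-K\|$, using $\|K'-K\|\leq h_\Sigma$ on the quadratic remainder. To pass from this to the factor $g_\Sigma$, I need to control the denominator $(1-\gamma V_{K'})$: the plan is to show $1-\gamma V_{K'}\geq \tfrac12 (1-\gamma V_K)$, which (together with $\gamma\leq 1$) yields exactly $\gamma|V_{K'}-V_K|/[(1-\gamma V_{K'})(1-\gamma V_K)] \leq g_\Sigma\|K'-K\|$. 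Assembling the three pieces gives $|P_{K'} - P_K|\leq h_5 \|K'-K\|$.

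The main obstacle is the denominator control. This is where the precise choice of $h_\Sigma$ matters: $h_\Sigma$ was designed in Lemma \ref{lemma: sk perturbation} via $S_{K,\Sigma}\geq \mu/(1-\gamma V_K)$ so that any $K'$ within distance $h_\Sigma$ keeps $V_{K'}$ close to $V_K$ and in particular keeps $1-\gamma V_{K'}$ bounded below by a fixed multiple of $1-\gamma V_K$. I would reuse essentially the same estimate that underlies Lemma \ref{lemma: sk perturbation} -- writing $1-\gamma V_{K'} = (1-\gamma V_K) - \gamma(V_{K'}-V_K)$ and plugging the bound on $|V_{K'}-V_K|$ together with the definition of $h_\Sigma$ -- to verify the required one-half lower bound. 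Once this is in place the remaining computation is elementary.
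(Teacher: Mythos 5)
Your proposal is correct and follows essentially the same route as the paper: the closed form $P_K=(Q+K^\top RK)/(1-\gamma V_K)$, the split into a numerator term (bounded by $3\|K\|\|R\|\|K'-K\|/(1-\gamma V_K)$ via $\|K'-K\|\leq\|K\|$) and a denominator term with prefactor $Q+4\|R\|\|K\|^2$. The only cosmetic difference is that you re-derive the bound $\bigl|\tfrac{1}{1-\gamma V_{K'}}-\tfrac{1}{1-\gamma V_K}\bigr|\leq g_\Sigma\|K'-K\|$ inline (including the $1-\gamma V_{K'}\geq\tfrac12(1-\gamma V_K)$ control tied to the definition of $h_\Sigma$), whereas the paper invokes its auxiliary Lemmas \ref{lemma: bound transform} and \ref{lemma: per pre} for exactly this step.
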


\begin{lemma}[$\nabla_K f(K, \Sigma)$ and $\nabla_\Sigma f(K, \Sigma)$ perturbation]\label{lemma: gradient perturbation} If 
$
\|K' - K\| \leq \min \{h_\Sigma,\|K\|\}  
$
and
$
\|\Sigma'-\Sigma\|_F\leq \min \{\frac{\sigma_{min}(\Sigma)}{2},  \|\Sigma\|\}
$, then  
$$
\|\nabla_K f(K', \Sigma') - \nabla_K f(K, \Sigma)\| \leq h_6 \|K' - K\| +h_7 \| \Sigma' - \Sigma \| ,
$$
and
$$
    \|\nabla_{\Sigma} f(K', \Sigma') - \nabla_{\Sigma} f(K, \Sigma)\| \leq h_8 \| K'-K \| + h_9\|\Sigma' - \Sigma\|,
$$
where 
\begin{align*}
h_E &= 2(\|R\| + \gamma A \cdot h_5 \|B\| +\gamma P_{K}(B^\top  B+D^\top  D) + 2\gamma \cdot h_5\|B^\top  B+D^\top  D\|\|K\|) ,\\ 
h_6 &= h_K  \sqrt{\lambda_1^{-1}|f(K, \Sigma) - f(K^*, \Sigma^*)| } + h_E |S_{K', \Sigma'}|, \quad h_9 =  \frac{\tau\sigma_{min}(\Sigma)}{4(1-\gamma)},\\
h_7 &= h_2  \sqrt{\lambda_1^{-1}|f(K, \Sigma) - f(K^*, \Sigma^*)| }, \quad 
h_8 =\frac{\gamma (B^\top B +D^\top D) }{(1-\gamma)} h_5.
\end{align*}

\end{lemma}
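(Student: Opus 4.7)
The plan is to reduce both bounds to an application of the triangle inequality on the explicit formulas from Lemma \ref{lemma: explicit form}, combined with the previously established perturbation bounds for $S_{K,\Sigma}$ and $P_K$ (Lemmas \ref{lemma: sk perturbation} and \ref{lemma: pk perturbation}) and the gradient-domination bound on $E_K$ (Lemma \ref{lemma: gradient domination}). The shape of the asserted constants already indicates the correct decomposition: the factors $\sqrt{\lambda_1^{-1}|f(K,\Sigma)-f(K^*,\Sigma^*)|}$ appearing in $h_6,h_7$ signal that a factor $\|E_K\|$ should be split off and controlled by gradient domination, while $h_8$ being proportional to $h_5$ and $h_9$ carrying the $\tau$ prefactor signal that the $\Sigma$-gradient perturbation is driven entirely by $P_{K'}-P_K$ and by the inverse-matrix perturbation $\Sigma'^{-1}-\Sigma^{-1}$.

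For the $K$-gradient I would write
\[
\nabla_K f(K',\Sigma') - \nabla_K f(K,\Sigma) = (E_{K'}-E_K)\,S_{K',\Sigma'} + E_K\,(S_{K',\Sigma'} - S_{K,\Sigma}),
\]
and then expand
\[
E_{K'}-E_K = 2R(K'-K) + 2\gamma(P_{K'}-P_K)\bigl[(B^\top B+D^\top D)K' - AB^\top\bigr] + 2\gamma P_K(B^\top B+D^\top D)(K'-K).
\]
Since $\|K'-K\|\le\min\{h_\Sigma,\|K\|\}$, Lemma \ref{lemma: pk perturbation} gives $|P_{K'}-P_K|\le h_5\|K'-K\|$, and after absorbing the bound $\|K'\|\le 2\|K\|$ into the coefficient I obtain $\|E_{K'}-E_K\|\le h_E\|K'-K\|$. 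Gradient domination yields $\|E_K\|\le\sqrt{\lambda_1^{-1}|f(K,\Sigma)-f(K^*,\Sigma^*)|}$, and Lemma \ref{lemma: sk perturbation} gives $|S_{K',\Sigma'}-S_{K,\Sigma}|\le h_K\|K'-K\|+h_2\|\Sigma'-\Sigma\|$. Triangle inequality then collapses the two summands into $h_6\|K'-K\|+h_7\|\Sigma'-\Sigma\|$.

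For the $\Sigma$-gradient I would decompose
\[
\nabla_\Sigma f(K',\Sigma') - \nabla_\Sigma f(K,\Sigma) = (1-\gamma)^{-1}\Bigl[\gamma(B^\top B+D^\top D)^\top(P_{K'}-P_K) - \tfrac{\tau}{2}(\Sigma'^{-1}-\Sigma^{-1})\Bigr].
\]
Lemma \ref{lemma: pk perturbation} handles the first summand, producing the coefficient $h_8$ on $\|K'-K\|$. For the inverse-matrix term I would invoke the identity $\Sigma'^{-1}-\Sigma^{-1}=\Sigma'^{-1}(\Sigma-\Sigma')\Sigma^{-1}$, using the precondition $\|\Sigma'-\Sigma\|_F\le\sigma_{\min}(\Sigma)/2$ together with Weyl's inequality to guarantee $\sigma_{\min}(\Sigma')\ge\sigma_{\min}(\Sigma)/2$, and hence $\|\Sigma'^{-1}\|\le 2/\sigma_{\min}(\Sigma)$; plugging into the identity yields the stated coefficient $h_9$ on $\|\Sigma'-\Sigma\|$.

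The main obstacle is bookkeeping rather than any deep estimate. I have to confirm that under the two stated preconditions both $(K',\Sigma')$ remains in the admissible set $\Omega$ (so that $P_{K'}$ and $S_{K',\Sigma'}$ are finite and the explicit formulas of Lemma \ref{lemma: explicit form} apply), that the hypotheses of Lemmas \ref{lemma: sk perturbation} and \ref{lemma: pk perturbation} are satisfied verbatim, and that the many intermediate constants collapse to exactly the stated $h_E,h_6,h_7,h_8,h_9$. Because the bounds mix spectral with Frobenius norms, I must use $\|M\|\le\|M\|_F$ in the correct direction each time so as to land on an upper bound rather than a vacuous one.
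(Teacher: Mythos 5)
Your proposal follows essentially the same route as the paper: the same splitting $(E_{K'}-E_K)S_{K',\Sigma'}+E_K(S_{K',\Sigma'}-S_{K,\Sigma})$ for the $K$-gradient, with $\|E_{K'}-E_K\|\le h_E\|K'-K\|$ via Lemma \ref{lemma: pk perturbation} and $\|K'\|\le 2\|K\|$, $\|E_K\|$ controlled by gradient domination, and $S$-perturbation from Lemma \ref{lemma: sk perturbation}; and the same reduction of the $\Sigma$-gradient difference to $P_{K'}-P_K$ plus $\Sigma'^{-1}-\Sigma^{-1}$. The one point where you diverge is the inverse term: the paper invokes the standard perturbation-of-inverse bound $\|\Sigma'^{-1}-\Sigma^{-1}\|\le \tfrac{2}{\sigma_{\min}(\Sigma)}\|\Sigma'-\Sigma\|$ (Theorem 35 of Fazel et al., restated in the appendix), whereas you use the identity $\Sigma'^{-1}-\Sigma^{-1}=\Sigma'^{-1}(\Sigma-\Sigma')\Sigma^{-1}$ with Weyl's inequality, which is fine but gives the slightly worse factor $2/\sigma_{\min}(\Sigma)^2$. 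Be aware, however, that neither route ``yields the stated coefficient $h_9$'' as you claim: your identity gives $\tfrac{\tau}{(1-\gamma)\sigma_{\min}(\Sigma)^2}$ and the paper's cited bound gives $\tfrac{\tau}{(1-\gamma)\sigma_{\min}(\Sigma)}$, while the lemma states $h_9=\tfrac{\tau\sigma_{\min}(\Sigma)}{4(1-\gamma)}$ with $\sigma_{\min}(\Sigma)$ in the numerator — a discrepancy the paper's own proof also glosses over, so it is best read as a typo in the constant rather than a defect of your argument; just do not assert that your estimate lands exactly on $h_9$ as printed.
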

We define
$
f_{r_1}(K, \Sigma) := \mathbb{E}_{U\sim \mathbb{B}_{r_1}}[f(K+U, \Sigma)]
$
and 
$
f_{r_2}(K, L) := \mathbb{E}_{V \sim \mathbb{B}_{r_2}}[f(K, (L+V)(L+V)^\top]
$. where $\mathbb{B}_{r}$ denotes the uniform distribution over the points with norm $r$ (boundary of a sphere). The following lemma shows that the gradient of $f_{r_1}(K, \Sigma)$ and $f_{r_2}(K, L)$ can be estimated with an oracle for the function value.
\begin{lemma}\label{lemma: fr(K)}
$$
\nabla_K f_{r_1}(K, \Sigma) = \frac{n}{r_1^2}\mathbb{E}_{U\sim \mathbb{S}_{r_1}}[f(K+U, \Sigma)U],
$$
and
$$
\nabla_L f_{r_2}(K, L) = \frac{n(n+1)}{2r_2^2}\mathbb{E}_{V \sim \mathbb{S}_{r_2}}[f(K, (L+V)(L+V)^\top)V],
$$
where $\mathbb{S}_{r}$ denotes the uniform distribution over all points with norm at most r (the entire sphere).
\end{lemma}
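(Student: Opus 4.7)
The plan is to prove this via the divergence theorem, which is the classical derivation of the Flaxman--Kalai--McMahan zero-order gradient estimator. The essential observation is that the gradient (in $K$) of a ball-smoothed version of $f$ equals a sphere-boundary average of $f$, and this conversion of a bulk gradient integral into a surface integral is precisely what Stokes' theorem yields. Before starting I note that the paper's parenthetical definitions of $\mathbb{B}_{r}$ and $\mathbb{S}_{r}$ appear swapped: the intended reading, consistent with Algorithm \ref{alg:model free RPG} where the sample $U_i$ satisfies $\|U_i\|_F=r_1$, is that $f_{r_1}$ is a \emph{ball}-average and the right-hand side is a \emph{sphere}-average. I will work with this intended interpretation.

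The chain of equalities I would carry out is: first write
$$f_{r_1}(K,\Sigma) \;=\; \frac{1}{\mathrm{vol}(B_{r_1})}\int_{\|U\|\le r_1} f(K+U,\Sigma)\, dU;$$
second, differentiate under the integral sign and use $\nabla_K f(K+U,\Sigma)=\nabla_U f(K+U,\Sigma)$; third, apply the divergence theorem coordinate by coordinate with outward unit normal $U/r_1$ to obtain
$$\nabla_K f_{r_1}(K,\Sigma) \;=\; \frac{1}{r_1\,\mathrm{vol}(B_{r_1})}\int_{\|U\|=r_1} f(K+U,\Sigma)\, U\, dS(U).$$
Finally, the classical identity $\mathrm{area}(\{\|U\|=r_1\})/\mathrm{vol}(\{\|U\|\le r_1\}) = n/r_1$ in $\mathbb{R}^{n}$ lets me rewrite the surface integral as $\mathbb{E}[\,f(K+U,\Sigma)\,U\,]$ with $U$ uniform on the sphere, after which the prefactor collapses to $n/r_1^2$. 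This gives the first identity.

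For the second identity, exactly the same three-step argument applies, but now the perturbation $V$ lives in $\mathbb{R}^{n\times n}$ equipped with the Frobenius inner product, an ambient Euclidean space of dimension $n^2$. Replacing $n$ by $n^2$ in the volume/area ratio produces the factor $n^2/r_2^2$ in the statement, with no new idea required. I expect the main obstacle to be purely technical regularity: to differentiate under the integral and apply Stokes one needs $f(\cdot,\Sigma)$ to be $C^1$ with integrable gradient on a neighborhood of $K+B_{r_1}$, and similarly in the $\Sigma$ direction. By Lemma \ref{lemma: optimization} we have $f(K,\Sigma) = P_K\mu + q_{K,\Sigma}$ with $P_K$ determined by the linear relation \eqref{eq: P_k = ...}, so $f$ is smooth on the open set $\Omega$ where $\gamma V_K<1$ and $\Sigma\succ 0$; the proof therefore implicitly requires $r_1,r_2$ small enough that the entire perturbation ball stays inside this open set, a condition needed elsewhere in the analysis of SB-RPG as well.
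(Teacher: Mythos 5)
Your argument is correct and is essentially the paper's own proof: the paper simply defers to Lemma 2.1 of \cite{Flaxman2005online}, whose derivation is exactly the ball-smoothing, differentiate-under-the-integral, divergence-theorem computation you carry out (with the ambient dimension $n^2$ for the Frobenius-space perturbation of $\Sigma$). Your observations that the paper's parenthetical descriptions of $\mathbb{B}_r$ and $\mathbb{S}_r$ are swapped and that $r_1,r_2$ must keep the perturbation ball inside $\Omega$ are apt, but they are clarifications rather than a different route.
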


The following lemmas show that $\nabla_K f(K, \Sigma)$, $\nabla_{L}f(K, L)$ and $S_{K,\Sigma}$ can be estimated with finite samples under small perturbation at any desired accuracy.
\begin{lemma}[Estimate of $\nabla_K f(K,\Sigma)$] \label{lemma: est gk}


  Given an arbitrary tolerance $\epsilon_1>0$
 and probability $\kappa_1\in(0,1)$. Let $x_t^i, u_t^i$ be $i$-th single path sampled using policy  $(K + U_i, \Sigma) \in \Omega$, where $\|U_i\|_F \leq r_1$, define
$$
\widehat{\nabla}_K := \frac{1}{M} \sum_{i=1}^M \frac{n}{r_1^2} \left[\sum_{t = 0}^{l-1} \gamma^t\left(Q(x_t^i )^2 + (u_t^i)^\top Ru_t^i  + \tau \log \pi (u_t|x_t)\right)\right] U_{i}.
$$
Assume  that (i) the distribution of the initial states implies that $\|x_0^i\| \leq \bar L$ almost surely for any $i$. (ii) the multiplicative noise are distributed such that $  \sum_{t=0}^{l-1} Q(x_t^i)^2 + (u_t^i)^\top R(u_t^i) + \tau \log\pi(u_t^i|x_t^i)  \leq \Gamma \mathbb{E}\left[ \sum_{t=0}^{l-1} Qx_t^2 + u_t^\top Ru_t +\tau \log\pi(u_t^i|x_t^i) \right]$ for any $i$. If  set 
\begin{align*}
    r_1 &\leq \frac{\epsilon_1}{2 h_6},\\
    M&\geq \max \big \{ \frac{2n}{(\epsilon_1/6)^2}( \sigma_1 + \frac{R_1\epsilon}{18\sqrt{n}})\log\big(\frac{n+1}{\sqrt{\kappa_1}}\big), \frac{2n}{(\epsilon/3)^2}(\sigma^2_{2} + \frac{R_2 \epsilon}{9x\sqrt{n}})\log(\frac{n+1}{\sqrt{\kappa_1}})\big \},\\
    l &\geq  \log (\gamma)^{-1} \left[\log\left(\frac{r_1}{n} \cdot \frac{\epsilon}{3} \right)- \log \left( 2|f(K, \Sigma)| \left(2 \| K \|^2\|R \| + \frac{1}{|Q|}\right) + |\psi|\left(1 + \frac{1}{|Q|} + \frac{1}{1-\gamma}\right) \right) \right],
\end{align*}
then
$$||\widehat{\nabla}_K - f(K,\Sigma)||_F \leq \epsilon$$ 
with  high probability (at least $1-\kappa_1$), where $\psi_1 = Tr(\Sigma R) - \frac{\tau}{2}(n + \log(2\pi)^n|\Sigma|), \sigma_1 = \big(\frac{2n}{r_1}f(K, \Sigma)\big)^2 + \big(\frac{\epsilon}{6} + \overline{\| \nabla_K f(K, \Sigma) \|}\big)^2, R_1 = \frac{2n}{r_1}f(K, \Sigma) + \frac{\epsilon}{6} + \overline{\| \nabla_K f(K, \Sigma) \|},\sigma_2 = (2\Gamma \bar L^2 f(K, \Sigma)r_1)^2 + (\frac{\epsilon}{2} + \overline{\| \nabla_K f(K, \Sigma) \|})^2,\\ R_2 =  2\Gamma L^2 f(K, \Sigma)r_1 + \frac{\epsilon}{2} + \overline{\| \nabla_K f(K, \Sigma) \|}$.

\end{lemma}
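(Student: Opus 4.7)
The plan is to split the error into three pieces, one per free parameter. Introduce the intermediate quantities
$$
\widetilde{\nabla}_K := \frac{1}{M}\sum_{i=1}^M \frac{n}{r_1^2}\, f(K+U_i,\Sigma)\, U_i, \qquad \bar{\nabla}_K := \nabla_K f_{r_1}(K,\Sigma),
$$
where $\widetilde{\nabla}_K$ uses the \emph{true} (infinite-horizon, expectation-level) cost at each perturbed policy, and $\bar{\nabla}_K$ is the gradient of the smoothed objective identified in Lemma \ref{lemma: fr(K)}. Triangle inequality yields
$$
\|\widehat{\nabla}_K - \nabla_K f(K,\Sigma)\|_F \leq \underbrace{\|\widehat{\nabla}_K - \widetilde{\nabla}_K\|_F}_{\text{rollout truncation}} + \underbrace{\|\widetilde{\nabla}_K - \bar{\nabla}_K\|_F}_{\text{Monte--Carlo variance}} + \underbrace{\|\bar{\nabla}_K - \nabla_K f(K,\Sigma)\|_F}_{\text{smoothing bias}},
$$
and I will force each piece to be at most $\epsilon_1/3$ by tuning $r_1$, $l$, and $M$ respectively.

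\paragraph{Smoothing bias.} Interchange gradient and expectation in $f_{r_1}$ to write $\bar{\nabla}_K = \mathbb{E}_{U\sim\mathbb{B}_{r_1}}[\nabla_K f(K+U,\Sigma)]$. Apply Lemma \ref{lemma: gradient perturbation} to get $\|\nabla_K f(K+U,\Sigma) - \nabla_K f(K,\Sigma)\|_F \leq h_6 \|U\|_F \leq h_6 r_1$ for every $\|U\|_F \leq r_1$ (which also requires checking $r_1 \leq h_\Sigma$, absorbed into the choice of $r_1$). Hence the smoothing term is bounded by $h_6 r_1$, and the stated condition $r_1 \leq \epsilon_1/(2 h_6)$ suffices.

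\paragraph{Variance and truncation via matrix Bernstein.} For the variance piece, apply the matrix Bernstein inequality to the i.i.d.\ bounded centered summands $X_i := \frac{n}{r_1^2} f(K+U_i,\Sigma) U_i - \bar{\nabla}_K$. The per-sample norm bound $R_1 \lesssim \frac{n}{r_1} f(K,\Sigma) + \overline{\|\nabla_K f(K,\Sigma)\|}$ comes from Lemma \ref{lemma: norm bounds}, and the matrix variance $\sigma_1$ is obtained by combining that bound with local Lipschitzness of $f$. This gives the first lower bound on $M$ and drives the variance term below $\epsilon_1/6$ with probability $1-\kappa_1/2$. For the truncation piece, Lemma \ref{lemma: approximate fk} controls $|f^{(l)}(K+U_i,\Sigma) - f(K+U_i,\Sigma)|$ in expectation with geometric decay in $l$; assumption~(ii) upgrades this to a realised (per-trajectory) bound via the constant $\Gamma$, so each summand of $\widehat{\nabla}_K - \widetilde{\nabla}_K$ is bounded sample-wise by $O(\gamma^l n/r_1)$. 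A second application of matrix Bernstein with parameters $\sigma_2, R_2$ then produces the second lower bound on $M$; the choice of $l$ in the statement is exactly what makes the per-sample tail small enough to feed into this Bernstein. A union bound over the two random events yields the overall probability $1 - \kappa_1$.

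\paragraph{Main obstacle.} The delicate step is the variance analysis, because the zero-order prefactor $n/r_1^2$ blows up as $r_1\downarrow 0$, so the argument must carefully balance $r_1$ (which needs to be small to kill the smoothing bias) against $M$ (which must be large enough to absorb the inflated per-sample magnitudes inside Bernstein). The truncation argument is conceptually simpler but requires a per-trajectory bound on the infinite tail of the cost, not merely an in-expectation bound — this is precisely the reason assumption~(ii) is imposed, since without it the realised rollout error cannot be compared trajectorywise to the quantity controlled by Lemma \ref{lemma: approximate fk}.
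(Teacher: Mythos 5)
Your three-term decomposition and the roles you assign to $r_1$, $M$, and $l$ match the paper's architecture: smoothing bias handled via Lemma \ref{lemma: gradient perturbation} and Lemma \ref{lemma: fr(K)}, a vector Bernstein bound with parameters $(\sigma_1,R_1)$ for the Monte--Carlo term, and a rollout argument using $L$, $\Gamma$ and a second Bernstein with $(\sigma_2,R_2)$. The gap is in your treatment of the rollout term $\widehat{\nabla}_K-\widetilde{\nabla}_K$. Its $i$-th summand is $\frac{n}{r_1^2}\bigl[\hat f^{(l)}_i(K+U_i,\Sigma)-f(K+U_i,\Sigma)\bigr]U_i$, where $\hat f^{(l)}_i$ is the realized single-trajectory truncated cost. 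This difference is \emph{not} $O(\gamma^l)$ samplewise, contrary to your claim: only the expectation-level truncation error $f^{(l)}(K+U_i,\Sigma)-f(K+U_i,\Sigma)$ decays geometrically (Lemma \ref{lemma: approximate fk}); the fluctuation of the realized cost around $f^{(l)}(K+U_i,\Sigma)$ is of the order of the cost itself (this is exactly what assumption (ii) bounds, via $2\Gamma L^2 f(K,\Sigma)$), and it becomes small only after averaging over $M$ samples. Moreover, you cannot apply Bernstein directly to $\widehat{\nabla}_K-\widetilde{\nabla}_K$ as you describe: conditionally on the $U_i$'s its mean equals the truncation bias, not zero, so the summands are not centered and the inequality does not apply in that form.

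The fix, which is what the paper's proof does, is to introduce the intermediate estimator $\nabla_K':=\frac{1}{M}\sum_{i=1}^M \frac{n}{r_1^2}f^{(l)}(K+U_i,\Sigma)U_i$ and split once more, $\widehat{\nabla}_K-\widetilde{\nabla}_K=(\widehat{\nabla}_K-\nabla_K')+(\nabla_K'-\widetilde{\nabla}_K)$. The second piece is deterministic given the $U_i$'s and is driven below $\epsilon_1/3$ by the stated choice of $l$ via Lemma \ref{lemma: approximate fk}; the first piece is centered (since $\mathbb{E}[\widehat{\nabla}_K\mid U_1,\dots,U_M]=\nabla_K'$), its summands are bounded through $|x_0^i|\le L$ and assumption (ii) by $R_2$, and the second Bernstein application with $(\sigma_2,R_2)$ yields the second condition on $M$, with a union bound over the two Bernstein events giving probability at least $1-\kappa_1$. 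With this extra split (and a corresponding re-allocation of the error budget, since the rollout piece consumes two portions of $\epsilon/3$ rather than one), your argument coincides with the paper's; the remaining discrepancies (calling it matrix rather than vector Bernstein, and attributing $R_1$ to Lemma \ref{lemma: norm bounds} alone rather than also to the local bound $f(K+U_i,\Sigma)\le 2f(K,\Sigma)$) are cosmetic.
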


\begin{lemma}[Estimate of $\nabla_{\Sigma} f(K,\Sigma)$]\label{lemma: est gsigma}

 Given an arbitrary tolerance $\epsilon_2>0$
 and probability $\kappa_2\in(0,1)$. Let $x_t^i, u_t^i$ be $i$-th single path sampled using policy $(K, \Sigma_i) \in \Omega$, where $\Sigma_i = (L+V_i)(L+V_i)^\top$, $L$ is choloky decomposition of $\Sigma$ such that $\Sigma = LL^\top$, $\|V_i\|_F\leq r_2$,  define
\begin{align*}
	\widehat{\nabla}_\Sigma &:= \frac{1}{2}L^{-1}\frac{1}{M} \sum_{i=1}^M \frac{n(n+1)}{2r_2^2} \left[\sum_{t = 0}^{l-1} \gamma^t\left(Q(x_t^i )^2 + (u_t^i)^\top Ru_t^i  + \tau \log \pi (u_t|x_t)\right)\right] V_{i}
\end{align*}
Assume that (a)  the distribution of the initial states implies that $\|x_0^i\| \leq \bar L$ almost surely for any $i$. (b) the multiplicative noise are distributed such that
$
	\sum_{t=0}^{ l-1} Q(x_t^i)^2 + (u_t^i)^\top R(u_t^i) + \tau \log\pi(u_t|x_t) \leq \Gamma \mathbb{E}\left[ \sum_{t=0}^{l-1} Q(x_t^i)^2 + (u_t^i)^\top Ru_t^i +\tau \log\pi(u_t^i|x_t^i) \right].
$
If 
\begin{align*}
    r_2 &\leq \frac{\epsilon_2}{12h_9\|L^{-1}\|},\\
    M &\geq  \frac{2n(n+1)}{(\epsilon_2/3)^2}(\sigma_Z^2 + \frac{R_Z{\epsilon_2/3}}{3}) \log\big(\frac{n(n+1)/2+1}{1-\sqrt{1-\kappa}}\big),\\
    l &\geq  \frac{\log \frac{2r_2\epsilon_2}{6n(n+1)\|L^{-1}\|} - \log \left[(Q + K^\top RK)S_{K,\Sigma} + \frac{ Tr(\Sigma R) - \frac{\tau}{2}(n+ log(2\pi)^n|\Sigma|)}{1-\gamma }\right]}{\log \gamma},
\end{align*}
then
$$||\widehat{\nabla}_\Sigma -\nabla_\Sigma f(K,\Sigma)||_F \leq \epsilon_2$$ with high probability (at least $1-\kappa_2$), 
where $R_Z = \max\{\frac{n(n+1)}{r_2} f(K, \Sigma) + \overline {\|\nabla_L f_{r_2}(K, L)\|} + \frac{\epsilon}{12\|L^{-1}\|},\\ \frac{n(n+1)}{r_2} \Gamma L^2 f(K, \Sigma)  + \overline{\| \nabla_L f(K, \Sigma) \| }+ \frac{\epsilon_2}{3\|L^{-1}\|}\},\sigma_Z = \max\{\frac{n(n+1)}{r_2} f(K, \Sigma), \big(\frac{n(n+1)}{r_2} \Gamma \bar L^2 f(K, \Sigma)\big)^2\}$
 
\end{lemma}

\begin{lemma}[Estimate of $S_{K,\Sigma}$ under perturbation]\label{lemma: approxi S under perturbation}Given an arbitrary tolerance $\epsilon_3>0$
 and probability $\kappa_3\in(0,1)$. Let $x_t^i, u_t^i$ be a single path sampled using policy  $(K + U_i, \Sigma)\in\Omega$, where $\|U_i\|_F \leq r_3$. 
Define
$$
\widehat{S}_{K,\Sigma} := \frac{1}{M}\sum_{i=1}^{M}\sum_{t=0}^{l-1}\gamma^t(x_t^i)^2.
$$
If set  $r_3 \leq  \min\{\frac{S_{K,\Sigma}}{2h_K}, \frac{\epsilon_3}{3h_K}  ,h_\Sigma\}$, $l \geq \frac{\log \epsilon_3/3 - \log S_{K,\Sigma}/2}{log \gamma}$ and $M\geq \sqrt{\frac{3S_{K,\Sigma}}{\epsilon_3}\log\frac{n}{\kappa_3}}$,
then 
 $$|S_{K,\Sigma} - \widehat{S}_{K,\Sigma}|<\epsilon$$ with high probability (at least $1 - \kappa_3$). Furthermore, if $\epsilon_3 \leq \mu/2$, then  $\widehat S_{K,\Sigma} \geq \mu/2$ .

\end{lemma}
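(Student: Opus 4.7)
The plan is to decompose the total error into three controllable contributions and bound each by $\epsilon_3/3$. Writing $\widehat{S}_{K,\Sigma}$ as the average over the perturbed-policy rollouts, the triangle inequality gives
\begin{align*}
|\widehat{S}_{K,\Sigma} - S_{K,\Sigma}|
&\leq \bigl|\widehat{S}_{K,\Sigma} - \mathbb{E}_U[S^{(l)}_{K+U,\Sigma}]\bigr|
+ \bigl|\mathbb{E}_U[S^{(l)}_{K+U,\Sigma}] - \mathbb{E}_U[S_{K+U,\Sigma}]\bigr| \\
&\quad + \bigl|\mathbb{E}_U[S_{K+U,\Sigma}] - S_{K,\Sigma}\bigr|,
\end{align*}
where the outer expectation is over $U\sim\mathbb{S}_{r_3}$ and the trajectory noise. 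I would allocate an error budget of $\epsilon_3/3$ to each of the three terms and then verify that the stated choices of $r_3$, $l$, and $M$ achieve the bound with the declared probability.

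For the perturbation term, Lemma \ref{lemma: sk perturbation} applies since $\|U\|\leq r_3\leq h_\Sigma$, so $|S_{K+U,\Sigma}-S_{K,\Sigma}|\leq h_K\|U\|\leq h_K r_3$; the hypothesis $r_3\leq \epsilon_3/(3h_K)$ therefore yields $|\mathbb{E}_U[S_{K+U,\Sigma}]-S_{K,\Sigma}|\leq \epsilon_3/3$. For the truncation term, Lemma \ref{lemma: approximate fk}(i) yields $S_{K+U,\Sigma}-S^{(l)}_{K+U,\Sigma}\leq \gamma^l S_{K+U,\Sigma}$, and combining the perturbation bound with $r_3\leq S_{K,\Sigma}/(2h_K)$ gives $S_{K+U,\Sigma}\leq 3S_{K,\Sigma}/2$; the prescribed lower bound on $l$ then makes this truncation contribution at most $\epsilon_3/3$ uniformly in $U$.

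The main obstacle is the sampling term. Each summand $Z_i:=\sum_{t=0}^{l-1}\gamma^t(x_t^i)^2$ is an independent realization of a finite truncated rollout under the perturbed policy $(K+U_i,\Sigma)$. To invoke a Hoeffding- or Bernstein-type concentration inequality I need an almost-sure (or sub-Gaussian) bound on $Z_i$ together with control of its variance. Under the high-probability boundedness assumption on the multiplicative noises, analogous to hypothesis (ii) in Lemma \ref{lemma: est gk}, one has $Z_i\lesssim \Gamma\,\mathbb{E}[Z_i]\leq \tfrac{3}{2}\Gamma S_{K,\Sigma}$, and a matrix Hoeffding inequality produces a tail bound of the form $\Pr(|\widehat{S}_{K,\Sigma}-\mathbb{E}_U[S^{(l)}_{K+U,\Sigma}]|\geq \epsilon_3/3)\leq 2n\exp(-cM^2\epsilon_3/S_{K,\Sigma})$; equating the right-hand side with $\kappa_3$ and inverting for $M$ recovers the advertised condition $M\geq \sqrt{(3S_{K,\Sigma}/\epsilon_3)\log(n/\kappa_3)}$. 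The delicate part is pinning down the precise inequality that reproduces the author's constants while accommodating the dependence introduced by the random perturbation $U_i$ (conditioning on $U_i$ and then integrating reduces it to a scalar concentration problem).

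Finally, the addendum $\widehat{S}_{K,\Sigma}\geq \mu/2$ when $\epsilon_3\leq \mu/2$ follows at once from Lemma \ref{lemma:upper bound of S_K}, which gives $S_{K,\Sigma}\geq \mu/(1-\gamma V_K)\geq \mu$: on the high-probability event $|\widehat{S}_{K,\Sigma}-S_{K,\Sigma}|<\epsilon_3\leq \mu/2$ one concludes $\widehat{S}_{K,\Sigma}>S_{K,\Sigma}-\mu/2\geq \mu/2$.
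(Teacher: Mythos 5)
Your proposal follows essentially the same route as the paper's proof: the identical three-way decomposition into a sampling-concentration term, a finite-horizon truncation term handled via Lemma \ref{lemma: approximate fk}, and a perturbation term handled via Lemma \ref{lemma: sk perturbation}, each given an $\epsilon_3/3$ budget, with the closing claim $\widehat S_{K,\Sigma}\geq \mu/2$ deduced exactly as in the paper from $S_{K,\Sigma}\geq \mu$ together with $\epsilon_3\leq\mu/2$. The only cosmetic differences are that you center the concentration step at $\mathbb{E}_U\bigl[S^{(l)}_{K+U,\Sigma}\bigr]$ instead of the empirical average over the sampled $U_i$, and that your admitted vagueness about the exact Bernstein-type inequality mirrors the level of detail the paper itself provides for that step.
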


\begin{lemma}[$f({K,\Sigma})$ perturbation]\label{lemma: f perturbation} We have 
 $$
 |f(K,\Sigma) - f(K',\Sigma')|\leq h_{10}\|K'-K\| + h_{11}\|\Sigma'-\Sigma\|,
 $$
 if $\|K'-K\|\leq \min\{h_\Sigma, \|K\|\} $ and $\|\Sigma'-\Sigma\|\leq \|\Sigma\|$, where $h_{10} = ({2\gamma\|\Sigma\|  \|B^\top B+D^\top D\|}(1-\gamma)^{-1}  + \mu) h_5$ and $h_{11} = \frac{m\|\Sigma^{-1}\|_F}{2}+ \overline{\|\nabla_{\Sigma}q_{K,\Sigma}\|}.$ 
\end{lemma}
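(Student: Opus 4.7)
The plan is to exploit the decomposition $f(K,\Sigma)=P_K\mu+q_{K,\Sigma}$ from Lemma \ref{lemma: optimization} and break the perturbation into separate contributions from the change in $K$ and the change in $\Sigma$. Specifically, I would write
\[
f(K',\Sigma')-f(K,\Sigma)=\mu(P_{K'}-P_K)+(q_{K',\Sigma'}-q_{K,\Sigma'})+(q_{K,\Sigma'}-q_{K,\Sigma}),
\]
so that the first two summands depend only on the $K$ perturbation (with $\Sigma'$ held fixed) and the last only on the $\Sigma$ perturbation. The overall inequality will then follow from the triangle inequality applied to these three pieces.

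For the first two pieces I would invoke Lemma \ref{lemma: pk perturbation}: the assumption $\|K'-K\|\leq\min\{h_\Sigma,\|K\|\}$ makes it applicable and yields $|P_{K'}-P_K|\leq h_5\|K'-K\|$. The term $q_{K',\Sigma'}-q_{K,\Sigma'}$ equals $\frac{\gamma(P_{K'}-P_K)\,\text{Tr}(\Sigma'(B^\top B+D^\top D))}{1-\gamma}$, since every other summand in $q_{\cdot,\Sigma'}$ is independent of $K$. Using $\|\Sigma'-\Sigma\|\leq\|\Sigma\|$ I deduce $\|\Sigma'\|\leq 2\|\Sigma\|$ and control the trace by $2\|\Sigma\|\,\|B^\top B+D^\top D\|$. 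Combining this with $\mu h_5\|K'-K\|$ yields exactly the claimed coefficient $h_{10}=\bigl(\mu+\tfrac{2\gamma\|\Sigma\|\|B^\top B+D^\top D\|}{1-\gamma}\bigr)h_5$.

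For the remaining $\Sigma$-only piece I would apply Lemma \ref{lemma: almost smooth} in the special case $K'=K$, which bounds $q_{K,\Sigma'}-q_{K,\Sigma}$ above by the linear functional $\text{Tr}(\nabla_\Sigma q_{K,\Sigma}(\Sigma'-\Sigma))$ plus the curvature term $\tfrac{\tau m}{2(1-\gamma)}\text{Tr}((\Sigma^{-1}\Sigma'-I)^2)$. The linear part is at most $\overline{\|\nabla_\Sigma q_{K,\Sigma}\|}\,\|\Sigma'-\Sigma\|$ directly from Lemma \ref{lemma: norm bounds}. The curvature part I would reduce to a first-order bound via $\|\Sigma^{-1}\Sigma'-I\|_F\leq\|\Sigma^{-1}\|_F\,\|\Sigma'-\Sigma\|$ and then absorb one factor of $\|\Sigma'-\Sigma\|$ into the assumed bound $\|\Sigma'-\Sigma\|\leq\|\Sigma\|$, producing the $\tfrac{m\|\Sigma^{-1}\|_F}{2}$ contribution to $h_{11}$. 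Running the symmetric argument after swapping the roles of $\Sigma$ and $\Sigma'$ yields the matching lower bound, so the absolute value of this third piece is at most $h_{11}\|\Sigma'-\Sigma\|$.

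The main obstacle will be the last step: the curvature term from the almost-smoothness inequality is naturally second-order in $\|\Sigma'-\Sigma\|$, and converting it into the first-order bound encoded by $h_{11}$ requires careful use of the admissibility assumption together with Frobenius-versus-spectral-norm comparisons. One must also verify that the swapped-argument form of Lemma \ref{lemma: almost smooth} remains applicable, in particular that both $\Sigma$ and $\Sigma'$ satisfy the sandwich condition $aI\prec\cdot\prec I$, so that the two one-sided inequalities can be combined into an absolute-value bound without losing constants.
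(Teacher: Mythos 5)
Your proposal follows essentially the same route as the paper's proof: decompose the difference via the triangle inequality, bound the $K$-dependent part $q_{K',\Sigma'}-q_{K,\Sigma'}$ through Lemma \ref{lemma: pk perturbation} and $\|\Sigma'\|\leq 2\|\Sigma\|$ to get $h_{10}$, and bound the $\Sigma$-dependent part via the intermediate inequality in the proof of Lemma \ref{lemma: almost smooth} together with the gradient norm bound, converting the quadratic term $\frac{m}{2}\|\Sigma^{-1}\Sigma'-I\|_F^2$ into a first-order bound exactly as the paper does. If anything, your write-up is slightly more complete, since you explicitly carry the $\mu(P_{K'}-P_K)$ contribution that produces the $+\mu$ in $h_{10}$, which the paper's displayed triangle inequality omits even though its stated constant requires it.
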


With the above lemmas, we can now prove the following theorem.

\begin{theorem}[Global Convergence of SB-RPG]
 Given an arbitrary tolerance $\epsilon>0$
   and probability $\kappa\in(0,1)$.  If 
   
   (i) $\eta_1, \eta_2$, $\phi$ to be equal to the  values in Theorem \ref{theorem: RPG}. 
   
   (ii) $N\geq N_{SB}$, where $N_{SB}=\frac{N_{RPG}\log(1-\phi)}{\log(1-{\phi}/{2})}$ and  $N_{RPG}$ denotes the minium update steps in Theorem \ref{theorem: RPG}. 
   
   (iii) $M$, $l$, $r_1$ and  $r_2$ satisfy the conditions in Lemma \ref{lemma: est gk}, \ref{lemma: est gsigma}, and \ref{lemma: approxi S under perturbation} when $\kappa_1 = \kappa_3 = 1-(1-\kappa)^{1/(4N_{SB})}$, $\kappa_2 = 1-(1-\kappa)^{1/(2N_{SB})}$, $\epsilon_1 = \frac{ \mu\phi \epsilon}{8\eta_1(h_{10} + h_{11})}$, $\epsilon_2 = \frac{\phi \|\Sigma\|\epsilon}{2\eta_2(h_{10} + h_{11})}$ and $\epsilon_3 = \frac{\mu^2\phi \epsilon}{8\eta_1 \overline{\|{\nabla_K f(K,\Sigma)}\|}(h_{10} + h_{11})}$. 
Then SB-RPG (in Algorithm \ref{alg:model free RPG}) will have the following performance bound after $N$ times update  
$$f(K^{(N)}, \Sigma^{(N)}) - f(K^*,\Sigma^*) < \epsilon$$
with high probability (at least $1-\kappa$).
\end{theorem}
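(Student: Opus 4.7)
The plan is to reduce the analysis to that of Theorem \ref{theorem: RPG}: I will regard one step of SB-RPG as one step of exact RPG plus a small perturbation from the sample-based estimates, combine the noise-free contraction (Lemma \ref{lemma: contraction of RPG}) with the function-value perturbation (Lemma \ref{lemma: f perturbation}), and then iterate.

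At a generic iteration $n$, let $(K,\Sigma)=(K^{(n)},\Sigma^{(n)})$, let $(\tilde K,\tilde \Sigma)$ denote the exact RPG update from $(K,\Sigma)$ via \eqref{eqn:updating 1}--\eqref{eqn:updating 2}, and let $(K',\Sigma')$ denote the SB-RPG update. Conditionally on $(K,\Sigma)$ lying in the good region (with $aI\prec \Sigma \prec I$), Lemmas \ref{lemma: est gk}, \ref{lemma: est gsigma}, and \ref{lemma: approxi S under perturbation}, applied with tolerances $\epsilon_1,\epsilon_2,\epsilon_3$ and confidences $\kappa_1,\kappa_2,\kappa_3$, guarantee on a good event that
\[
\|\widehat\nabla_K - \nabla_K f(K,\Sigma)\|\leq \epsilon_1,\quad \|\widehat\nabla_\Sigma - \nabla_\Sigma f(K,\Sigma)\|\leq \epsilon_2,\quad |\widehat S - S_{K,\Sigma}|\leq \epsilon_3,\quad \widehat S \geq \mu/2.
\]
Splitting
$\widehat\nabla_K/\widehat S - \nabla_K f / S_{K,\Sigma} = (\widehat\nabla_K - \nabla_K f)/\widehat S + \nabla_K f\cdot(S_{K,\Sigma}-\widehat S)/(\widehat S\, S_{K,\Sigma})$ and using $\widehat S \geq \mu/2$, one obtains
\[
\|K' - \tilde K\| \leq \frac{2\eta_1\epsilon_1}{\mu} + \frac{2\eta_1\,\overline{\|\nabla_K f(K,\Sigma)\|}\,\epsilon_3}{\mu^2},\qquad \|\Sigma'-\tilde\Sigma\|\leq \eta_2\|\Sigma\|^2\epsilon_2.
\]

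I would then feed these iterate-level perturbations into Lemma \ref{lemma: f perturbation} to obtain $|f(K',\Sigma')-f(\tilde K,\tilde\Sigma)| \leq h_{10}\|K'-\tilde K\| + h_{11}\|\Sigma'-\tilde\Sigma\|$. Substituting the prescribed $\epsilon_1,\epsilon_2,\epsilon_3$ (and using $\|\Sigma\|\leq 1$), the right-hand side simplifies to $\phi\epsilon/2$. Combining with the exact contraction from Lemma \ref{lemma: contraction of RPG} applied to $(\tilde K,\tilde\Sigma)$ yields the noisy contraction
\[
f(K',\Sigma') - f(K^*,\Sigma^*)\leq (1-\phi)\big(f(K,\Sigma) - f(K^*,\Sigma^*)\big) + \frac{\phi\epsilon}{2}.
\]
A short case split on whether $f(K,\Sigma) - f(K^*,\Sigma^*)$ exceeds $\epsilon$ upgrades this to the effective contraction rate $(1-\phi/2)$ as long as the iterates have not yet entered the $\epsilon$-neighborhood, which explains the choice $N_{SB} = N_{RPG}\log(1-\phi)/\log(1-\phi/2)$: the geometric factor $(1-\phi/2)^{N_{SB}}(f(K^{(0)},\Sigma^{(0)})-f(K^*,\Sigma^*))$ falls below $\epsilon$ after $N_{SB}$ steps.

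Next I would induct over $n$, verifying at each step (via Lemma \ref{lemma: bound of sigma} for the exact update $\tilde\Sigma$ and the bound $\|\Sigma'-\tilde\Sigma\|\leq \eta_2\|\Sigma\|^2\epsilon_2$) that $\Sigma^{(n+1)}$ remains in the admissible region $aI\prec \Sigma\prec I$, so that the uniform gradient bound $\overline{\|\nabla_K f(K,\Sigma)\|}$ and the perturbation constants $h_{10},h_{11}$ apply throughout. To handle the probabilities, each iteration contributes two independent failure events in the $K$-update (for $\widehat\nabla_K$ and $\widehat S$, each of probability at most $\kappa_1=\kappa_3$) and one in the $\Sigma$-update (for $\widehat\nabla_\Sigma$, of probability $\kappa_2$). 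Independence across the $N_{SB}$ iterations then gives overall success probability $(1-\kappa_1)^{2N_{SB}}(1-\kappa_2)^{N_{SB}} = 1-\kappa$ with the prescribed choices $\kappa_1=\kappa_3=1-(1-\kappa)^{1/(4N_{SB})}$ and $\kappa_2=1-(1-\kappa)^{1/(2N_{SB})}$.

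The main obstacle I anticipate is the careful bookkeeping of constants so that the preconditions of Lemmas \ref{lemma: f perturbation}, \ref{lemma: gradient perturbation}, and \ref{lemma: approxi S under perturbation}---namely, that the iterate-to-iterate perturbations are bounded by $\min\{h_\Sigma,\|K\|\}$ and $\min\{\sigma_{\min}(\Sigma)/2,\|\Sigma\|\}$---hold uniformly along the entire trajectory. This reduces to a monotonicity argument: since $f(K^{(n)},\Sigma^{(n)})$ does not increase under the noisy contraction established above, all quantities indexed by $(K^{(n)},\Sigma^{(n)})$ (including $\overline{\|\nabla_K f\|}$, $h_{10}$, $h_{11}$, $h_\Sigma$) stay uniformly bounded in terms of $f(K^{(0)},\Sigma^{(0)})$, which closes the induction and completes the proof.
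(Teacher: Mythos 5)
Your proposal is correct and follows essentially the same route as the paper: one SB-RPG step is compared to the exact RPG step, the estimation Lemmas \ref{lemma: est gk}, \ref{lemma: est gsigma}, \ref{lemma: approxi S under perturbation} bound $\|K''-K'\|$ and $\|\Sigma''-\Sigma'\|$ (with the same split of $\widehat\nabla_K/\widehat S-\nabla_K f/S_{K,\Sigma}$ and the bound $\widehat S\geq \mu/2$), Lemma \ref{lemma: f perturbation} converts this to $|f(K'',\Sigma'')-f(K',\Sigma')|\leq \phi\epsilon/2$, the case split yields the $(1-\phi/2)$ contraction, and the per-iteration confidences $\kappa_1=\kappa_3$, $\kappa_2$ multiply to $(1-\kappa)$ over $N_{SB}$ iterations exactly as in the paper. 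Your added attention to keeping the iterates in the admissible region along the trajectory is slightly more explicit than the paper's "the rest of the proof remains the same as Theorem \ref{theorem: RPG}," but it is the same argument.
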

\begin{proof}
Define $K',\Sigma'$ as the result of one step update of RPG in \eqref{eqn:updating 1} and \eqref{eqn:updating 2}.  In Lemma \ref{lemma: contraction of RPG} we have when $\eta_1$ and $\eta_2$ are chosen properly, we have 
 $$ 
 f(K', \Sigma') - f(K^*, \Sigma^*) \leq (1-\phi) (f(K,\Sigma) - f(K^*, \Sigma^*)),
 $$
Define $K'' = K-\eta_1 \frac{\widehat{\nabla}_K}{\widehat{S}_{K, \Sigma}}$ and $\Sigma'' = \Sigma - \eta_2 \Sigma \widehat{\nabla}_\Sigma \Sigma$ where $\widehat{\nabla}_K$ and $\widehat{\nabla}_\Sigma$ are defined in Lemma \ref{lemma: est gk} and Lemma \ref{lemma: est gsigma}. 
We will show that when $\nabla_K f(K,\Sigma), \nabla_\Sigma f(K,\Sigma)$ and $S_{K,\Sigma}$ are estimated accurately enough, then we have
$|f(K'', \Sigma'') - f(K', \Sigma') | \leq \frac{\epsilon}{2} \phi$
, which implies that when $f(K',\Sigma') - f(K^*, \Sigma^*) > \epsilon$, 
$$
f(K'',\Sigma'') - f(K^*, \Sigma^*) \leq (1-\frac{\phi}{2})(f(K,\Sigma) - f(K^*, \Sigma^*) )
$$
with probability $(1-\kappa)^{1/N_{SB}}$.
As we proved perturbation of $f(K,\Sigma)$ in Lemma \ref{lemma: f perturbation}, we only need to establish the following two claims, both under condition given in the theorem.

(i) $\|K'' - K'\| \leq \frac{\phi\epsilon}{2(h_{10} + h_{11})}$  with probability $(1-\kappa)^{1/(2N_{SB})}$.
\begin{align*}
    \|K''-K'\| & = \eta_1 \Bigg\| \frac{\widehat{\nabla}_K}{\widehat{S}_{K, \Sigma}} - \frac{\nabla_K f(K, \Sigma)}{S_{K, \Sigma}} \Bigg\| \notag\\
     & \leq \eta_1\frac{1}{\widehat{S}_{K,\Sigma}}  \|\nabla_K f(K,\Sigma) - \widehat{\nabla}_K \| + \eta_1 \overline{ \| {\nabla_K f(K,\Sigma)} \| } \Bigg|\frac{1}{S_{K,\Sigma}} - \frac{1}{\widehat{S}_{K,\Sigma}}  \Bigg| \notag.
\end{align*}
For the first term,  from Lemma \ref{lemma: approxi S under perturbation} and Lemma \ref{lemma: est gk} we have
\begin{align*}
 \frac{\eta_1}{\widehat{S}_{K,\Sigma}} \|\nabla_K f(K,\Sigma) - \widehat{\nabla}_K \| \leq \frac{2\eta_1}{\mu} \|\nabla_K f(K,\Sigma) - \widehat{\nabla}_K \| \leq \frac{\epsilon\phi}{4(h_{10} + h_{11})}
\end{align*}
with probability at least $ (1-\kappa)^{1/(4N_{SB})}$, as we set $\epsilon_1 =  \frac{ \mu\phi}{8\eta_1(h_{10} + h_{11})}\epsilon$ and $ 1 - \kappa_1 = (1-\kappa)^{1/(4N_{SB})}$.
For the second term, by standard matrix perturbation and Lemma \ref{lemma: approxi S under perturbation}, we have
\begin{align*}
 \eta_1\|{\nabla_K f(K,\Sigma)}\| \Bigg|\frac{1}{S_{K,\Sigma}} - \frac{1}{\widehat{S}_{K,\Sigma}} \Bigg| \leq  \eta_1 \overline{\|{\nabla_K f(K,\Sigma)}\|} \frac{2|\widehat{S}_{K,\Sigma} - S_{K,\Sigma}|}{\mu^2} \leq \frac{\epsilon\phi}{4(h_{10} + h_{11})}
\end{align*}
with probability $ (1-\kappa)^{1/(4N)}$, 
as we set $\epsilon_3  =    \frac{\mu^2\phi}{8\eta_1 \overline{\|{\nabla_K f(K,\Sigma)}\|}(h_{10} + h_{11})}\epsilon$ and  $ 1 - \kappa_3 = (1-\kappa)^{1/(4N_{SB})}$.

(ii) $\|\Sigma'' - \Sigma' \| \leq \frac{\phi\epsilon}{2(h_{10} + h_{11})}$  with probability $(1-\kappa)^{1/2N_{SB}}$.

\begin{align*}
\|\Sigma'' - \Sigma' \| & = \eta_2 \|\Sigma \nabla_\Sigma f(K,\Sigma)\Sigma - \Sigma \widehat{\nabla}_\Sigma \Sigma \|\\
& \leq  \eta_2 \|\Sigma\| \|\nabla_\Sigma f(K,\Sigma)\Sigma -  \widehat{\nabla}_\Sigma\Sigma \|\\
 & \leq \eta_2 \|\Sigma\|^2 \|\nabla_\Sigma f(K,\Sigma) -  \widehat{\nabla}_\Sigma\|.
\end{align*}
From Lemma $\ref{lemma: est gsigma}$, we have $\|\Sigma'' - \Sigma' \|\leq \frac{\epsilon\phi}{2(h_{10} + h_{11})}$ with probability $(1-\kappa)^{1/2N}$, as we set $ \epsilon_2 =  \frac{\phi}{2\|\Sigma\|^2\eta_2(h_{10} + h_{11})}\epsilon$ and  $1- \kappa_2 = (1-\kappa)^{1/(2N_{SB})}$.

Combining (i), (ii) and Lemma \ref{lemma: f perturbation}, we have 
\begin{align*}
|f(K'', \Sigma'') - f(K', \Sigma') | & \leq h_{10}\|K'' - K'\| + h_{11}\|\Sigma'' - \Sigma'\| \leq \frac{\epsilon}{2} \phi
\end{align*}
with probability $(1-\kappa)^{1/N_{SB}}$. 
Then we have 
$$f(K'', \Sigma'') - f(K^*, \Sigma^*) \leq \left( 1-\frac{\phi}{2} \right)(f(K, \Sigma)-f(K^*, \Sigma^*))$$
 with probability $(1-\kappa)^{1/N_{SB}}$ for each iteration.
Now we have proved contraction of SB-RPG, the rest of the proof remains the same as Theorem \ref{theorem: RPG}, the probability of convergence becomes $((1-\kappa)^{1/N_{SB}})^{N_{SB}} = 1-\kappa $ after $N$ times iteration.
\end{proof}

\section{Numerical Experiments}\label{section: numerical}
In this section, we provide a  numerical experiment using SB-RPG (Algorithm \ref{alg:model free RPG}), where all the parameters are unknown. 
Although SB-RPG does not require the exact values of the parameters, we need to specify the following parameters to simulate the system cost of $f(K,\Sigma)$.  We set 
$A = 0.7, B = (0.1, 0.2, 0.3), C = 0.03, Q = 0.5, \gamma = 0.5, \tau = 0.1, $
$$
D = 
\begin{pmatrix}
0.05 & 0.13& 0.12\\
0.13 &0.07& 0.10 \\
 0.12& 0.10& 0.03
\end{pmatrix} \quad
R =
\begin{pmatrix}
1& 0& 0\\
0 & 1& 0\\
0 & 0& 1\\
\end{pmatrix}
.$$

The theoretical analysis provides rather conservative limits for the step size $\eta$, number of rollouts $M$, and rollout length $l$. To ensure practicality, we determine the constant step size, number of rollouts, rollout length, and exploration radius by performing a grid search over a set of reasonable values. In simulations, we obtained the baseline optimal cost $f(K^*,\Sigma^*)$ by solving the ARE in (2) to high accuracy ($e^{-5}$ accuracy) using value iteration.

\begin{figure}[htbp]
    \centering
    \begin{subfigure}[t]{0.4\textwidth}
        \centering
        \includegraphics[width=\textwidth]{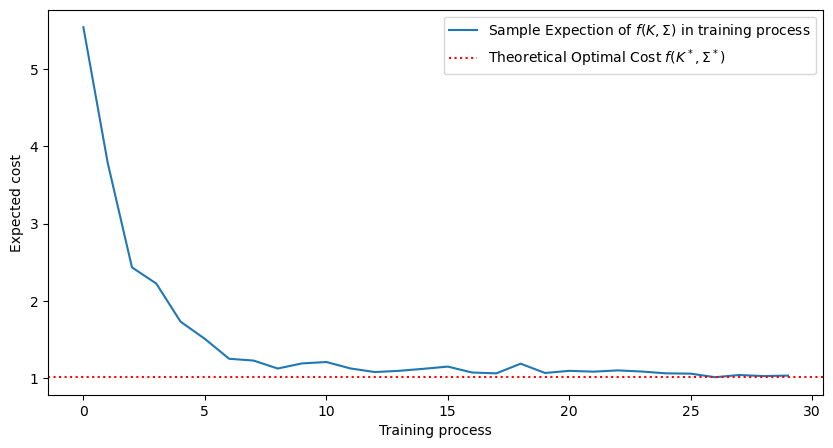}
        \caption{Expected cost of RPG}
        \label{fig:sub1}
    \end{subfigure}
    \hfill
    \begin{subfigure}[t]{0.4\textwidth}
        \centering
        \includegraphics[width=\textwidth]{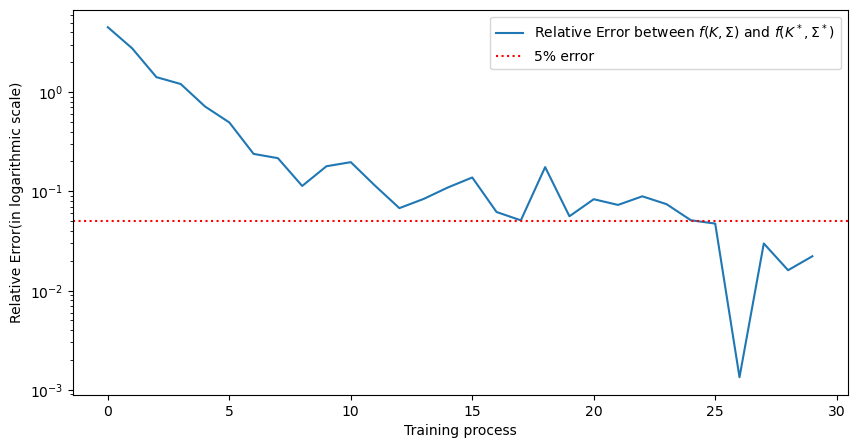}
        \caption{Relative error of $f(K,\Sigma)$ and $f(K^*,\Sigma^*)$}
        \label{fig:sub2}
    \end{subfigure}
    \caption{Expected cost of RPG}\label{fig: cost}
    \label{fig:both}
\end{figure}
In Figure \ref{fig: cost}, we compare the optimal cost $f(K^*,\Sigma^*)$ with the SB-RPG cost $f(K,\Sigma)$ throughout the training process. The left subfigure shows the absolute error $|f(K,\Sigma) - f(K^*,\Sigma^*)|$, while the right subfigure displays the relative error $\frac{|f(K,\Sigma) - f(K^*,\Sigma^*)|}{f(K^*,\Sigma^*)}$. As observed in the right subfigure, the relative error remains approximately 5\%. 
\begin{figure}[htbp]
    \centering
    \begin{subfigure}[t]{0.4\textwidth}
        \centering
        \includegraphics[width=\textwidth]{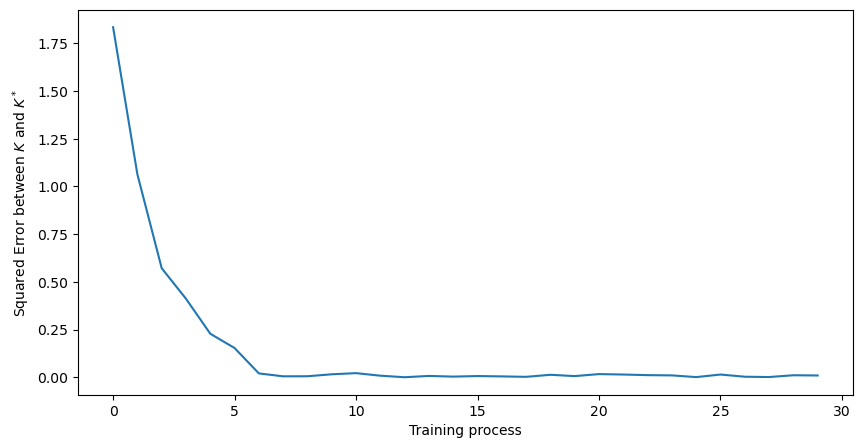}
        \caption{Squared error of $K$ and $K^*$}
        \label{fig:sub3}
    \end{subfigure}
    \hfill
    \begin{subfigure}[t]{0.4\textwidth}
        \centering
        \includegraphics[width=\textwidth]{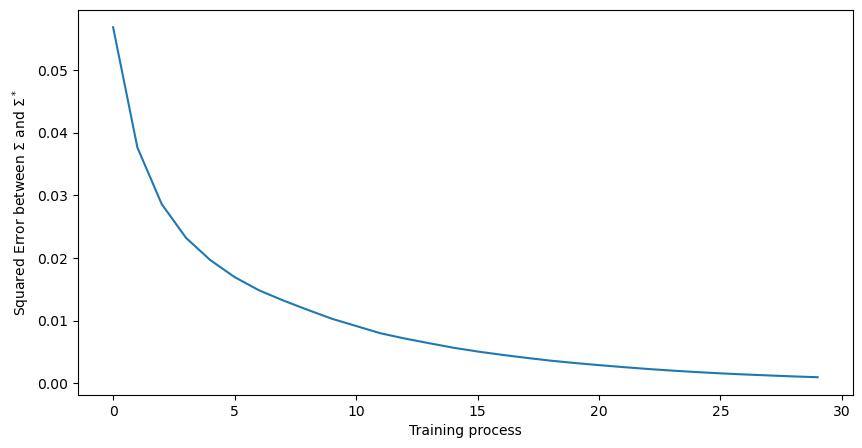}
        \caption{Squared error of $\Sigma$ and $\Sigma^*$}
        \label{fig:sub4}
    \end{subfigure}
    \caption{Squared error between optimal policy and SB-RPG}\label{fig: policy error}
    \label{fig:both}
\end{figure}

In Figure \ref{fig: policy error}, we present the relative error between the sample-based policy and the optimal policy. The left subfigure illustrates the squared error $\|K' - K\|_F^2$, and the right subfigure shows the squared error $\|\Sigma' - \Sigma\|_F^2$.

\section*{Acknowledgments}
The first two authors would like to thank the Department of Applied Mathematics at The Hong Kong Polytechnic University for the opportunity to join the research group and contribute to the meaningful work being carried out with its students.

\pagebreak

\newpage

\appendix \label{appendix}

\section{Proofs in Section \ref{section: formulation}}

\subsection{Proof of Lemma \ref{lemma: solve regularzied QP}}

The proof of Lemma \ref{lemma: solve regularzied QP} is provided in Section 8.1 of \cite{Guo2023fast}.

\subsection{Proof of Lemma \ref{lemma: optimization}}

\begin{proof} We use the similar augment with Theorem \ref{thm: optimal val func and policy}. Take the policy $(K,\Sigma)$ into objective function we have 
    \begin{align}
    J(x)&=  Q x^2 \notag  + \mathbb{E}_\pi 
         \Bigl\{ u^{T} R u +\tau \log(\pi(u|x))+
          \gamma  \left[ P ((A +w^xC)x + (B+ w^uD)u)^2  + q \right] \Bigr\}\notag \\
    &=  (Q+\gamma P(A^2+C^2))x^2 + \gamma q \notag \\
     &\quad + \mathbb{E}_\pi \Bigl\{  u^\top  (R+\gamma P (B^\top  B+D^\top  D)) u+ \tau \log(\pi(u|x)) +  2\gamma A P x B u  \Bigr\}\notag\\
     & = (Q+K^\top RK + \gamma P_K(A^2+C^2 + K^\top (B^\top B+D^\top D)K - 2ABK))x^2\notag\\
     &\quad + \gamma q_{K,\Sigma} - \frac{\tau}{2} (n + \log((2\pi)^n|\Sigma|)) + \text{Tr}(\Sigma (R + \gamma P_K (B^\top B+D^\top D)))\notag.
     \end{align}
Take the above equals to $P_K x^2 + q_{K,\Sigma}$, we have:
$$
P_K = Q+K^\top RK + \gamma P_K(A^2+C^2 + K^\top (B^\top B+D^\top D)K - 2ABK),
$$
$$
q_{K,\Sigma} = \frac{ \text{Tr}(\Sigma (R + \gamma P_K (B^\top B+D^\top D)) - \frac{\tau}{2} (n + \log((2\pi)^n|\Sigma|)))}{1-\gamma}.
$$
\end{proof}
\subsection{Proof of Lemma \ref{lemma: explicit form}}
\begin{proof}
Define $V_K = A^2+C^2 + K^\top (B^\top B+D^\top D)K - 2ABK$.
We know $f(K, \Sigma)=\mathbb{E}_{x \sim \mathcal{D}}[C(K, \Sigma)]=P_K\mathbb{E}x^2+q_{K,\Sigma}$.
$\nabla_{\Sigma}f(K,\Sigma)$ can be computed directly:
\begin{align*}
    \nabla_{\Sigma}f(K,\Sigma) & =\nabla_{\Sigma}\left(q_{K,\Sigma} \right) \notag\\
    & = \nabla_{\Sigma} \left( \frac{ \text{Tr}(\Sigma (R + \gamma P_K (B^\top B+D^\top D)) - \frac{\tau}{2} (n + \log((2\pi)^n|\Sigma|)))}{1-\gamma} \right) \notag\\
    &  = (1-\gamma)^{-1}(R+ \gamma P_K(B^\top B+D^\top D))^\top  - \frac{\tau}{2} \Sigma^{-1}. \notag\\ 
\end{align*}
From \eqref{eq:state} we have
\begin{align*}
\mathbb{E}(x_{t+1}^2) = V_K \mathbb{E}x^2_t + \text{Tr}(\Sigma(B^\top B+D^\top D)).
\end{align*}
From \eqref{eq: P_k = ...} we have
\begin{align*}
    \nabla_K P_K &= 2RK + \gamma P_K \nabla V_K + \gamma \nabla P_KV_K\\
    & = (2RK + \gamma P_K\nabla V_K) \sum_{t=0}^{\infty} (\gamma V_K)^t,
\end{align*}
and 
$$
\nabla_Kq_{K,\Sigma} =  \sum_{t=0}^{\infty}\gamma^{t+1} \text{Tr} (\Sigma(B^\top B+D^\top D))\nabla_K P_K.
$$
Combining the above we have 
\begin{align*}
\nabla_K f({K,\Sigma}) & = (2RK + \gamma P_K \nabla V_K)\mathbb{E}x_0^2 + \gamma  \nabla_K P_K  V_K\mathbb{E}x_0^2) + \sum_{t = 0}^\infty \gamma^{t+1} \text{Tr}(\Sigma (B^\top B + D^\top D)) \nabla_K P_K\\ 
& = (2RK + \gamma P_K \nabla V_K)\mathbb{E}x_0^2  + \gamma (V_K \mathbb{E}x_0^2 +  \text{Tr}(\Sigma (B^\top B + D^\top D))\nabla P_K \\
&\quad +  \sum_{t = 1}^\infty \gamma^{t+1} \text{Tr}(\Sigma (B^\top B + D^\top D)) \nabla P_K \\
& = (2RK + \gamma P_K \nabla V_K)\mathbb{E} x_0^2  + \gamma \mathbb{E} x_1^2 \nabla_K P_K +  \sum_{t = 1}^\infty \gamma^{t+1} \text{Tr}(\Sigma (B^\top B + D^\top D)) \nabla_K P_K \\
& =  (2RK + \gamma P_K \nabla V_K)(\mathbb{E}x_0^2 + \gamma \mathbb{E}x_1^2)  \\
&\quad + \gamma (\mathbb{E} x_1^2 \nabla P_K + \sum_{t = 0}^\infty \gamma^{t+1} \text{Tr}(\Sigma (B^\top B + D^\top D)) \nabla P_K)\\
& = E_K \sum_{t=0}^{\infty }(\gamma^t \mathbb{E}x_t^2 )\\
& = E_KS_{K,\Sigma}.
\end{align*}
The proof is completed.
\end{proof}

\section{Proofs in Section \ref{section: model based}}

\subsection{Proof of Lemma \ref{lemma: gradient domination}}
\begin{proof} 

The proof is divided into the following steps

\textbf{Definition of advantage}
For any policy $K, K'$ that have the finite cost,  denote their trajectories as $x_t$ and $x'_t$ respectively. When $x_0 = x'_0$,we have the following,
\begin{align*}
& C_{K',\Sigma'}(x_0) - C_{K,\Sigma}(x_0) \\
& = \mathbb{E}_{\pi'} \Biggl\{ \sum_{t=0}^{\infty} \gamma^t \left[ (x'_ t)^2(Q+K'^\top RK') + \tau \log\pi'(u_t'|x'_t) \right] \Biggl\} -C_{K,\Sigma}(x_0)\\ 
				      & = \mathbb{E}_{\pi'} \Biggl\{ \sum_{t=0}^{\infty} \gamma^t \left[ (x'_ t)^2(Q+K'^\top RK') + \tau \log\pi'(u_t'|x'_t) -C_{K,\Sigma}(x'_t) + C_{K,\Sigma}(x'_t) \right] \Biggl\}-C_{K}(x_0)\\ 
				      & = \mathbb{E}_{\pi'} \Biggl\{ \sum_{t=0}^{\infty} \gamma^t \left[ (x'_ t)^2(Q+K'^\top RK') + \tau \log\pi'(u_t'|x'_t) -C_{K,\Sigma}(x'_t) \right] + \sum_{t=1}^{\infty} \gamma^t C_K(x'_t) \Biggl\}\\
                      & = \mathbb{E}_{\pi'} \Biggl\{ \sum_{t=0}^{\infty} \gamma^t \left[ (x'_ t)^2(Q+K'^\top RK') + \tau \log\pi'(u_t'|x'_t) + \gamma C_{K,\Sigma}(x'_{t+1}) -C_{K,\Sigma}(x'_t) \right] \Biggl\}\\
				      & \triangleq \mathbb{E}_{\pi'} \Biggl\{ \sum_{t=0}^{\infty}  \gamma^t A_{K,\Sigma
                      }(x'_t, u_t') \Biggl\},
\end{align*}
where $A_K(x, K')$ is called ``advantage", which can be viewed as the change in cost starting at state $x$ between one if choose $ u_t = -K'x$ only at current time and then $u_t = -Kx_{t}$ for all $t$ after the current time (i.e., $x^2(Q+K'^\top RK') + C_K(x_{t+1})$) and one if $u_t = -Kx_t$ for all $t$ (i.e., $C_K(x)$).
We now want to find $\mathbb{E}_{\pi'} \left[ A_{K,\Sigma} (x,u') \right]$.
\begin{align*}
  \mathbb{E}_{\pi'} \left[ A_K(x,K') \right] & = \mathbb{E}_{\pi'} \Biggl\{ \sum_{t=0}^{\infty} \gamma^t \left[ (x'_ t)^2(Q+K'^\top RK') + \tau \log\pi'(u_t'|x'_t)  \right] \Biggl\} \\ & +\gamma \mathbb{E} \biggl\{ P_K\left[ (A+w^xC)x+(B+w^uD)u  \right]^2 \biggl\} -P_K x^2-q_{K,\Sigma}\\
                        &  = (Q+K'^\top RK')x^2 + Tr(\Sigma'R) - \frac{\tau}{2}(n+log(2\pi)^ n|\Sigma'|) - P_kx^2 - (1-\gamma)q_{K,\Sigma}\\
                        &  + \gamma\left[ P_K(A^2+C^2+ K'^\top (B^\top B+D^\top D)K' - ABK')x^2 + Tr(\Sigma'P_K(B^\top B+D^\top D)) \right]\\
                        &  = (Q+K'^\top RK' + \gamma P_KV_K')x^2 - \frac{\tau}{2}(n+log(2\pi)^ n|\Sigma'|) - P_kx^2 - (1-\gamma)q_{K,\Sigma} \\
                        &  + Tr(\Sigma'(R+\gamma P_K(B^\top B+D^\top D))) \\
                        &  = (Q+K'^\top RK' + \gamma P_KV_K')x^2 -P_Kx^2 + (\gamma-1)(q_{K,\Sigma} - q_{K,\Sigma'}). 
\end{align*}

\textbf{Cost Difference}\\
\begin{align*}
\mathbb{E}_{\pi'} \left[A_K(x,K') \right] & = (Q+K'^\top RK' + \gamma P_KV_K')x^2 -P_Kx^2 + (\gamma-1)(q_{K,\Sigma} - q_{K,\Sigma'})\\
		& = x^2[Q + K'^\top RK' + \gamma P_K V_{K'} - (Q + K^\top RK + \gamma P_KV_K)] + (\gamma-1)(q_{K,\Sigma} - q_{K,\Sigma'}) \\
		& = x^2[K'^\top RK' - K^\top RK + \gamma P_K(V_{K'} -V_{K})] + (\gamma-1)(q_{K,\Sigma} - q_{K,\Sigma'})\\
		& = x^2[K'^\top RK' - K^\top RK - 2\gamma P_K AB(K'-K)\\
		&\quad + \gamma P_K (K'^\top (B^\top B+D^\top D)K' - K^\top (B^\top B+D^\top D)K) ] + (\gamma-1)(q_{K,\Sigma} - q_{K,\Sigma'})\\
		& = x^2[(K + K' -K)^\top R(K + K' -K) - K^\top RK - 2\gamma P_K AB(K'-K)\notag \\
		&\quad + \gamma P_K ((K + K' -K) ^\top (B^\top B+D^\top D)(K + K' -K) \\
		&\quad -  \gamma P_K K^\top (B^\top B+D^\top D)K) ] + (\gamma-1)(q_{K,\Sigma} - q_{K,\Sigma'})\\
		& = x^2[K' -K)^\top (R + \gamma P_K(B^\top B+D^\top D))(K' -K) \\
		& \quad + 2(K'-K)(R + \gamma P_K(B^\top B+D^\top D))K-\gamma P_K AB^\top  ] + (\gamma-1)(q_{K,\Sigma} - q_{K,\Sigma'})\\
		& = x^2[(K' -K)^\top (R + \gamma P_K(B^\top B+D^\top D))(K' -K)+ 2(K'-K)^\top E_K] \\
            & +(\gamma-1)(q_{K,\Sigma} - q_{K,\Sigma'}).
\end{align*}
as $\mathbb{E}_{\pi'} \left[A_K(x,K') \right]$ is in a quadratic form of $K' - K$, we have, 
\begin{align*}
\mathbb{E}_{\pi'} \left[A_K(x,K') \right] & \geq -x^2[E_K^\top (R + \gamma P_K(B^\top B+D^\top D))^{-1}E_K] + (\gamma-1)(q_{K,\Sigma} - q_{K,\Sigma'})
\end{align*}
with equality when $K' -K = -(R + \gamma P_K(B^\top B+D^\top D))^{-1}E_K$.\\

\textbf{Upper Bound}\\
Let $S_K = \sum_{t=0}^{\infty}\gamma^t \mathbb{E}[x_t^2]$ and remember that $\nabla_K f(K,\Sigma) = S_KE_K$. \\

\begin{align*}
&C_{K,\Sigma}(x_0) - C_{K^*,\Sigma^*}(x_0) \\
&  =  -\mathbb{E}_{\pi'} \left[ \sum_{t=0}^{\infty}  \gamma^t A_{K}(x^*_t, \pi^*) \right] \\
				       & = \sum_{t=0}^{\infty}  \gamma^t \mathbb {E} (x^*_t)^2[E_K^\top (R + \gamma P_K(B^\top B+D^\top D))^{-1}E_K] - (\gamma-1)(q_{K,\Sigma} - q_{K,\Sigma^*})\\
                       &  =  \sum_{t=0}^{\infty}  \gamma^t \mathbb {E} (x^*_t)^2[E_K^\top (R + \gamma P_K(B^\top B+D^\top D))^{-1}E_K] +(q_{K,\Sigma} - q_{K,\Sigma^*})\\
                          & = S_{K^*, \Sigma^*} E^\top _K(R+\gamma P_K(B^\top B + D^\top D))^{-1}E_K + (q_{K,\Sigma^*} - q_{K,\Sigma}) \\
				       & \leq \frac{S_{K^*, \Sigma^*}}{\sigma_{min}(R)} {E_K^\top E_K } +Tr(\nabla_{\Sigma} q_{K,\Sigma}^\top (\Sigma-\Sigma^*))\\
                       &  \leq \frac{S_{K^*,\Sigma^*}}{\mu^2 \sigma_{min}(R)} \nabla^\top _K f_{K,\Sigma}(x_0) \nabla_K f_{K,\Sigma}(x_0)  \\
                          & \quad + Tr[\nabla_{\Sigma} C_{K,\Sigma}(x_0)((R+\gamma P_K(B^\top B+D^\top D)))^{-1}((R+\gamma P_K(B^\top B+D^\top D))^\top -\frac{\tau}{2}\Sigma^{-1} )\Sigma]\\
                          &  \leq\frac{S_{K^*,\Sigma^*}}{\mu^2 \sigma_{min}(R)} \nabla^\top _K f_{K,\Sigma}(x_0) \nabla_K f_{K,\Sigma}(x_0)   \\
                          & \quad  + (1-\gamma) \text{Tr}[\nabla_{\Sigma} C_{K,\Sigma}(x_0)((R+\gamma P_K(B^\top B+D^\top D)))^{-1}\nabla_{\Sigma} C_{K,\Sigma}(x_0)] \\
                          & \leq\frac{S_{K^*,\Sigma^*}}{\mu^2 \sigma_{min}(R)} \nabla^\top _K f_{K,\Sigma}(x_0) \nabla_K f_{K,\Sigma}(x_0)   + \frac{(1-\gamma)\text{Tr}[(\nabla_{\Sigma} C_{K,\Sigma}(x_0))^2]}{\sigma_{min}(R)}.
    \end{align*}

As $q_{K,\Sigma}$ is a concave function w.r.t. $\Sigma$, so 
$q_{K,\Sigma^*} - q_{K,\Sigma}\leq \nabla_\Sigma q_{K,\Sigma}^\top  (\Sigma-\Sigma^*)$. $ \Sigma \preceq I$  

Now, taking the expectation w.r.t $x_0$ on both sides we have,
\begin{equation}
 f(K, \Sigma) - f(K^*, \Sigma)   \leq \frac{1}{ \mu \sigma_{min}(R)} \nabla_K f^\top (K, \Sigma) \nabla_K f(K,\Sigma) +\frac{(1-\gamma)\|\nabla_{\Sigma}f(K,\Sigma)\|^2}{\sigma_{min}(R)}.
\end{equation}
 
\textbf{Lower Bound} $C_{K^*}(x_0) \leq C_{K'}(x_0)$ for any $K' \in \mathbb{R}^n$,  we considering when 
$K' = K - (R + \gamma P_K(B^\top B+D^\top D))^{-1}E_K$
\begin{align*}
C_{K}(x_0) - C_{K^*}(x_0) & \geq C_{K}(x_0) - C_{K'}(x_0) \\
					& =-\mathbb{E}_{\pi'} \left[ \sum_{t=0}^{\infty}  \gamma^t A_{K}(x'_t, K') \right] \\
					& = S_K[E_K^\top (R + \gamma P_K(B^\top B+D^\top D))^{-1}E_K] +h_K(\Sigma)-h_K(\Sigma')\\
					& \geq \frac{\mathbb{E}[x_0^2]}{S_K^2\|R + \gamma P_K(B^\top B+D^\top D)\|}\nabla_K C_{K,\Sigma}^\top (x_0) \nabla_KC_{K,\Sigma}(x_0)
					\end{align*}.
taking the expectation w.r.t. $x_0$ on both sides we have
\begin{equation*}
 f(K,\Sigma)- f(K^*,\Sigma^*) \geq  \frac{\mathbb{E}[x_0^2]}{S_K^2  \|R + \gamma P_K(B^\top B+D^\top D)\|} \nabla f^\top (K,\Sigma) \nabla f(K,\Sigma). 
\end{equation*}
\end{proof}


\subsection{Proof of Lemma \ref{lemma: norm bounds}}
\begin{proof}
We have 
\begin{align}
    \| \nabla_Kf(K,\Sigma) \| & = \| E_K S_{K, \Sigma} \| \notag\\
                              & \leq \| E_K \| \frac{f(K, \Sigma) - \Omega}{Q} \notag\\
                              & \leq \frac{f(K, \Sigma) - \Omega}{Q} \sqrt{\lambda_1^{-1}(f(K, \Sigma)-f(K^*, \Sigma^*))},
\end{align}
where $\Omega = Tr(\Sigma R) - \frac{\tau}{2} \left(n+log(2 \pi)^n |\Sigma| \right)$ and the last line is by Lemma \ref{lemma: gradient domination}

\begin{align*}
    \| \nabla_{\Sigma}f(K, \Sigma) \| & = \| (1- \gamma)^{-1} \left( (R+\gamma P_K(B^\top B+D^\top D))^\top  + \frac{\tau}{2} \| \Sigma^{-1} \| \right) \| \notag\\
    & \leq (1-\gamma)^{-1} \left( \| R+ \gamma P_K(B^\top B+D^\top D) \| + \frac{\tau}{2} \| \Sigma^{-1} \|\right) \notag\\
    & \leq (1-\gamma)^{-1} \left( \| R+ \gamma P_K(B^\top B+D^\top D) \| + \frac{\tau}{2 \sigma_{min}(\Sigma)}  \right).
\end{align*}
Where the last line is because $\| \Sigma^{-1}\| \leq \sigma_{max}(\Sigma^{-1}) =\frac{1}{\sigma_{min}(\Sigma)}$
\end{proof}


\subsection{Proof of Lemma \ref{lemma: almost smooth}}
\begin{proof}

\begin{align*}
C_{K'}(x_0) - C_{K}(x_0)& =  \sum_{t=0}^{\infty}  \gamma^t \mathbb{E}[A_{K}(x'_t, K')]\\
				&= \displaystyle S_{K',\Sigma'}[(K' -K)^\top (R + \gamma P_K(B^\top B+D^\top D))(K' -K)+ 2(K'-K)E_K] + q_{K',\Sigma'}-q_{K,\Sigma}\notag
\end{align*}
We now need to prove the smoothness of $q_{K,\Sigma}$ with respect to $\Sigma$ by showing: 
$$q_{K',\Sigma'}-q_{K,\Sigma} + Tr\left(\nabla_{\Sigma}q_{K,\Sigma}^\top (\Sigma-\Sigma')\right)\leq \frac{m}{2}Tr((\Sigma^{-1}\Sigma'-I)^2)$$ 
Observe
\begin{align*}
    q_{K,\Sigma}-q_{K,\Sigma'} + Tr\left(\nabla_{\Sigma}q_{K,\Sigma}^\top (\Sigma'-\Sigma)\right) = \frac{\tau}{2(1-\gamma)} \left[ log(\Sigma^{-1}\Sigma') -Tr(\Sigma^{-1}\Sigma'-I)\right]
\end{align*}
But since $\Sigma \text{ and }\Sigma'$ are positive definite so is $\Sigma^{-1}\Sigma'$. Then $\sigma_{min}(\Sigma^{-1}\Sigma')\geq\sigma_{min}(\Sigma^{-1})\sigma_{min}(\Sigma') \geq a >0$. In addition, $a\leq \lambda_1 \leq ...\leq \lambda_n$ where $\lambda_i's$ are the eigenvalues of $\Sigma^{-1}\Sigma'$. Note that $log(\Sigma^{-1}\Sigma') -Tr(\Sigma^{-1}\Sigma'-I) = \sum_{i=1}^n log(\lambda_i) + \lambda_i -1 \leq m\sum_{i=1}^n(\lambda_i-1)^2$ so let $m = \frac{log(a)+a-1}{(a-1)^2}$. Taking the expectation of $x_0$ on both sides completes the proof.
\end{proof}


\subsection{Proof of Lemma \ref{lemma: bound of sigma}}

\begin{proof}
    We will first show that 
    $$ aI \preceq \Sigma - \frac{\eta}{1- \gamma}(R-\frac{\tau}{2}\Sigma^{-1}+\gamma P_K(B^\top B+D^\top D)) \prec I$$
    Let $h(y) = y+ \frac{\tau}{2(1-\gamma)y}$ which is monotonic increasing for $y \in \left[ \sqrt{\frac{\eta \tau}{2(1-\gamma)}}, \infty \right)$ as $\sqrt{\frac{\eta \tau}{2(1-\gamma)}} \leq a \leq \frac{\sigma_{min}(R)}{\| R+ \gamma P_K(B^\top B+D^\top D) \|} < 1$. Now observe: 
    $$\Sigma + \frac{\eta\tau}{2(1-\gamma)} \Sigma' - \frac{\eta}{1-\gamma}(R+\gamma P_K(B^\top B+D^\top D)) \succeq \left(a+ \frac{\eta \tau}{2(1-\gamma)a} \right) - \frac{\eta}{(1-\gamma)}(R+\gamma P_K(B^\top B+D^\top D))$$ 
    $$ \succeq \left(a + \frac{\eta}{1-\gamma} \| R+\gamma P_K(B^\top B+D^\top D) \| \right)I - \frac{\eta}{1-\gamma} (R+\gamma P_K(B^\top B+D^\top D)) \succeq aI $$ 
    Now,
    $$\Sigma + \frac{\eta \tau}{2(1-\gamma)} \Sigma^{-1} - \frac{\eta}{1- \gamma}(R+\gamma P_K(B^\top B+D^\top D)) \preceq \left( 1 + \frac{\eta \tau}{2(1- \gamma)} \right)I - \frac{\eta}{1- \gamma}(R+ \gamma P_K(B^\top B+D^\top D))$$ 
    $$ \preceq \left( 1+ \frac{\eta}{1-\gamma} \sigma_{min}(R) \right)I - \frac{\eta}{1- \gamma} (R+ \gamma P_K(B^\top B+D^\top D)) \preceq I $$ 
    And so using these facts we can write, 
    $$aI \preceq \Sigma - \frac{\eta}{1- \gamma}(R-\frac{\tau}{2}\Sigma^{-1}+\gamma P_K(B^\top B+D^\top D)) \prec I$$ \\
    Next I will show $aI \preceq \Sigma' \preceq I$. Observe that: \\
    $$aI - \Sigma \preceq - \frac{\eta}{1- \gamma}(R+ \gamma P_K(B^\top B+D^\top D) - \frac{\tau}{2} \Sigma^{-1}) \preceq I - \Sigma$$ 
    Now multiply both sides by $\Sigma$ and add $\Sigma$ which yields: 
    $$a \Sigma^2 - \Sigma^3 + \Sigma \preceq \Sigma - \frac{\eta}{1- \gamma}\Sigma \left(R+ \gamma P_K(B^\top B+D^\top D) - \frac{\tau}{2}\Sigma^{-1} \right) \Sigma \preceq \Sigma^2 - \Sigma^3 + \Sigma$$
$a<\sigma_{min}(\Sigma)$ so we have $aI -\Sigma\preceq 0$.
$aI - \Sigma \preceq  (aI-\Sigma)\Sigma^2 $ as $\Sigma \preceq I$.
$a\Sigma^2 -\Sigma^3 +\Sigma \succeq aI$.

As $I - \Sigma \succeq 0$, we have $I-\Sigma \succeq (I - \Sigma)\Sigma^2$, so $\Sigma^2-\Sigma^3 +\Sigma\preceq I$
The proof is completed
\end{proof}


\subsection{Proof of Lemma \ref{lemma: contraction of RPG}}

\begin{proof}
\begin{align*}
    f(K',\Sigma') - f(K,\Sigma)
    &= S_{K'} [(K' -K)^\top (R + \gamma P_K(B^\top B+D^\top D))(K' -K)+ 2(K'-K)^\top E_K] \notag\\
    & \quad  +
    q_{K',\Sigma'} - q_{K,\Sigma}
\end{align*}

Using  RPG  and  $\eta_1 \leq \frac{1}{\|R + \gamma P_K(B^\top B+D^\top D)\|}$ we have 
\begin{align*}
& S_{K'} [(K' -K)^\top (R + \gamma P_K(B^\top B+D^\top D))(K' -K)+ 2(K'-K)^\top E_K] \\
& \quad \leq S_{K'}[\eta_1^2E_K^\top (R + \gamma P_K(B^\top B+D^\top D)) E_K - 2\eta_1E_K^\top E_K]\\
& \quad \leq -\eta_1  S_{K'} E_K^\top E_K\\
& \quad \leq -\eta_1 \mu \frac{\sigma_{min}{(R)}}{S_{K^*, \Sigma^*}}E^\top _K(R+\gamma P_K(B^\top B + D^\top D))^{-1}E_K\\
\end{align*}
From lemma \ref{lemma: almost smooth}, we have
\begin{align*}
q_{K,\Sigma'} - q_{K,\Sigma} & \leq \frac{\text{Tr}\left( ((R+ \gamma P_K(B^\top B+D^\top D)) - \frac{\tau}{2} \Sigma^{-1})(\Sigma'-\Sigma)\right)}{(1- \gamma)}+ \frac{\tau m}{2(1-\gamma)} Tr((\Sigma^{-1}\Sigma'- I)^2)\\
& =\frac{-\eta_2}{{(1- \gamma)}} {\text{Tr}[\left( ((R+ \gamma P_K(B^\top B+D^\top D)) - \frac{\tau}{2} \Sigma^{-1})\Sigma\right)^2]}\\
& \quad + \frac{\eta_2^2\tau m}{2(1-\gamma)^3} \text{Tr}( ( (R+ \gamma P_K(B^\top B+D^\top D))\Sigma - \frac{\tau}{2}I )^2)\\
& \leq -\frac{\eta_2}{2(1-\gamma)^2}\text{Tr}[(R + \gamma P_K (B^\top B+D^\top D))^2],
\end{align*}
$$
\eta_2 \leq \frac{2(1-\gamma)a^2}{\tau}\leq \frac{2(1-\gamma)}{\tau}\left(\frac{\tau}{2\|R+\gamma P_K (B^\top B+D^\top D)\|}\right)\leq \frac{2(1-\gamma)}{\tau} \leq \frac{(1-\gamma)}{\tau m},
$$

From lemma \ref{lemma: gradient domination}, we have
\begin{align*}
q_{K,\Sigma^*} - q_{K,\Sigma}& \leq Tr[\nabla_{\Sigma} C_{K,\Sigma}(x_0)((R+\gamma P_K(B^\top B+D^\top D)))^{-1}((R+\gamma P_K(B^\top B+D^\top D))-\frac{\tau}{2}I
)]\\
&\leq \frac{1}{(1-\gamma) \sigma_{min}(R)}Tr[((R+\gamma P_K(B^\top B+D^\top D))-\frac{\tau}{2}I)^2\Sigma^{-1}]\\
&\leq \frac{1}{(1-\gamma) a\sigma_{min}(R)}Tr[((R+\gamma P_K(B^\top B+D^\top D))-\frac{\tau}{2}I)^2].
\end{align*}
Combining the above we have 
\begin{align*}
q_{K,\Sigma'} - q_{K, \Sigma} \leq \frac{\eta_2a\sigma_{min}(R)}{2(1-\gamma)}(q_{K,\Sigma} - q_{K,\Sigma^*}).
\end{align*}
Finally, with $\phi = min\{ \eta_1 \mu \frac{\sigma_{min}{(R)}}{S_{K^*,\Sigma^*}}, \frac{\eta_2 a\sigma_{min}(R)}{2(1-\gamma)}\}$, we have
\begin{align*}
& f(K', \Sigma') - f(K, \Sigma)\\
 & \quad \leq -\eta_1 \mu \frac{\sigma_{min}{(R)}}{S_{K^*,\Sigma^*}}E^\top _K(R+\gamma P_K(B^\top B + D^\top D))^{-1}E_K + \frac{\eta_2 \sigma_{min} (R)}{2(1-\gamma)} (q_{K,\Sigma} - q_{K, \Sigma^*})\\
 & \quad \leq -\phi \left({S_{K^*,\Sigma^*}}E^\top _K(R+\gamma P_K(B^\top B + D^\top D))^{-1}E_K + (q_{K,\Sigma^*} - q_{K, \Sigma})\right)\\
  & \quad \leq -\phi (f(K,\Sigma) - f(K^*, \Sigma^*))
\end{align*}
and
$$
f(K', \Sigma') - f(K^*, \Sigma^*) \leq (1-\phi) (f(K,\Sigma) - f(K^*, \Sigma^*)).
$$
\end{proof}


\subsection{Proof of Lemma \ref{lemma: lower bodun of f}}
\begin{proof}
\begin{align*}
  q_{K,\Sigma}  &=  \frac{ \text{Tr}(\Sigma (R + \gamma P_K (B^\top B+D^\top D)) - \frac{\tau}{2} (n + \log((2\pi)^n|\Sigma|)))}{1-\gamma}\\
  &\geq \frac{1}{1-\gamma}[\sigma_{min}(R)\text{Tr}(\Sigma) - \frac{\tau}{2}(k + klog(2\pi) + \log|\Sigma|)]\\
  & \geq \frac{1}{1-\gamma} \left[\frac{\tau k}{2} - \frac{\tau}{2}(k + log(2\pi)) - \frac{\tau K}{2}log \left(\frac{\tau}{2\sigma_{min}(R)} \right) \right].
\end{align*}
As $\frac{\tau k}{2} - \frac{\tau}{2}(k + log(2\pi)) - \frac{\tau k}{2}log(\frac{\tau}{2\sigma_{min}(R)})$ is a convex function w.r.t. $\Sigma$ with minimizer $\frac{\tau}{2\sigma_{min}(R)}I$, so we have,
$$
q_{K,\Sigma} \geq \frac{\tau k}{2(1-\gamma)}log \left(\frac{\sigma_{min}(R)}{\pi \tau} \right).
$$

\end{proof}


\section{Proofs in Section \ref{section: model free}}

\subsection{Proof of Lemma \ref{lemma:upper bound of S_K}}
\begin{proof}
\begin{align*}
    f(K,\Sigma) &= \mathbb{E}_\pi\left[\sum_{t=0}^\infty \gamma^t ( Q x_t^2 + u_t^{T} R u_t)\right] - \frac{\frac{\tau}{2}( n + log(2\pi)^n|\Sigma| )}{1-\gamma} \\
                & = (Q+K^\top RK)S_K + \frac{Tr(\Sigma R) - \frac{\tau}{2} \left( n + log(2\pi)^n|\Sigma| \right)}{1-\gamma}. \\
\end{align*}
Then we have 
\begin{align*}
    S_K & = \frac{f(K,\Sigma) - (1-\gamma)^{-1}\left[ Tr(\Sigma R) - \frac{\tau}{2} \left( n + log(2\pi)^n|\Sigma| \right)\right]}{Q+K^\top RK} \\
        & \leq \frac{f(K, \Sigma)-(1-\gamma)^{-1}\left[ Tr(\Sigma R) - \frac{\tau}{2} \left( n + log(2\pi)^n|\Sigma| \right)\right]}{Q}
\end{align*}
and
\begin{align*}
    \mathbb{E}[x^2_{t+1}] & = V_K \mathbb{E}[x_t^2] + Tr(\Sigma(B^\top B+D^\top D)) \\
                          & =V_K (V_K \mathbb{E}[x_{t-1}^2] + Tr(\Sigma(B^\top B+D^\top D))) + Tr(\Sigma(B^\top B+D^\top D))) \\
                          & = V_K^2 \mathbb{E}[x_{t-1}^2] + (V_K+1) Tr(\Sigma(B^\top B+D^\top D)) \\
                          & = V_K^{t+1} \mu + Tr(\Sigma(B^\top B+D^\top D)) \sum_{i=0}^{t} V_K^i.
\end{align*}
Observe that 
\begin{align*}
    S_K & = \sum_{t=0}^{\infty} \gamma^t \mathbb{E}[x_t^2] \\
        & = \sum_{t=0}^{\infty} \gamma^t \left( V_K^t \mathbb{E}[x_0^2] + Tr(\Sigma(B^\top B+D^\top D)) \frac{1-V_K^t}{1-V_K} \right) \\
        & = \mu \sum_{t=0}^{\infty} (\gamma V_K)^t + \frac{Tr(\Sigma(B^\top B+D^\top D))}{1-V_K} \sum_{t=0}^{\infty} \gamma^t(1-V_K^t) \\
        & = \mu\sum_{t=0}^{\infty} (\gamma V_K)^t + \frac{Tr(\Sigma(B^\top B+D^\top D))}{1-V_K} \left[ \frac{1}{1-\gamma} -\frac{1}{1-\gamma V_K} \right].
\end{align*}
\end{proof}
\subsection{Proof of Lemma \ref{lemma: approximate fk}}
\begin{proof}
Note that
\begin{align}
 f(K, \Sigma) & = \mathbb{E}[P_Kx_0^2 + q_{K,\Sigma}] \notag\\
              & = (Q+K^\top RK)S_K + (1- \gamma)^{-1} \left[ Tr(\Sigma R) - \frac{\tau}{2} (n+log(2 \pi)^n|\Sigma|) \right] \notag \\ 
S_K & = \mathbb{E}[x_0^2] \sum_{t=0}^{\infty} (\gamma V_K)^t + \frac{Tr(\Sigma(B^\top B+D^\top D))}{1-V_K} \sum_{t=0}^{\infty} \gamma^t(1-V_K^t),
\end{align}
So
\begin{align}
 f^{(l)}(K, \Sigma) & = (Q+K^\top RK)S^{(l)}_K + \frac{1-\gamma^l}{1- \gamma} \left[ Tr(\Sigma R) - \frac{\tau}{2} (n+log(2 \pi)^n|\Sigma|) \right] \notag \\ 
S^{(l)}_{K,\Sigma} & = \mu \sum_{t=0}^{l-1}(\gamma V_K)^t + \frac{Tr(\Sigma(B^\top B+D^\top D))}{1-V_K} \sum_{t=0}^{l-1} \gamma^t(1- V_K^t).
\end{align}
Next note that
\begin{align*}
    S_K - S_K^{(l)} & = \mu \sum_{t=l}^{\infty}(\gamma V_K)^t + \frac{Tr(\Sigma(B^\top B+D^\top D))}{1-V_K} \sum_{t=l}^{\infty} \gamma^t(1- V_K^t) \\
    & = (\mu - \frac{Tr(\Sigma(B^\top B+D^\top D))}{1-V_K})\frac{(\gamma V_K)^l }{1-\gamma V_K} + \frac{Tr(\Sigma(B^\top B+D^\top D))}{1-V_K}\frac{\gamma^l }{1-\gamma}\\
    &  \leq  (\mu - \frac{Tr(\Sigma(B^\top B+D^\top D))}{1-V_K})\frac{\gamma^l }{1-\gamma V_K} + \frac{Tr(\Sigma(B^\top B+D^\top D))}{1-V_K}\frac{\gamma^l }{1-\gamma}\\
    & = \gamma^l S_{K,\Sigma}
\end{align*}
and 
\begin{align*}
f(K, \Sigma) - f^{(l)}(K, \Sigma) & = (Q + K^\top RK) \sum_{t=l}^{\infty}\mathbb{E} \gamma^t x_t^2 +(1-\gamma)^{-1} \left[ Tr(\Sigma R) - \frac{\tau}{2}(n+ log(2\pi)^n|\Sigma|)\right] \gamma^l  \notag\\
		      & = (Q + K^\top RK) (S_K- S_K^{(l)}) +(1-\gamma)^{-1} \left[ Tr(\Sigma R) - \frac{\tau}{2}(n+ log(2\pi)^n|\Sigma|)\right] \gamma^l \notag\\
		      & \leq (Q + K^\top RK) \gamma^l S_{K,\Sigma} +(1-\gamma)^{-1} \left[ Tr(\Sigma R) - \frac{\tau}{2}(n+ log(2\pi)^n|\Sigma|)\right] \gamma^l \\
              & = \gamma^l\left[ (Q + K^\top RK)S_{K,\Sigma} + \frac{ Tr(\Sigma R) - \frac{\tau}{2}(n+ log(2\pi)^n|\Sigma|)}{1-\gamma }\right].
\end{align*}
taking $l$ in the above inequality completes the proof.
\end{proof}

\subsection{Two useful Lemmas}

\begin{lemma}\label{lemma: bound transform}
If  $ \frac{|\gamma V_K - \gamma V_{K'}|}{1-\gamma V_K}  \leq \frac{1}{2} $, then
\begin{align*}
|(1-\gamma V_K)^{-1} - (1-\gamma V_{K'})^{-1}| & \leq  2(1-\gamma V_K)^{-2} |\gamma V_{K'} - \gamma V_K|.
\end{align*}
\end{lemma}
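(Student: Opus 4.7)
The plan is to reduce the claim to an elementary scalar identity. Since $(K,\Sigma),(K',\Sigma')\in\Omega$ enforces $\gamma V_K,\gamma V_{K'}<1$, both $a := 1 - \gamma V_K$ and $b := 1 - \gamma V_{K'}$ are strictly positive scalars. The identity $a^{-1} - b^{-1} = (b-a)/(ab)$ then rewrites the left-hand side as $|b-a|/(ab)$, where $|b - a| = |\gamma V_K - \gamma V_{K'}|$ is exactly the perturbation term that appears on the right. So the whole lemma reduces to bounding $ab$ from below by a constant multiple of $a^2$.

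The only remaining task is therefore to control $b$ in terms of $a$. First I would apply the reverse triangle inequality together with the hypothesis $|\gamma V_K - \gamma V_{K'}|/(1-\gamma V_K) \leq 1/2$ to conclude
\[
b = a - (b-a) \geq a - |b - a| \geq a - \tfrac{1}{2}a = \tfrac{1}{2}a.
\]
Substituting $b \geq a/2$ into the denominator of $|b-a|/(ab)$ yields
\[
|a^{-1} - b^{-1}| \;=\; \frac{|b-a|}{a\,b} \;\leq\; \frac{2|b-a|}{a^{2}} \;=\; \frac{2|\gamma V_K - \gamma V_{K'}|}{(1-\gamma V_K)^{2}},
\]
which is precisely the stated bound. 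I do not foresee any obstacle here; this is a routine first-order perturbation estimate of the type used repeatedly in the paper's perturbation lemmas, and the factor $2$ on the right is exactly the slack introduced by the $\tfrac{1}{2}$ threshold in the hypothesis.
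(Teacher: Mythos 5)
Your proposal is correct and is in substance the same argument as the paper's: the paper factors $(1-\gamma V_K)^{-1}-(1-\gamma V_{K'})^{-1}=(1-\gamma V_K)^{-1}\bigl[1-(1-(1-\gamma V_K)^{-1}(\gamma V_{K'}-\gamma V_K))^{-1}\bigr]$ and uses the hypothesis to bound the resolvent-type factor by $2$, which is exactly your lower bound $1-\gamma V_{K'}\geq \tfrac{1}{2}(1-\gamma V_K)$ expressed through the scalar identity $a^{-1}-b^{-1}=(b-a)/(ab)$. Your presentation is a slightly more elementary rewriting of the same first-order perturbation estimate, so no gap remains.
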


\begin{proof}
We have 
$$
(1 - (1-\gamma V_K)^{-1}(\gamma V_{K'}-\gamma V_K))^{-1} \leq (1 - (1-\gamma V_K)^{-1}|\gamma V_{K'}-\gamma V_K|)^{-1} \leq 2
$$
as $ (1-\gamma V_K)^{-1} |\gamma V_K - \gamma V_{K'}| \leq \frac{1}{2}$.

\begin{align}
(1-\gamma V_K)^{-1} - (1-\gamma V_{K'})^{-1}& = (1-\gamma V_K)^{-1} - [(1-\gamma V_{K}) - (\gamma V_{K'}-\gamma V_{K})]^{-1}\notag\\
								      & = (1-\gamma V_K)^{-1}[1 - (1 - (1-\gamma V_K)^{-1}(\gamma V_{K'}-\gamma V_K))^{-1})]\notag
\end{align}
\begin{align}
 |1 - (1 - (1-\gamma V_K)^{-1}(\gamma V_{K'}-\gamma V_K))^{-1}| & = (1-\gamma V_K)^{-1}|(\gamma V_{K'}-\gamma V_K)  (1 - (1-\gamma V_K)^{-1}(\gamma V_{K'}-\gamma V_K))^{-1}| \notag\\
 & \leq 2 (1-\gamma V_K)^{-1}|\gamma V_{K'}-\gamma V_K| \notag.
\end{align}
So we have $ (1-\gamma V_K)^{-1} - (1-\gamma V_{K'})^{-1}  \leq  2(1-\gamma V_K)^{-2} |\gamma V_{K'} - \gamma V_K|$
\end{proof}

\begin{lemma} \label{lemma: per pre}If
$
\|K-K'\|\leq h_\Sigma
$,then
$$
 |\frac{1}{1-\gamma V_{K'}}-\frac{1}{1-\gamma V_{K}}|\leq g_\Sigma \|K'-K\|,
 $$
where $h_{\Sigma}$ is defined in Lemma \ref{lemma: sk perturbation}
\end{lemma}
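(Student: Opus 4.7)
The plan is to reduce the perturbation bound for $(1-\gamma V_K)^{-1}$ to a perturbation bound on $V_K$ itself via Lemma \ref{lemma: bound transform}, and then to control the change in $V_K$ by expanding the quadratic expression $V_K = A^2 + C^2 + K^\top(B^\top B + D^\top D) K - 2ABK$ in powers of $\Delta K := K' - K$.

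First, writing $M := B^\top B + D^\top D$, I would expand
\[
V_{K'} - V_K \;=\; 2\bigl(K^\top M - AB\bigr)\,\Delta K + \Delta K^\top M\,\Delta K,
\]
and take norms to obtain $|V_{K'} - V_K| \leq (2\|K^\top M - AB\| + \|M\|\,\|\Delta K\|)\,\|\Delta K\|$. Under the hypothesis $\|\Delta K\| \leq h_\Sigma$, this is bounded by $(2\|K^\top M - AB\| + \|M\|\, h_\Sigma)\,\|\Delta K\|$.

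To apply Lemma \ref{lemma: bound transform}, I must check the smallness condition $\gamma|V_{K'} - V_K|/(1-\gamma V_K) \leq 1/2$. Using the lower bound $S_{K,\Sigma} \geq \mu/(1-\gamma V_K)$ from Lemma \ref{lemma:upper bound of S_K}, together with $\gamma \leq 1$, it suffices to verify $\|M\|\,h_\Sigma^2 + 2\|K^\top M - AB\|\,h_\Sigma \leq \mu/(2 S_{K,\Sigma})$. Unpacking the definition of $h_\Sigma$ in Lemma \ref{lemma: sk perturbation}, one recognizes $h_\Sigma$ as the positive root of $\|M\| h^2 + 2\|K^\top M - AB\|\, h = \mu^2/(2 S_{K,\Sigma}^2)$; since $V_K \geq 0$ implies $1 - \gamma V_K \leq 1$ and hence $\mu \leq S_{K,\Sigma}$, one has $\mu^2/(2 S_{K,\Sigma}^2) \leq \mu/(2 S_{K,\Sigma})$, so the smallness hypothesis is satisfied. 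Then Lemma \ref{lemma: bound transform} gives
\[
\Bigl|\tfrac{1}{1-\gamma V_{K'}} - \tfrac{1}{1-\gamma V_K}\Bigr| \;\leq\; 2(1-\gamma V_K)^{-2}\gamma\,|V_{K'} - V_K|,
\]
and substituting the bound from the previous paragraph (absorbing $\gamma \leq 1$) recovers exactly $g_\Sigma \|K'-K\|$.

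The main obstacle is recognizing why the baroque-looking expression for $h_\Sigma$ is the right choice: once it is identified as the positive root of a quadratic calibrated precisely so that the resulting bound on $|V_{K'} - V_K|$ meets the smallness hypothesis of Lemma \ref{lemma: bound transform} through the $S_{K,\Sigma}$ lower bound, the rest of the argument reduces to elementary algebraic manipulation.
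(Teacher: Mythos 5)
Your proposal is correct and follows essentially the same route as the paper's proof: expand $V_{K'}-V_K$ in $\Delta K$, use the definition of $h_\Sigma$ to verify the smallness hypothesis $\gamma|V_{K'}-V_K|/(1-\gamma V_K)\leq \tfrac12$, and then apply Lemma \ref{lemma: bound transform} to obtain the constant $g_\Sigma$. In fact your calibration step is slightly more careful than the paper's, which derives the admissible radius in terms of $(1-\gamma V_K)$ and simply labels it $h_\Sigma$, whereas you explicitly reconcile this with the $\mu/S_{K,\Sigma}$ form of $h_\Sigma$ via $\mu/S_{K,\Sigma}\leq 1-\gamma V_K$ and $\mu\leq S_{K,\Sigma}$.
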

\begin{proof}
\begin{align}
|\gamma V_K - \gamma V_{K'}| & = (A^2+C^2 + K^\top (B^\top B+D^\top D)K -2ABK) \notag\\
					         & \quad - (A^2+C^2 + K'^\top (B^\top B+D^\top D)K' -2ABK')\notag\\
					         & = K^\top (B^\top B+D^\top D)K -2AB(K-K') \notag\\
					         &\quad -  (K + K' - K)^\top (B^\top B+D^\top D)(K+K'-K)\notag\\
					         & =  -(K - K')^\top (B^\top B+D^\top D)(K-K') -2AB(K-K')\notag \\
					         &\quad +2K^\top (B^\top B+D^\top D)(K-K')\notag \\
					         & \leq 2\|K^\top (B^\top B + D^\top D) - AB\|\|K-K'\|\notag\\
					         & \quad  + \|(B^\top B+D^\top D)\|\|K-K'\|^2\notag\\
					         & = \|K-K'\|(2 \|K^\top (B^\top B + D^\top D) - AB\| +\|B^\top B+D^\top D\|\|K-K'\| )\notag\\
					         & = \|B^\top B+D^\top D\|\left(\|K-K'\| + \frac{ \|K^\top (B^\top B + D^\top D) - AB\|}{ \|B^\top B+D^\top D\|}\right)^2 \notag\\
					         & \quad- \frac{\|K^\top (B^\top B + D^\top D) - AB\|^2}{ \|B^\top B+D^\top D\|} \notag.
\end{align}
If we set 
\begin{align}
\|K-K'\| &\leq \sqrt{\frac{1}{2}\frac{(1-\gamma V_K)^2}{ \|B^\top B+D^\top D\| } + \left(\frac{ \|K^\top (B^\top B + D^\top D) - AB\|}{ \|B^\top B+D^\top D\|}\right)^2} -\frac{ \|K^\top (B^\top B + D^\top D) - AB\|}{ \|B^\top B+D^\top D\|} \notag\\
	& = \frac{\frac{1}{2}\frac{(1-\gamma V_K)^2}{\|B^\top B+D^\top D\| } }{\sqrt{\frac{1}{2}\frac{(1-\gamma V_K)^2}{ \|B^\top B+D^\top D\| } + \left(\frac{ \|K^\top (B^\top B + D^\top D) - AB\|}{ \|B^\top B+D^\top D\|}\right)^2}  + \frac{ \|K^\top (B^\top B + D^\top D) - AB\|}{ \|B^\top B+D^\top D\|} }\\
	& = \frac{\frac{1}{2}{(1-\gamma V_K)^2} }{\sqrt{\frac{1}{2}{(1-\gamma V_K)^2\|B^\top B+D^\top D\|} + \|K^\top (B^\top B + D^\top D) - AB\|^2} + \|K^\top (B^\top B + D^\top D) - AB\|}\notag\\
	& =  \frac{1}{2}\frac{(1-\gamma V_K)^2 }{\sqrt{\frac{1}{2}(1-\gamma V_K)^2 \|B^\top B+D^\top D\|+ \|K^\top (B^\top B + D^\top D) - AB\|^2} + \|K^\top (B^\top B + D^\top D) - AB\|}\notag\\
    & := h_\Sigma,
\end{align}
then we have
\begin{align}
\frac{1}{1-\gamma V_K} |\gamma V_K - \gamma V_{K'}| &\leq \frac{1}{1-\gamma V_K}  \frac{(1 - \gamma V_K)^2}{2}\notag \\
											  & \leq \frac{1-\gamma V_K}{2}\notag\\
											  & \leq \frac{1}{2},
\end{align}
which satisfies the condition in lemma \ref{lemma: bound transform}. Apply lemma \ref{lemma: bound transform} we have,
\begin{align*}
|\frac{1}{1-\gamma V_K} - \frac{1}{1-\gamma V_{K'}}| & \leq 2 \left(\frac{1}{1-\gamma V_K } \right)^2 |\gamma V_K - \gamma V_{K'}|\\
& \leq 2\left(\frac{1}{1-\gamma V_K } \right)^2 (2 \|K^\top (B^\top B + D^\top D) - AB\| +\|B^\top B+D^\top D\|\|K-K'\|)  \|K-K'\| \\
& \leq  2\left(\frac{1}{1-\gamma V_K } \right)^2 (2 \|K^\top (B^\top B + D^\top D) - AB\| +\|B^\top B+D^\top D\| h_\Sigma)  \|K-K'\|\\ 
& := g_\Sigma \|K-K'\|.
\end{align*}
\end{proof}
\subsection{Proof of Lemma \ref{lemma: sk perturbation}}
\begin{proof}
We have $\|\Sigma'\|  \leq 2\|\Sigma\|$
as $
\|\Sigma'\| - \| \Sigma\|\leq\|\Sigma' - \Sigma\| \leq \|\Sigma\|.
$
\begin{align*}
    S_{K,\Sigma} & = \mu \sum_{t=0}^{\infty} (\gamma V_K)^t + \frac{Tr(\Sigma(B^\top B+D^\top D))}{1-V_K} \sum_{t=0}^{\infty} \gamma^t(1-V_K^t) \notag\\
        & = \frac{\mu}{1- \gamma V_K}  + \frac{Tr(\Sigma(B^\top B+D^\top D))}{1-V_K} \left[ \frac{1}{1-\gamma} - \frac{1}{1- \gamma V_K} \right] \notag\\
        & = \frac{\mu}{1- \gamma V_K}  + \frac{Tr(\Sigma(B^\top B+D^\top D))}{1-V_K} \left[  \frac{(1-\gamma V_K)-(1-\gamma)}{(1-\gamma)(1- \gamma V_K)} \right] \notag\\
        & = \frac{\mu}{1- \gamma V_K}  + \frac{Tr(\Sigma(B^\top B+D^\top D))}{1-V_K} \left[  \frac{\gamma (1-V_K)}{(1-\gamma)(1- \gamma V_K)} \right] \notag\\
        & = \frac{(1- \gamma)\mu+ \gamma Tr(\Sigma(B^\top B+D^\top D))}{(1-\gamma)(1- \gamma V_K)}   \notag\\
        & := \Delta_\Sigma (1- \gamma V_K)^{-1}.
\end{align*}
\begin{align*}
 |S_{K',\Sigma'}-S_{K,\Sigma}|& \leq  |S_{K',\Sigma'}-S_{K, \Sigma'}| +| S_{K,\Sigma'} - S_{K,\Sigma}|\\
 				               & \leq \Delta_{\Sigma'}|\frac{1}{1-\gamma V_{K'}} - \frac{1}{1-\gamma V_{K}}| +  \frac{\gamma Tr((\Sigma' - \Sigma)(B^\top B+D^\top D))}{(1-\gamma)(1- \gamma V_K)}\\
				               & \leq \Delta_{\Sigma'} g_{\Sigma} \|K'-K\| +  \frac{\gamma Tr((B^\top B+D^\top D))}{(1-\gamma)(1- \gamma V_K)} \|\Sigma' - \Sigma \|\\
				               & \leq 2\Delta_{\Sigma} g_{\Sigma} \|K'-K\| +  \frac{\gamma Tr((B^\top B+D^\top D))}{(1-\gamma)(1- \gamma V_K)} \|\Sigma' - \Sigma \|\\
				               &: = h_K\|K'-K\| + h_2 \|\Sigma - \Sigma'\|.
\end{align*}

\end{proof}
\subsection{Proof of Lemma \ref{lemma: pk perturbation}}
\begin{proof}
From $\|K'\| - \|K\| \leq\|K-K'\| \leq |K|$ we have $K'\leq 2\|K\|$ and $K'RK' \leq \|R\|\|K'\|^2 \leq 4\|R\|\|K\|^2$
  \begin{align*}
|P_K  - P_{K'}| &= \bigg|\frac{Q + K^\top RK}{1-\gamma V_K} - \frac{Q + K'^\top RK'}{1-\gamma V_{K'}} \bigg| \\
& \leq \bigg|\frac{Q + K^\top RK}{1-\gamma V_K} - \frac{Q + K'^\top RK'}{1-\gamma V_{K}}\bigg| + \bigg|\frac{Q + K'^\top RK'}{1-\gamma V_K} -\frac{Q + K'^\top RK'}{1-\gamma V_{K'}}\bigg|\\
 &=\frac{|{K'}^\top RK' - K^\top RK|}{1-\gamma V_K} +(Q+{K'}^\top RK')\bigg| \frac{1}{1-\gamma V_K} - \frac{1}{1-\gamma V _{K'}}\bigg|\\
 & \leq \frac{|{K'}^\top RK' - K^\top RK|}{1-\gamma V_K}  + (Q+4\|R\|\|K\|^2) g_\Sigma \|K-K'\|\\
 & = \frac{ |({K'} - K)^\top R(K'-K) - 2K^\top R(K'-K)|}{1-\gamma V_K} + (Q+4\|R\|\|K\|^2) g_\Sigma \|K-K'\|\\
 & \leq \frac{3\|K\|\|R\|}{1-\gamma V_K}\|K'-K\| + (Q+4\|R\|\|K\|^2) g_\Sigma \|K-K'\|\\
 & := h_5\|K'-K\|.
\end{align*}  
\end{proof}
\subsection{Proof of Lemma \ref{lemma: gradient perturbation}}
\begin{proof}
We first consider $\nabla_K f(K,\Sigma)$ perturbation 
\begin{align*}
\nabla_K f(K', \Sigma') - \nabla_K f(K, \Sigma)  & = E_{K'} S_{K', \Sigma'} -  E_{K} S_{K, \Sigma}  \\
				    & = (E_{K'} - E_K)S_{K', \Sigma'} - (S_{K', \Sigma'} - S_{K, \Sigma})E_K \\
                    & \leq (E_{K'} - E_K)S_{K', \Sigma'} + (S_{K', \Sigma'} - S_{K, \Sigma})E_K.
\end{align*}
By triangle equality we have
\begin{align*}
\|\nabla_K f(K', \Sigma') - \nabla_K f(K, \Sigma)\|  & \leq \|E_{K'} - E_K\||S_{K', \Sigma'}| + |S_{K', \Sigma'} - S_{K, \Sigma}| E_K.
\end{align*}
By definition of $E_K$, we have
\begin{align}
\frac{1}{2}\| E_{K'} - E_K\| & = \|R\|\|K' - K\| + \gamma P_{K'}[(B^\top  B+D^\top  D)K'] - \gamma P_{K}[(B^\top  B+D^\top  D)K]  \notag\\
				        & \quad +   \gamma |P_{K'} - P_K|A\|B\|\notag\\
				        & \leq \|R\|\|K' - K\| +\gamma A \cdot h_5 \|B\| \|K'-K\|+ \gamma P_{K'}(B^\top  B+D^\top  D)K' \notag\\
				        & \quad -\gamma P_{K}(B^\top  B+D^\top  D)(K-K'+K') \notag\\
				        & \leq \|R\|\|K' - K\| +\gamma A \cdot h_5 \|B\| \|K'-K\|+ \gamma P_{K}(B^\top  B+D^\top  D) \|K'-K\|\notag\\
				        & \quad +\gamma | P_{K'} -P_{K}|\|B^\top  B+D^\top  D\|\|K'\| \notag\\
				        & \leq \|R\|\|K' - K\| +\gamma A \cdot h_5 \|B\| \|K'-K\|+ \gamma P_{K}(B^\top  B+D^\top  D) \|K'-K\|\notag\\
				        & \quad + 2\gamma \cdot h_5\|B^\top  B+D^\top  D\|\|K\|\|K'-K\| \notag\\
				        & = \left[\|R\| + \gamma A \cdot h_5 \|B\| +\gamma P_{K}(B^\top  B+D^\top  D) + + 2\gamma \cdot h_5\|B^\top  B+D^\top  D\|\|K\| \right] \|K'-K\|.\notag\\
\| E_{K'} - E_K\|           &\leq 2\left[\|R\| + \gamma A \cdot h_5 \|B\| +\gamma P_{K}(B^\top  B+D^\top  D) + + 2\gamma \cdot h_5\|B^\top  B+D^\top  D\|\|K\| \right] \|K'-K\|\notag\\
                            &:= h_E \|K'-K\| \notag.
\end{align}				        
Using $S_{K', \Sigma'} - S_{K,\Sigma} \leq |S_{K', \Sigma'} - S_{K,\Sigma}|$ and lemma \ref{lemma: pk perturbation} we have 
\begin{align*}
    S_{K',\Sigma'} &\leq  S_{K,\Sigma} + h_K\|K'-K\| + h_2 \| \Sigma' - \Sigma\| \\
    & \leq  S_{K,\Sigma} + h_Kh_{\Sigma} + h_2 \|\Sigma\|,
\end{align*}
and 
\begin{align*}
\|E_{K'} - E_K\|S_{K', \Sigma'}\leq (S_{K,\Sigma} + h_Kh_{\Sigma} + h_2 \|\Sigma\|) h_E \|K'-K\|,
\end{align*}
if $\|K'-K\| \leq h_{\Sigma}$ and $\|\Sigma - \Sigma'\| \leq \|\Sigma\|$.

From Lemma \ref{lemma: gradient domination}, we have 
$$
\|E_K\|\leq \sqrt{\lambda_1^{-1}|f(K, \Sigma) - f(K^*, \Sigma^*)| } 
$$
so for the second term we have:
\begin{align*}
|S_{K', \Sigma'} - S_{K, \Sigma}|  \| E_K  \| \leq (h_K \| K' - K \| + h_2 \| \Sigma' - \Sigma\|) \sqrt{\lambda_1^{-1}|f(K, \Sigma) - f(K^*, \Sigma^*)| },
\end{align*}
by lemma \ref{lemma: sk perturbation}. Then we have
\begin{align*}
    \| \nabla_K f(K', \Sigma') - \nabla_K f(K, \Sigma) \| & \leq \left( h_K  \sqrt{\lambda_1^{-1}|f(K, \Sigma) - f(K^*, \Sigma^*)| } + (S_{K,\Sigma} + h_Kh_{\Sigma} + h_2 \|\Sigma\|) h_E  \right) \| K'-K \| \notag\\
    & + \left( h_2  \sqrt{\lambda_1^{-1}|f(K, \Sigma) - f(K^*, \Sigma^*)| } \right) \| \Sigma' - \Sigma \|.
\end{align*}

Now consider the perturbation of $\nabla_\Sigma f(K, \Sigma)$ \\
Apply Theorem 3.5 in \cite{Fazel2018Global}  we have 
\begin{align*}
  \| \Sigma'^{-1} - \Sigma^{-1} \| &  \leq \frac{2}{\sigma_{min}(\Sigma)} \|\Sigma - \Sigma'\|
\end{align*}
if $\displaystyle \|\Sigma - \Sigma'\| \leq \frac{\sigma_{min}(\Sigma)}{2}$.
\begin{align*}
    \nabla_{\Sigma} f(K',\Sigma') - \nabla_{\Sigma}f(K, \Sigma) & = \frac{\left( (R+ \gamma P_{K'}(B^\top B+D^\top D))^\top  - \frac{\tau}{2} \Sigma'^{-1}\right) - \left( (R+ \gamma P_K(B^\top B+D^\top D)) - \frac{\tau}{2} \Sigma^{-1}\right) }{1-\gamma} \notag \\
    & = \frac{\gamma \left( P_{K'} - P_K \right) (B^\top B +D^\top D) - \frac{\tau}{2}(\Sigma'^{-1} - \Sigma^{-1})}{(1-\gamma)}.
\end{align*}

So we have,
\begin{align*}
    \| \nabla_{\Sigma} f(K',\Sigma') - \nabla_{\Sigma}f(K, \Sigma) \| & \leq \frac{\gamma (B^\top B +D^\top D) }{(1-\gamma)} |P_{K'} - P_K|+  \frac{\tau}{2(1-\gamma)}\|\Sigma'^{-1}-\Sigma^{-1}\|\\
    & \leq  \frac{\gamma (B^\top B +D^\top D) }{(1-\gamma)} h_5 \|K'-K\|+  \frac{\tau\sigma_{min}(\Sigma)}{4(1-\gamma)}\|\Sigma-\Sigma'\|\\
     & \leq  \frac{\gamma (B^\top B +D^\top D) }{(1-\gamma)} h_5 \|K'-K\|+  \frac{\tau\sigma_{min}(\Sigma)}{4(1-\gamma)}\|\Sigma'-\Sigma\|_F\\
    &:= h_8\|K'-K\| + h_9\|\Sigma'-\Sigma\|_F.
\end{align*}
The proof is completed
\end{proof}

\subsection{Proof of Lemma \ref{lemma: fr(K)}}
\begin{proof}
The proof is provided in Lemma 2.1 of \cite{Flaxman2005online} with slightly change of notation.
\end{proof}

\subsection{Proof of Lemma \ref{lemma: est gk}}
\begin{proof}
We first show that finitely many finite-horizon rollouts, defined as 
$$
\widetilde{\nabla}_K := \frac{1}{M} \sum_{i=1}^M \frac{n}{r^2} f(K + U_i, \Sigma) U_{i}.
$$
If $r_1$ satisfies the conditions in the lemma, then $\|\widetilde{\nabla}_K -\nabla_K f(K,\Sigma)\|_F < \frac{\epsilon}{3}$ with high probability(at least $1-\sqrt{\kappa_1}$). We break $\widetilde{\nabla}_K -\nabla_K f(K,\Sigma)$ into the following two parts,
$$
\widetilde{\nabla}_K -\nabla_K f(K,\Sigma) = (\nabla_K f_{r_1}(K,\Sigma) - \nabla_K f(K, \Sigma)) + (\widetilde{\nabla}_K  - \nabla_K f_{r_1}(K, \Sigma)).
$$
For the first term, from Lemma \ref{lemma: gradient perturbation} we have $ \| \nabla_K f(K + U, \Sigma) - \nabla_K f(K, \Sigma) \| \leq \frac{\epsilon}{6}$ if we set $r_1 \leq \frac{\epsilon}{6 h_6}$. 
Since $\nabla_K f_{r_1}(K, \Sigma)$ is the expectation of $\nabla f(K+U, \Sigma)$, we have $\|\nabla_K f_{r_1}(K, \Sigma) - \nabla_K f(K, \Sigma)\|_F \leq \frac{\epsilon}{6}$.

For the second term, $\widetilde{\nabla}_K  - \nabla_K f_{r_1}(K, \Sigma)$. We want to invoke the Vector Bernstein Inequality to show that with high probability $||\widetilde{\nabla}_K - \nabla_K f_{r_1}(K, \Sigma)|| < \frac{\epsilon}{2}$. Consider the sample $i^{th}$ of a single path $K+U_i$ and observe that $\|\frac{n}{r_1^2}f(K+U_i)U_i \| \leq \frac{2nf(K, \Sigma)}{r_1}$ if we assume $f(K+U_i, \Sigma) \leq 2f(K, \Sigma)$. Also, 
\begin{align*}
     \| \nabla_K f_{r_1}(K, \Sigma) \| 
     &\leq \|\nabla_K f_{r_1}(K, \Sigma) - \nabla_K f(K, \Sigma)\|_F + \nabla_K f(K,\Sigma)\\ 
     &\leq \frac{\epsilon}{6} + \overline{\| \nabla_K f(K, \Sigma) \|}.
\end{align*} 
So we have 
\begin{align*}
    \| \frac{n}{r_1^2} f_{r_1}(K+U_i, \Sigma)U_i - \nabla_K f_{r_1}(K, \Sigma) \| \leq R_1:= \frac{2n}{r_1}f(K, \Sigma) + \frac{\epsilon}{6} + \overline{\| \nabla_K f(K, \Sigma) \|}
\end{align*} 
and 
\begin{align*}
     &\| \mathbb{E}\bigg[\big(\frac{n}{r_1^2} f_{r_1}(K+U_i, \Sigma)U_i\big)^\top\frac{n}{r_1^2} f_{r_1}(K+U_i, \Sigma)U_i\bigg] - \nabla^\top_K f_{r_1}(K, \Sigma)\nabla_K f_{r_1}(K, \Sigma) \|\\
     &\quad  \leq \max_{U_i}\|\frac{n}{r_1^2} f_{r_1}(K+U_i, \Sigma)U_i\|_F^2 + \|\nabla_K f_{r_1}(K, \Sigma)\|^2_F\\
     &\quad \leq \sigma_1 := \big(\frac{2n}{r_1}f(K, \Sigma)\big)^2 + \big(\frac{\epsilon}{6} + \overline{\| \nabla_K f(K, \Sigma) \|}\big)^2.
\end{align*}
Next, note that $ \mathbb{E}{[\frac{n}{r_1^2}f(K+U_i, \Sigma)U_i]}=\mathbb{E}{[\hat{\nabla}_K]} = \nabla_K f_{r_1}(K, \Sigma)$,  apply Vector Bernstein Inequality we have if 
$$
M \geq \frac{2n}{(\epsilon_1/6)^2}( \sigma_\nabla + \frac{R_\nabla}{3\sqrt{n}})\log\big(\frac{n+1}{\sqrt{\kappa_1}}\big)
$$
then 
$$
\mathbb{P}\big[ \|\widetilde{\nabla}_K  - \nabla_K f_{r_1}(K, \Sigma)\|_F \leq\frac{\epsilon}{6}\big] \geq 1- \sqrt{\kappa_1}.
$$

In the above, we have demonstrated that $\|\widetilde \nabla_K  - \nabla_K f(K,\Sigma)\|_F \leq \frac{\epsilon}{3}$ with high probabiliy. Now we attempt to prove that $\widehat{\nabla}_K$ is $\epsilon$ close to $\nabla_K f(K,\Sigma)$ with high probability $1-\kappa_1$, under the conditions given in the lemma. Define 
$$
\nabla_K' := \frac{1}{M} \sum_{i=1}^M \frac{n}{r^2_1} f^{(l)}(K+U_i, \Sigma)U_{i}.
$$
We break $\widehat{\nabla}_K - \nabla_K f(K,\Sigma)$ into following three parts,
$$
\widehat{\nabla}_K - \nabla_K f(K, \Sigma) = (\widehat{\nabla}_K - \nabla_K') + (\nabla_K'  - \widetilde{\nabla}_K) +  (\widetilde{\nabla}_K - \nabla_K f(K, \Sigma)).
$$ 

For third term, based on the previous proof, we know that $\|\widetilde{\nabla}_K - \nabla_K f(K, \Sigma) \| \leq \frac{\epsilon}{3}$ with probability at least $1-\kappa_1$ under the conditions given in the lemma.  

For the second term, using the lemma \ref{lemma: approximate fk} we have 
\begin{align*}
    \|\nabla_K'  - \widehat{\nabla}_K \| & = \frac{1}{M} \cdot \frac{n}{r_1^2} \bigg| \bigg| \sum_{i = 1}^{M} \left(f^{(l)}(K+U_i, \Sigma) - f(K+U_i, \Sigma) \right) U_i  \bigg| \bigg| \notag\\
    & \leq \frac{1}{M} \cdot \frac{n}{r_1^2} \bigg| \bigg| \sum_{i= 1}^M \gamma^l\left[ (Q + (K+U_i)^\top R(K+U_i))S_{K+ U_i,\Sigma} + \frac{ Tr(\Sigma R) - \frac{\tau}{2}(n+ log(2\pi)^n|\Sigma|)}{1-\gamma }\right] U_i \bigg| \bigg| \notag\\
    & \leq \frac{1}{M} \cdot \frac{n}{r_1^2} \left( \sum_{i= 1}^M \gamma^l\left[ (|Q| \cdot |S_{K+U_i, \Sigma}| + \|K+U_i\|^2 \|R\|\cdot |S_{K+ U_i,\Sigma}| + \Bigg|\frac{ Tr(\Sigma R) - \frac{\tau}{2}(n+ log(2\pi)^n|\Sigma|)}{1-\gamma } \Bigg| \right] r_1 \right) \notag\\
    & \leq \frac{1}{M} \cdot \frac{n}{r_1} \left( M \gamma^l\left[ |2f(K, \Sigma)-\psi| + 2\|K\|^2 \|R\|\cdot \Bigg| \frac{2f(K, \Sigma)- \psi}{Q} \Bigg| + \Bigg|\frac{\psi}{1-\gamma } \Bigg| \right] \right) \notag\\
    & \leq \frac{n}{r_1} \left( \gamma^l\left[ |2f(K, \Sigma)|+|\psi| + 2\|K\|^2 \|R\|\cdot \frac{|2f(K, \Sigma)|+ |\psi|}{|Q|} \Bigg| + \Bigg|\frac{\psi}{1-\gamma } \Bigg| \right] \right) \notag\\
    & \leq \frac{\epsilon}{3},
\end{align*}
if
$
l \geq \log (\gamma)^{-1} \left[\log\left(\frac{r_1}{n} \cdot \frac{\epsilon}{3} \right)- \log \left( 2|f(K, \Sigma)| \left(2 \| K \|^2\|R \| + \frac{1}{|Q|}\right) + |\psi|\left(1 + \frac{1}{|Q|} + \frac{1}{1-\gamma}\right) \right) \right]
$, 
where $\psi = Tr(\Sigma R) - \frac{\tau}{2}(n + \log(2\pi)^n|\Sigma|)$.

For the first term, note that $|x_0^i| \leq L$ and let $\Gamma>1$ such that $$\sum_{j=0}^{l-1} Q(x_t^i)^2 + (u_t^i)^\top R(u_t^i) \leq \Gamma \mathbb{E}\left[ \sum_{i=0}^{l-1} Qx_i^2 + u_t^\top Ru_t \right]$$ Thus, $\widehat{\nabla}_K - \nabla'_K$ is the sum of random bounded vectors. Now observe:
\begin{align}
    \sum_{t=0}^{l-1} Q(x_t^i)^2 + (u_t^i)^\top R(u_t^i) + \tau \log\pi(u_t^i|x_t^i)& \leq \Gamma \mathbb{E}\left[ \sum_{t=0}^{l-1} Qx_t^2 + u_t^\top Ru_t +\tau \log\pi(u_t^i|x_t^i) \right] \notag\\
    & \leq \Gamma \mathbb{E}\left[ \sum_{t=0}^{\infty}
    Qx_t^2 + u_t^\top Ru_t +\tau \log\pi(u_t|x_t) \right] \notag\\
    & \leq \Gamma L^2 f(K + U_i, \Sigma) \notag\\
    & \leq 2 \Gamma L^2 f(K, \Sigma).
\end{align}
And since $$ \| \nabla_K' \| \leq \frac{\epsilon}{3} + \| \widetilde{\nabla}_K \| \leq \frac{\epsilon}{2} + \overline{\| \nabla_K f(K, \Sigma) \|},$$ 
we have 
\begin{align*}
     \| \big[\sum_{t=0}^{l-1} Q(x_t^i)^2 + (u_t^i)^\top R(u_t^i) + \tau \log\pi(u_t^i|x_t^i) \big]U_i - \nabla_K'\|_F & \leq 2\Gamma L^2 f(K, \Sigma)\|U_i\|_F  + \|\nabla_K'\|_F\\
     & \leq R_{2}  \\
     &:= 2\Gamma L^2 f(K, \Sigma)r_1 + \frac{\epsilon}{2} + \overline{\| \nabla_K f(K, \Sigma) \|}.
\end{align*}
Define $Z_i := \big[\sum_{t=0}^{l-1} Q(x_t^i)^2 + (u_t^i)^\top R(u_t^i) + \tau \log\pi(u_t^i|x_t^i) \big]U_i$.
\begin{align*}
    \|\mathbb{E}[ Z_i^\top Z_i ] - (\nabla_K')^\top \nabla_K'\big]\| &\leq \max_{U_i} \|Z_i\|_F^2 + \|\nabla_K'\|_F^2\\
    &\leq \sigma_{2}\\
    &:= (2\Gamma L^2 f(K, \Sigma)r_1)^2 + (\frac{\epsilon}{2} + \overline{\| \nabla_K f(K, \Sigma) \|})^2.
\end{align*}
the norm of each sample is bounded, assuming the variance is bounded, and since  $\mathbb{E}[\widehat{\nabla}_K] = \nabla'_K$.
Then by the Vector Bernstein Inequality we have 
if 
$$
M\geq \frac{2n}{(\epsilon/3)^2}(\sigma^2_{2} + \frac{R_{2} \epsilon}{3\sqrt{n}})\log(\frac{n+1}{\sqrt{\kappa_1}}) 
$$
then 
\begin{align*}
    \mathbb{P} \left( \|\widehat{\nabla}_K - \nabla_K'\| \leq \frac{\epsilon}{3} \right) \leq 1 - \sqrt{\kappa_1}.
\end{align*}  
Combining the above we have 
$$
\mathbb{P} \left( \|\widehat{\nabla}_K - \nabla_Kf(K,\Sigma)\| \leq{\epsilon} \right) \leq 1 - \kappa_1.
$$
\end{proof}

\subsection{Proof of Lemma \ref{lemma: est gsigma}}
\begin{proof}
	Define: 
	$$
		\widehat{\nabla}_L  := \frac{1}{M} \sum_{i=1}^M \frac{n(n+1)}{2r_2^2} \left[\sum_{t = 0}^{l-1} \gamma^t\left(Q(x_t^i )^2 + (u_t^i)^\top Ru_t^i  + \tau \log \pi (u_t|x_t)\right)\right] V_{i}.
	$$

We first show that finitely many finite-horizon rollouts, defined as
$$
\widetilde{\nabla}_L := \frac{1}{M} \sum_{i=1}^M \frac{n(n+1)}{2r_2^2} f(K, \Sigma_i) V_{i},
$$
$ \|\widetilde{\nabla}_L- \nabla_L f(K, \Sigma)\| \leq \frac{\epsilon}{3}$   with high probability under some conditions. We break $ \widetilde{\nabla}_L - \nabla_L f(K, \Sigma)$ into two terms,
\begin{align}
	\widetilde{\nabla}_L - \nabla_L f(K, \Sigma) & =  ( \nabla_L f_{r_2}(K, L) - \nabla_L f(K, \Sigma))+ ( \widetilde{\nabla}_L -\nabla_L f_{r_2}(K, L) ).
\end{align}
For the first term, by Lemma $4.5$ we have $ \|\nabla_L f(K, (L + V)(L+V)^\top) - \nabla_L f(K, \Sigma)\| < \frac{\epsilon}{6}$ if $r_2 \leq \frac{\epsilon}{ 6h_9}$. And because $\nabla_L f_{r_2}(K, L) = \mathbb{E}[\nabla_L f(K, (L + V)(L+V)^\top)]$, we have $$\| \nabla_L f_{r_2}(K, L) - \nabla_L f(K, \Sigma)\|_F \leq \frac{\epsilon}{6}$$ by triangle inequality.

For the second term $\widetilde\nabla_{L} -\nabla_L f_{r_2}(K, L)$. Note that $ \mathbb{E} \left[ \widetilde{\nabla}_L \right] = \sum_{i=1}^{M} \|\frac{2n(n+1)}{2r_2^2} f(K, \Sigma) V_{i}\| = \nabla_L f_{r_2}(K, L)$, using $f(K,\Sigma_i) \leq 2 f(K,\Sigma)$ we have 
$$\| \frac{n(n+1)}{2r_2^2} f(K, \Sigma_i) V_{i} \|\leq \|\frac{2n(n+1)}{2r_2^2} f(K, \Sigma) V_{i}\| \leq \frac{n(n+1)}{r_2} f(K, \Sigma),$$
using the the result of the first term and triangle inequality we have 
$$
 \|\nabla_L f_{r_2}(K, L)\|_F \leq \|\nabla_L f (K, \Sigma)\|_F + \frac{\epsilon}{6}.
$$
$$
\| \frac{n(n+1)}{2r_2^2} f(K, \Sigma_i) V_{i}  - \nabla_L f_{r_2}(K, L)\|_F \leq \frac{n(n+1)}{r_2} f(K, \Sigma) + \overline {\|\nabla_L f(K, \Sigma)\|} + \frac{\epsilon}{6}.
$$
so we have
$$
\mathbb{E}[ \|\frac{n(n+1)}{2r_2^2}f(K, \Sigma_i)V_i -\nabla_L f_{r_2}(K, \Sigma)\|_F^2 ] \leq  R_{Z_1} :=\frac{n(n+1)}{r_2} f(K, \Sigma) + \overline {\|\nabla_L f_{r_2}(K, L)\|} + \frac{\epsilon}{6}.
$$
Next we consider 

\begin{align*}
& \mathbb{E} ((\frac{n(n+1)}{2r_2^2}f(K, \Sigma_i)V_i)^{\top} \frac{n(n+1)}{2r_2^2}f(K, \Sigma_i)V_i )) - \nabla_L f_{r_2}(K,L)^{\top}\nabla_L f_{r_2}(K,L)\\
& \quad \leq \max_{V_i} \|\frac{n(n+1)}{2r_2^2}f(K, \Sigma_i)V_i\|_F - \|\nabla_L f_{r_2}(K,L)\|_F^2 \\
& \quad \leq  \sigma_{Z_1} := \frac{n(n+1)}{r_2} f(K, \Sigma),
\end{align*}
Using Vector Bernstein Inequality,  we know that if 
$$
M \geq \frac{2}{(\epsilon/3)^2}(\sigma_{Z_1}^2 + \frac{R_{Z_1}{\epsilon/3}}{3}) \log(\frac{1}{1-\sqrt{1-\kappa}})
$$ 
then 
$
 \|\widetilde{\nabla}_L- \nabla_L f(K, \Sigma)\| \leq \frac{\epsilon}{3}
$
with probability at least $1-\kappa$.

In the above, we have demonstrated that $\widetilde \nabla_\Sigma$ is $\epsilon$ close to $\nabla_\Sigma f(K,\Sigma)$ under some conditions. Now we attempt to prove that $\widehat{\nabla}_K$ is $\epsilon$ close to $\nabla_\Sigma f(K,\Sigma)$ with high probability $1-\kappa_2$, under the conditions given in the lemma. Define
$$ \nabla_L' := \frac{1}{M}\sum_{i=1}^M \frac{n(n+1)}{2r_2^2} f^{(l)}(K, \Sigma_i)V_i. $$ 
We break $\widehat{\nabla}_L - \nabla_L f(K, \Sigma)$ into three terms, 
$$
\widehat{\nabla}_L - \nabla_L f(K, \Sigma) = (\widehat{\nabla}_L - \nabla_L') + (\nabla_L'  - \widetilde{\nabla}_L) +  (\widetilde{\nabla}_L - \nabla_L f(K, \Sigma)).
$$

For the third term, based on the previous proof, we have $ \| \widetilde{\nabla}_L - \nabla_L f(K, \Sigma) \| \leq \frac{\epsilon}{3}$ with probability $1-\kappa_2$ under conditions given in the lemma. Furthermore, we have  $ \|\widetilde{\nabla}_L\| \leq \|\nabla_L f(K, \Sigma) \| + \frac{\epsilon}{3}.$

For the second term, we have 
\begin{align*}
	\| \nabla_L'  - \widetilde{\nabla}_L \|&=   \frac{n(n+1)}{2r_2^2} \frac{1}{M} \sum_{i = 1}^{M} \left(f^{(l)}(K, \Sigma_i) - f(K, \Sigma_i) \right) \|V_i\|\\
	& \leq  \frac{n(n+1)}{2r_2} \frac{1}{M} \sum_{i = 1}^{M} \left(f^{(l)}(K, \Sigma_i) - f(K, \Sigma_i) \right) .
\end{align*}
If $$
l\geq \frac{\log \frac{2r_2\epsilon}{3n(n+1)} - \log \left[(Q + K^\top RK)S_{K,\Sigma} + \frac{ Tr(\Sigma R) - \frac{\tau}{2}(n+ log(2\pi)^n|\Sigma|)}{1-\gamma }\right]}{log \gamma},
$$
 then $\| \nabla_L'  - \widetilde{\nabla}_L \| \leq \frac{\epsilon}{3}$ by  lemma  4.2. Furthermore, by triangle inequality we have 
 $$ \|\nabla_L'\|  \leq \|\widetilde{\nabla}_L\|  +  \frac{\epsilon}{3} \leq \|\nabla_L f(K, \Sigma) \| + \frac{2\epsilon}{3} \leq \overline{\| \nabla_L f(K, \Sigma) \| }+ \frac{2\epsilon}{3}.
 $$
 
 For the first term, we will use Vector Bernstein inequality as $\mathbb{E}[\widehat{\nabla}_L] =  \nabla'_L$. By the assumption we have
\begin{align*}
	\sum_{t=0}^{ l-1} Q(x_t^i)^2 + (u_t^i)^\top R(u_t^i) + \tau \log\pi(u_t|x_t)& \leq \Gamma \mathbb{E}\left[ \sum_{t=0}^{l-1} Q(x_t^i)^2 + (u_t^i)^\top Ru_t^i +\tau \log\pi(u_t^i|x_t^i) \right] \notag\\
	& \leq \Gamma \mathbb{E}\left[ \sum_{t=0}^{\infty}
	Q(x_t^i)^2 + (u_t^i)^\top Ru_t^i + \tau \log\pi(u_t^i|x_t^i) \right] \notag\\
	& \leq \Gamma L^2 f(K,\Sigma_i) \notag\\
	& \leq 2 \Gamma L^2 f(K, \Sigma).
\end{align*}
where $|x_0^i| \leq L$.
Define $\widehat{\nabla}^i_L$ so that $\widehat{\nabla}_L = \sum_{i=1}^{M} \widehat{\nabla}^i_L$
\begin{align*}
\|\widehat{\nabla}^i_L - {\nabla}_L'\|_F & \leq \|\widehat{\nabla}^i_L\|_F + \| {\nabla}_L' \|_F\\
& \leq \frac{n(n+1)}{r_2} \Gamma L^2 f(K, \Sigma) + \| {\nabla}_L' \|_F\\
& \leq R_Z := \frac{n(n+1)}{r_2} \Gamma L^2 f(K, \Sigma)  + \overline{\| \nabla_L f(K, \Sigma) \| }+ \frac{2\epsilon}{3}.
\end{align*}
\begin{align*}
& \mathbb{E} [(\widehat{\nabla}^i_L)^{\top} \widehat{\nabla}^i_L ]- {\nabla_L '}^{\top}\nabla_L'\\
& \quad \leq \max_{V_i} \|\widehat{\nabla}^i_L \|^2_F - \|\nabla_L'\|_F^2 \\
& \quad \leq \|\frac{n(n+1)}{2r_2^2} 2 \Gamma L^2 f(K, \Sigma) V_i\|^2 - \|\nabla_L'\|_F^2\\
& \quad \leq \big(\frac{n(n+1)}{r_2} \Gamma L^2 f(K, \Sigma)\big)^2 - \|\nabla_L'\|_F^2\\
& \quad \leq  \sigma_{Z_2} :=  \big(\frac{n(n+1)}{r_2} \Gamma L^2 f(K, \Sigma)\big)^2.
\end{align*}

If 
$$
M > \frac{2n(n+1)}{2(\epsilon/3)^2} \big(\sigma_{{Z_2}}^2 +\frac{R_{{Z_2}} \epsilon}{9\sqrt{n(n+1)/2}}\big)\log\frac{n(n+1)/2 + 1}{1-\sqrt{1-\kappa}}
$$
then 
$$
\|\widehat{\nabla}_L - \nabla'_L \|<\frac{\epsilon}{3}
$$ 
with probability $1-\sqrt{1-\kappa}$.
Combining the above we have 
$$
\|\widehat{\nabla}_L - \nabla_L f(K,\Sigma)\| \leq \epsilon
$$
with probability at least $1-\kappa$.
As 
\begin{align*}
	\nabla_{L} f(K,\Sigma = LL^\top) &= \nabla_{L} (LL^\top) \nabla_{\Sigma} f(K,\Sigma)\\
	& = 2L \nabla_{\Sigma} f(K,\Sigma),
\end{align*}
Now take $\epsilon = \frac{\epsilon_2}{2\|L^{-1}\|}$, we have
$$
\|\widehat{\nabla}_\Sigma - \nabla_\Sigma f(K,\Sigma)\|  = \|(2L)^{-1}\widehat \nabla_{\Sigma} - (2L)^{-1}\nabla_{\Sigma} f(K,\Sigma)\| \leq \epsilon_2
$$
with probability $1-\kappa$.

\end{proof}

\subsection{Proof of Lemma \ref{lemma: approxi S under perturbation}}
\begin{proof}
Note that if $r = \min\{\frac{S_{K,\Sigma}}{2h_K}, h_\Sigma\}$, we have $ \frac{S_{K,\Sigma}}{2}\leq S_{K+U_i, \Sigma} \leq \frac{3S_{K,\Sigma}}{2}$.
Let $\widetilde S  = \frac{1}{M}\sum_{i=1}^{M} S_{K+U_i, \Sigma}$ and $\widetilde S^{(l)} = \frac{1}{M}\sum_{i=1}^{M} S^{(l)}_{K+U_i, \Sigma}$
We broke $\widehat S - S_{K,\Sigma}$ into the following three terms:

$$
\widehat S - {S_{K,\Sigma}} =\widehat S -\widetilde S^{(l)} + \widetilde S^{(l)} -\widetilde S  + \widetilde S    -\ {S_{K,\Sigma}} .
$$
For the first term $\widehat S - \widetilde S^{(l)}$, we have $ \mathbb{E}[S_i^{(l)} ] = S^{(l)}_{K+U_i, \Sigma}$. Apply Bernstein we have$
|\widehat S - \widetilde S^{(l)}| = \frac{1}{M}\sum_{i=0}^M |S_{i}^{(l)} - S^{(l)}_{K+U_i, \Sigma}| \leq \frac{\epsilon}{3} $ with probability at least $1 - n \exp\{-\frac{M^2\epsilon}{ 3S_{K,\Sigma} }\}$.
For the second term $\widetilde S^{(l)} -\widetilde S  $,  set $l \geq \frac{\log \epsilon/3 - \log S_{K,\Sigma}/2}{log \gamma}$ and apply lemma \ref{lemma: approximate fk}, we have $\|\widetilde S^{(l)} -\widetilde S  \|\leq \frac{\epsilon}{3}$.
For the third term  $\widetilde S   -\ {S_{K,\Sigma}}$, apply lemma \ref{lemma: sk perturbation} we have $\|S_{K+U_i,\Sigma} - S_{K,\Sigma}\| \leq \frac{\epsilon}{3}$ if $\|U_i\|_F \leq \min\{\frac{\epsilon}{3h_K},h_\Sigma\}$. As $\widetilde S $ is the average of $S_{K+U_i,\Sigma}$, we have $\|\hat S_{K,\Sigma}  -\ {S_{K,\Sigma}}\| \leq \frac{\epsilon}{3}$.

For the bound on $\widehat S $, apply Weyl’s Theorem when $\epsilon \leq \mu/2$ we have $\sigma_{min}(\widehat{S}) \geq \sigma_{min}(S_{K,\Sigma}) - \mu/2 \geq \mu/2 $
\end{proof}

\subsection{Proof of Lemma \ref{lemma: f perturbation}}

\begin{proof} By triangle inequality we have:
\begin{align*}
    |q_{K',\Sigma'} - q_{K,\Sigma}| \leq  |q_{K',\Sigma'} - q_{K,\Sigma'}| 
    + |q_{K,\Sigma'} -q_{K,\Sigma}|.
\end{align*}
For the first term,  when $\|K'-K\| \leq \min\{h_\Sigma, \|K\|\}$ and $\|\Sigma'-\Sigma\| \leq \|\Sigma\|$, using lemma \ref{lemma: pk perturbation} we have  
\begin{align*}
    |q_{K',\Sigma'} - q_{K,\Sigma'}| &= \frac{ \text{Tr}(\Sigma' \gamma (P_K-P_K') (B^\top B+D^\top D))}{1-\gamma}\\
    &\leq \frac{\gamma\|\Sigma'\|  \|B^\top B+D^\top D\|}{1-\gamma}|P_K-P_{K'}|\\
    &\leq \frac{2\gamma\|\Sigma\|  \|B^\top B+D^\top D\|}{1-\gamma} h_5\|K'-K\|\\
    &:= h_{10} \|K'-K\|.
\end{align*}
For the second term, from intermediate step of lemma \ref{lemma: almost smooth}, we have:
\begin{align*}
    |q_{K,\Sigma'}-q_{K,\Sigma}| &\leq \frac{m}{2}Tr((\Sigma^{-1}\Sigma'-I)^2) + Tr\left(\nabla_{\Sigma}q_{K,\Sigma}^\top (\Sigma-\Sigma')\right)\\
    &\leq \frac{m}{2} \|\Sigma^{-1}\Sigma'-I\|_F^2 +\|\nabla_{\Sigma}q_{K,\Sigma}\| \|\Sigma'-\Sigma\|\\
    & \leq \frac{m\|\Sigma^{-1}\|_F}{2} \|\Sigma'-\Sigma\|_F^2 + \overline{\|\nabla_{\Sigma}q_{K,\Sigma}\|} \|\Sigma'-\Sigma\|_F\\
    &:= h_{11} \|\Sigma'-\Sigma\|_F.
\end{align*}
Combining the above completes the proof.
\end{proof}

\section{Standard Matrix Perturbation and Concentrations}
In this section, we review several basic matrix tools that were used throughout the paper.
\subsection{Vector Bernstein inequality}\begin{lemma}
Let $\{ Z_i \}_{i=1}^N$ be a set of $N$ independent random vectors of dimension ${n}$ with ${\mathbb{E} [Z_i] = Z}$, ${\| Z_i - Z\| \leq R_Z}$ almost surely, and maximum variance 
${\left\| \mathbb{E} ( Z_i^\intercal Z_i) - Z^\intercal Z\right\|\leq \sigma_Z^2}$, and sample average ${\widehat{Z} := \frac{1}{N} \sum_{i=1}^N Z_i}$.
Let a small tolerance ${\epsilon \geq 0}$ and small probability ${ 0 \leq \kappa \leq 1 }$ be given.
If
\begin{align*}
    N \geq \frac{2n}{\epsilon^2} \left( \sigma_Z^2 + \frac{R_Z \epsilon}{3\sqrt{n}}\right) \log \left[ \frac{n}{\mu} \right]
\end{align*}
then
\begin{align}
    \mathbb{P} \left[ \left\| \widehat{Z} - Z \right\|_F \leq \epsilon \right] \geq 1-\mu .
\end{align}
\end{lemma}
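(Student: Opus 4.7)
The plan is to reduce the claim to a standard matrix Bernstein inequality applied to centered random vectors. Define $Y_i := Z_i - Z$ for $i=1,\dots,N$. By hypothesis these are independent, zero mean, satisfy $\|Y_i\| \leq R_Z$ almost surely, and $\|\mathbb{E}[Y_i^\intercal Y_i]\| = \|\mathbb{E}[Z_i^\intercal Z_i] - Z^\intercal Z\| \leq \sigma_Z^2$. Since a column vector's Frobenius norm equals its Euclidean norm, the target event $\{\|\widehat Z - Z\|_F > \epsilon\}$ is exactly $\{\|\sum_{i=1}^N Y_i\| > N\epsilon\}$, and so it suffices to bound the spectral norm of the sum.

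Next I would view each $Y_i$ as an $n\times 1$ matrix and apply Tropp's (rectangular) matrix Bernstein inequality: for independent zero-mean $Y_i\in\mathbb{R}^{d_1\times d_2}$ with $\|Y_i\|\leq R$ and $\max\{\|\sum_i \mathbb{E}[Y_i Y_i^\intercal]\|,\|\sum_i \mathbb{E}[Y_i^\intercal Y_i]\|\}\leq v$, one has
\begin{equation*}
\mathbb{P}\!\left[\Bigl\|\sum_{i=1}^N Y_i\Bigr\| \geq t\right] \;\leq\; (d_1+d_2)\exp\!\left(-\frac{t^2/2}{v + Rt/3}\right).
\end{equation*}
Here $d_1=n$, $d_2=1$, $R=R_Z$, $v = N\sigma_Z^2$, and $t = N\epsilon$. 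Substituting yields a tail bound of the form $(n+1)\exp(-\tfrac{N\epsilon^2/2}{\sigma_Z^2 + R_Z\epsilon/3})$, and setting this at most $\mu$ rearranges to an explicit threshold on $N$.

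Solving that threshold for $N$ and comparing with the hypothesis is the only quantitative step left. Taking logarithms, the requirement $(n+1)\exp(-N\epsilon^2/[2\sigma_Z^2 + 2R_Z\epsilon/3]) \leq \mu$ is equivalent to
\begin{equation*}
N \;\geq\; \frac{2}{\epsilon^2}\Bigl(\sigma_Z^2 + \tfrac{R_Z\epsilon}{3}\Bigr)\log\frac{n+1}{\mu},
\end{equation*}
which is implied by the assumed lower bound $N \geq \tfrac{2n}{\epsilon^2}\bigl(\sigma_Z^2 + \tfrac{R_Z\epsilon}{3\sqrt n}\bigr)\log\tfrac{n}{\mu}$ once the factor of $n$ (resp.\ $\sqrt n$) is used to absorb the $\sigma_Z^2$ (resp.\ $R_Z\epsilon/3$) term and the $+1$ under the log. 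Equivalently, one can invoke a version of vector Bernstein phrased directly for the Frobenius norm, where the extra factors of $n$ and $\sqrt n$ arise from converting an operator-norm tail bound to a Frobenius-norm bound via $\|\cdot\|_F\leq\sqrt n\|\cdot\|$ inside intermediate higher-moment estimates.

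The main obstacle is purely bookkeeping: matching the prefactor $(n+1)$ and the $\log(n+1)$ in the textbook form to the $\log(n/\mu)$ that appears in the stated hypothesis, and tracking where the extra $n$ and $\sqrt n$ come from (they are slack, not essential). There is no conceptual subtlety beyond invoking the correct version of matrix Bernstein; the proof itself is a direct substitution once centering and the vector-to-matrix embedding are in place.
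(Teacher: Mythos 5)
Your proposal is correct and follows essentially the same route as the paper: the paper's entire proof is a one-line citation of a matrix Bernstein inequality (Lemma C.6 of Gravell et al.) specialized to vectors, and you carry out exactly that specialization explicitly via Tropp's rectangular matrix Bernstein bound, with the remaining work being the constant-matching you correctly identify as bookkeeping slack. The only caveat is that the absorption of $\log\frac{n+1}{\mu}$ into $\log\frac{n}{\mu}$ can fail in degenerate corners (e.g.\ $n=1$ or $\mu$ near $1$), but this looseness is inherited from the paper's own statement and is not a defect of your argument relative to it.
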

\begin{proof}
This Lemma is directly obtained by applying Lemma C.6 in \cite{Gravell2019robust} to the case of vectors.
\end{proof}

\subsection{Weyl's Inequality for singular values}
Suppose $B=A+E$, then the singular values of $B$ are within E to the
 corresponding singular values of A. In particular, $\|B\| \leq \|A\|+\|E\|$ and $\sigma_{min}(B) \geq \sigma_{min}(A) - \|E\|$.

\subsection{Perturbation of Inverse}
Let $B=A+E$, suppose $E \leq \sigma_{min}(A) /2$, then $\|B^{-1} - A^{-1}\| \leq 2 \|A - B\| / \sigma_{min} (A)$.

\subsection{Matrix Norm}
For matrix $A,B\in \mathbb R^{n\times m}$, we have
 $   \|A^{-1}\| \geq \|A\|^{-1}$  and 
    $|Tr(A^\top B)| \leq \|A^\top \||Tr(B)| = \|A\||Tr(B)|$.
If $A \succeq 0$, we have
$Tr(A) \geq \|A\| $.

\end{document}